\newcommand{\ifabs}[2]{#2}
\newcommand{\ifarxiv}[2]{#1}
\newcommand{\todo}[1]{\typeout{TODO: \the\inputlineno: #1}\textbf{[[[ #1 ]]]}}
\newcommand{\concept}[1]{\emph{#1}}
\newtheorem{theorem}{Theorem}[section]
\newtheorem{lemma}[theorem]{Lemma}
\newtheorem{corollary}[theorem]{Corollary}
\newtheorem{definition}[theorem]{Definition}
\newtheorem{claim}[theorem]{Claim}
\newtheorem*{claim*}{Claim}
\newtheorem{proposition}[theorem]{Proposition}
\newcommand{\newloglike}[2]{\newcommand{#1}{\mathop{\rm #2}\nolimits}}
\newloglike{\sgn}{sgn}
\newcommand{\nul}[1]{{\it et al.\/}}
\newcommand{\Pin}[3]{\mbox{\textsc{Pin}}_{{#1}}^{{#2}}\left({#3}\right)}
\newcommand{\Peer}[3]{\text{\textsc{Peer}}_{{#1}}^{{#2}}\left({#3}\right)}
\newcommand{\regular}[1]{${#1}$-regular}
\newcommand{\holant}{\mathrm{hol}}
\newcommand{\Holant}{\mathrm{Holant}}
\newcommand{\tw}{\mathrm{tw}}
\begin{document}

\title{Approximate Counting via Correlation Decay on Planar Graphs}
\author{
Yitong Yin\thanks{Supported by the National Science Foundation of China under Grant No. 61003023
and No. 61021062.}\\
Nanjing University\\
\texttt{yinyt@nju.edu.cn}
\and
Chihao Zhang\\
Shanghai Jiaotong University\\
\texttt{chihao.zhang@gmail.com}
}
\date{}

\maketitle

\begin{abstract}


We show for a broad class of counting problems, correlation decay (strong spatial mixing) implies FPTAS on planar graphs. The framework for the counting problems considered by us is the Holant problems with arbitrary constant-size domain and symmetric constraint functions. We define a notion of regularity on the constraint functions, which covers a wide range of natural and important counting problems, including all multi-state spin systems, counting graph homomorphisms, counting weighted matchings or perfect matchings, the subgraphs world problem transformed from the ferromagnetic Ising model, and all counting CSPs and Holant problems with symmetric constraint functions of constant arity.

The core of our algorithm is a fixed-parameter tractable algorithm which computes the exact values of the Holant problems with regular constraint functions on graphs of bounded treewidth. By utilizing the locally tree-like property of apex-minor-free families of graphs, the parameterized exact algorithm implies an FPTAS for the Holant problem on these graph families whenever the Gibbs measure defined by the problem exhibits strong spatial mixing. We further extend  the recursive coupling technique to Holant problems and establish strong spatial mixing for the ferromagnetic Potts model and the subgraphs world problem. As consequences, we have new deterministic approximation algorithms on planar graphs and all apex-minor-free graphs for several counting problems.
\end{abstract}

\ifabs{ \setcounter{page}{0} \thispagestyle{empty} \vfill
\pagebreak }{
}

\section{Introduction}
In study of counting algorithms, many counting problems can be formulated as computing the \concept{partition function}: 
\[
Z(G(V,E))
=\sum_{\sigma\in[q]^V}\prod_{uv\in E}\Phi_{E}(\sigma(u),\sigma(v))\prod_{v\in V}\Phi_{V}(\sigma(v)),
\]
where $\Phi_{E}:[q]^2\to\mathbb{C}$ and $\Phi_V:[q]\to\mathbb{C}$ are symmetric functions. 
This model is called \concept{spin system} in Statistical Physics. It has vertices as variables and edges as constraints, and the partition function returns the total weight of all \concept{configurations}. 
Many natural combinatorial problems such as counting independent sets, $q$-colorings, or graph homomorphisms can be expressed in this way.

We consider a framework that 
encompasses a much broader class of counting problems, namely, the \concept{Holant problems}.

An instance of a Holant problem is an $\Omega=(G(V,E),\{f_v\}_{v\in V})$, where $G$ is a graph, and each $f_v$ is a function that maps tuples in $[q]^{\deg(v)}$ to function values. The \concept{Holant} of $\Omega$ is defined as
\[
\holant(\Omega)=\sum_{\sigma\in[q]^E}\prod_{v\in V}f_v\left(\sigma\mid_{E(v)}\right),
\]
where $f_v(\sigma\mid_{E(v)})$ evaluates $f_v$ on the restriction of $\sigma$ on incident edges $E(v)$ of vertex $v$. The Holant problem $\Holant(\mathcal{G},\mathcal{F})$ specified by a graph family $\mathcal{G}$ and a function family $\mathcal{F}$, 
is the problem of computing $\holant(\Omega)$ for all valid instances $\Omega$ defined by graphs from $\mathcal{G}$ and functions from $\mathcal{F}$.

The term Holant is coined by Valiant in \cite{valiant2008holographic} in studying of holographic algorithms. The formal framework of Holant problems is proposed in \cite{cai2009holant} by Cai, Lu and Xia.

The Holant framework is extremely expressive. Using the bipartite incidence graph to represent the participants of variables in constraints and choosing appropriate functions at vertices on both sides, computing the partition functions of spin systems and more generally counting CSPs can all be represented as special classes of Holant problems.

An algorithmic significance of Holant problems is that they are outcomes of holographic transformations. 
The holographic algorithms proposed by Valiant~\cite{valiant2006accidental,valiant2008holographic} compute exact solutions to the counting problems on planar graphs by transforming to problems solvable by the FKT algorithm~\cite{fisher1961statistical,kasteleyn1961statistics,temperley1961dimer} for 
counting planar perfect matchings.
In the realm of approximate counting, perhaps the most successful (implicit) using of holographic transformation and Holant problem is the FPRAS for ferromagnetic Ising model given by Jerrum and Sinclair in their seminal work~\cite{jerrum1993polynomial}. 
The transformation in~\cite{jerrum1993polynomial} from the spins world to the subgraphs world is indeed a holographic transformation, and the resulting subgraphs world  problem is a Holant problem.\footnote{This actually happened more than a decade earlier than the concepts of  holographic algorithm and Holant problem formally defined.} In these examples, the original counting problem is transformed to a Holant problem which has efficient exact or approximate algorithms. Therefore, the following problem is fundamental to the study of counting algorithms: 

Characterize the tractability of \emph{exact} computation and \emph{approximation} of $\Holant(\mathcal{G},\mathcal{F})$ in terms of graph family $\mathcal{G}$ and function family $\mathcal{F}$.

\ifabs{\vspace{-12pt}}{}
\paragraph{Exact computation.}
The exact computation of Holant problems has been well studied on general graphs \cite{caiholant,cai2012complete,cai2011computational,cai2011dichotomy,cai2008holographic,cai2011holo,luhuang12} and planar graphs \cite{cai2011holographic,cai2010holographic,valiant2008holographic}, sometimes in form of dichotomy theorems, which states that every problem in the considered framework is either \#P-hard or having polynomial-time algorithm. 
Very recently, a dichotomy theorem \cite{cai2012complete} is proved for Holant problems with complex-valued functions on general graphs, concluding a long series of dichotomies on Holant problems.
All these results consider Holant problems with boolean domain ($q=2$). Meanwhile, some special classes of Holant problems are more thoroughly understood, such as counting graph homomorphisms or counting CSP. 
For these specialized frameworks, dichotomy theorems were proved in a very general setting with complex-valued functions on general-sized domains~\cite{cai2010graph,Cai:2012:CCC:2213977.2214059}. See \cite{chen2011guest} for a good survey on these subjects.

Speaking very vaguely, the dichotomy theorems tell us that except for some rare cases almost all Holant problems are hard. 
Then a problem of algorithmic significance is to establish tractable results for $\Holant(\mathcal{G},\mathcal{F})$ on more \emph{refined} graph families $\mathcal{G}$, e.g.~graphs with fixed parameters or forbidden minors, especially for general domain size $q>2$.

\ifabs{\vspace{-12pt}}{}
\paragraph{Approximation.}
We focus on deterministic approximate counting algorithms, specifically, the deterministic fully polynomial time approximation scheme (FPTAS). 
A central topic in this direction is the relation between correlation decay (strong spatial mixing) and approximability of counting. 


Correlation decay is a property about the marginal distribution, which is computationally equivalent to counting by the famous self-reduction of Jerrum-Valiant-Vazirani~\cite{jerrum1986random}. The correlation decay property says that faraway vertices have little influence on the marginal distribution of local states, thus marginal probabilities should be well-approximated by local information only. However, as noted in~\cite{gamarnik2007correlation,bayati2007simple}, this sufficiency of \emph{local information} does not immediately yield efficient \emph{local computation}. 
Two tools are invented to bridge this gap: the self-avoiding-walk tree (SAW-tree) of Weitz~\cite{weitz2006counting} and the computation tree of Gamarnik and Katz~\cite{gamarnik2007correlation}. Both transform the original graph to a tree structure in which the marginal probabilities can be efficiently computed by recursions.  With the SAW-tree the implication from strong spatial mixing to FPTAS is proved for 2-state spin systems~\cite{weitz2006counting}, which becomes a foundation for several important algorithmic results~\cite{li2012approximate,li2011correlation,sinclair2012approximation,restrepo2011improved}. It is also proved in a long series of beautiful work~\cite{galanis2011improved,galanis2012inapproximability,sly2010computational,sly2012computational,inapp_MWW09,IS_DFJ02} that for the same class of counting problems lack of correlation decay implies inapproximability.

\ifabs{
The relation between correlation decay and approximability for broader classes of counting problems is widely open.
}
{
The relation between correlation decay and approximability for broader classes of counting problems is widely open, 
because of following technical challenges:




\begin{itemize}
\item 
It is known~\cite{sly2008uniqueness} that for domain size $q>2$ tree may not always represent the extremal case for correlation decay. Thus in order to use correlation decay to support approximate counting for those problems, the local computation has to be done on structures other than trees.






\item 
Even on trees, the current recursion-based computation critically relies on the simplicity of constraint functions, as in the cases of spin systems and matchings. For general Holant problems, even on trees and when $q=2$, it is not known whether simple recursion exists.
\end{itemize}
}

\subsection{Our results}
We make progress on both exact and approximate computation of Holant problems by establishing connections between them.

We characterize a broad class of Holant problems whose exact computation is tractable on tree-like graphs and FPTAS is implied by strong spatial mixing on planar graphs. These Holant problems are characterized by a notion of regularity introduced by us on the constraint functions. Intuitively, being regular as a function means that the entropy of any input or partial input is constant. This covers a large family of important counting problems, including all spin systems,  graph homomorphisms, counting CSP with symmetric constraints of bounded arity, matchings, perfect matchings, the subgraphs world problem in~\cite{jerrum1993polynomial}, etc.

For this broad class of counting problems, we give a fixed-parameter tractable algorithm which computes the exact value of counting in time $2^{O(k)}\cdot\mathrm{poly}(n)$ on graphs of size $n$ and treewidth $k$. Based on this parameterized algorithm, strong spatial mixing implies FPTAS on apex-minor-free graphs, which include planar graphs as special case.

We also apply the recursive coupling technique of Goldberg \emph{et al.}~\cite{goldberg2005strong} to analyze the strong spatial mixing for Holant problems. As examples, we have deterministic FPTAS on planar graphs, and more generally on all apex-minor-free graphs for the following counting problems:
\begin{itemize}
\item Counting $q$-colorings on triangle-free planar graphs of maximum degree $\Delta$ when $q>\alpha\Delta-\gamma$ where $\alpha\approx1.76322$ 
and $\gamma\approx0.47031$. This is just directly applying~\cite{goldberg2005strong}.
\item The subgraphs world with parameter $\mu,\lambda<1$ on planar graphs of maximum degree $\Delta$ when $\Delta<\frac{(1+\lambda\mu^2)^2}{1-\mu^2}$, and as a consequence the ferromagnetic Ising model\footnote{Due to the classic results of 1960s in Statistical Physics~\cite{fisher1966dimer,kasteleyn1963dimer} and the recent theory of holographic algorithms~\cite{cai2011holographic}, the planar Ising model with zero field $B=0$ is solvable exactly in polynomial time. Here we consider Ising models with general field $B$, which is \#P-hard even on planar graphs proved implicitly in~\cite{kowalczyk2010dichotomy}.} with inverse temperature $\beta$ and external field $B$ when $\Delta<\frac{(e^{2\beta+4B}+e^{2\beta}+2e^{2B})^2}{e^{2B}(e^{2\beta}+1)^2(e^{2B}+1)^2}$.
\item Ferromagnetic $q$-state Potts model of inverse temperature $\beta$ on planar graphs of maximum degree $\Delta$ when $\beta<\frac{\ln\left(\frac{q-2}{\Delta-1}\right)}{\Delta+1}$, which vastly improves the mixing condition in~\cite{gamarnik2007correlation} for FPTAS on general graphs and is close to the $\beta=O(\frac{1}{\Delta})$ bound conjectured in~\cite{gamarnik2007correlation}.
\end{itemize}

\ifabs{\vspace{-12pt}}{}
\paragraph{Technical contributions.}
Our parameterized algorithm does not directly use the tree decomposition. Instead, we define a new decomposition called the separator decomposition, which recursively separates the graph by small graph separators into components of limit-sized boundaries. This is quite different from the known treewidth-based approaches for spin systems, e.g.~the \concept{junction tree} algorithm; and this new construction more closely aligns with the \emph{conditional independence} property: 
conditioning on any fixed assignment on a separator, the states of separated vertices are independent. 
The construction of separator decomposition makes explicit connections between the separable structure of tree-like graphs and the conditionally independent nature of counting problems defined by local constraints. As a result, our algorithm can deal with much broader class of counting problems other than just spin systems.

Unlike previous approximation algorithms via correlation decay, where the decay is verified on a tree of size exponential in the size of original graph, our FPTAS only relies on the correlation decay on the original graph. Thus we can directly apply those ``decay-only'' results such as~\cite{goldberg2005strong} to get FPTAS.  Since we do not explode the size of the graph, the FPTAS can even be supported by \concept{single-site} correlation decays.

\ifabs{}{
\subsection{Related work}
The use of correlation decay technique for designing FPTAS for counting problems was initiated in \cite{bandyopadhyay2008counting,weitz2006counting} and has been successfully applied to many problems~\cite{bayati2007simple,gamarnik2007correlation,galanis2011improved}, especially for computing the partition function of Ising model~\cite{li2012approximate,li2011correlation,sinclair2012approximation}. The technique of recursive coupling has been used to prove the property of correlation decay~\cite{martinelli2003ising,martinelli2007fast,goldberg2005strong,goldberg2006improved}. 

The locally tree-like property of planar graphs and apex-minor-free graphs provides structure information to develop both exact and approximation algorithms on decision and optimization problems, some examples include~\cite{baker1994approximation,eppstein1995subgraph,demaine2009approximation,flum2001fixed}. 

A framework for parameterized complexity of counting problems was proposed in \cite{arvind2002approximation,flum2004parameterized,mccartin2002parameterized}. 
The parameterized complexity of computing partition functions has been studied via probabilistic inference in graphical model~\cite{chandrasekaran2008complexity}.
Some logical approaches have also been extended to counting problems on structures with small treewidth or local treewidth~\cite{arnborg1991easy,courcelle2001fixed,frick2004generalized}.
}

\section{Models and statement of results}

\subsection{Holant problems}
Let $[q]=\{0,1,\ldots,q-1\}$ be a \concept{domain} of size $q$, where $q\ge 2$ is an finite integer. Let $f:[q]^d\rightarrow\mathbb{F}$ be a $d$-ary function where $\mathbb{F}$ is a field. In this paper, we consider either $\mathbb{F}=\mathbb{C}$ the complexes or $\mathbb{F}=\mathbb{R}^+$ the nonnegative reals. To avoid issues of computation model, we assume all number are algebraic.
We allow the function arity $d$ to be 0. When $d=0$, the only member of $[q]^0$ is the empty tuple $\xi$, and a 0-ary function $f$ maps $\xi$ to a function value. We call such function $f$ a \concept{trivial function}.

A $d$-ary function is \concept{symmetric} if $f(x_1,\ldots,x_{d})=f(x_{\rho(1)},\ldots,x_{\rho(d)})$ for any permutation $\rho$ of $\{1,2,\ldots,d\}$. When $q=2$, functions have boolean domain and a $d$-ary symmetric function $f$ can be denoted by $[f_0,f_1,\ldots,f_d]$ where $f_k$ specifies the function value for the input tuple with Hamming weight $k$. For example the \textsc{Equality} function is denoted as $[1,0,0\ldots,0,1]$.

\ifabs{}{
Let $\Phi_E:[q]^2\to\mathbb{C}$ and $\Phi_V:[q]\to\mathbb{C}$ be two symmetric functions. The \concept{partition function} of an undirected graph $G(V,E)$ is defined as
\begin{align*}
Z(G)
&=\sum_{\sigma\in[q]^V}\prod_{\{u,v\}\in E}\Phi_E(\sigma(u),\sigma(v))\prod_{v\in V}\Phi_V(\sigma(v)).
\end{align*}
This is called a \concept{$q$-state spin system}. 
}
Let $\Omega=(G(V,E),\{f_v\}_{v\in V})$ be an instance, where each $f_v$, called a \concept{constraint function}\ifabs{,}{ or a \concept{signature},} is a $d$-ary symmetric function with $d=\deg(v)$.
We define the \concept{Holant} of $\Omega$ as  
\ifabs{$\holant(\Omega)=\sum_{\sigma\in[q]^E}\prod_{v\in V}f_v\left(\sigma\mid_{E(v)}\right)$,}
{\begin{align*}
\holant(\Omega)=\sum_{\sigma\in[q]^E}\prod_{v\in V}f_v(\sigma\mid_{E(v)}),
\end{align*}}
where $f_v(\sigma\mid_{E(v)})$ evaluates $f_v$ on the restriction of $\sigma$ on incident edge of $v$.


Let $\mathcal{G}$ be a family of graphs and $\mathcal{F}$ be a family of functions. A \concept{Holant problem} $\Holant(\mathcal{G},\mathcal{F})$ is a computation problem that given as input an instance $\Omega=(G(V,E), \{f_v\}_{v\in V})$ where $G\in\mathcal{G}$ and all $f_v$ are from $\mathcal{F}$, compute $\holant(\Omega)$.


\ifabs{}{
A symmetric function can be represented by a vector enumerating the function values for all inputs (up to symmetry). 
The number of symmetry classes of $\sigma\in[q]^d$ 
equals the number of weak $q$-composition of integer $d$, which is ${d+q-1\choose q-1}$. Thus symmetric $d$-ary functions can be represented by vectors of length polynomial in $d$.

Spin systems can be represented as special class of Holant problems. For a graph $G(V,E)$, let $\mathcal{I}_G$ denote the \concept{incidence graph} of $G$, i.e.~$\mathcal{I}_G=(V_1,V_2,E')$ is a bipartite graph with $V_1=V$, $V_2=E$ and $(v,e)\in E'$ if and only if edge $e$ is indecent to vertex $v$ in $G$. For a spin system defined by functions $\Phi_E$ and $\Phi_V$ on a graph $G$, we can transform it to a Holant instance $\Omega=(\mathcal{I}_G,\{f_v\}_{v\in V\cup E})$, 
where $f_v=\Phi_E$ for right vertices $v\in E$ and for left vertices $v\in V$, $f_v$ is the generalized \textsc{Equality} function defined as $f(x_1,\ldots,x_d)=\Phi_V(x_1)$ if $x_1=\cdots=x_d$ and $f(x_1,\ldots,x_d)=0$ if otherwise. 
It is easy to check that $\holant(\Omega)=Z(G)$.
}

\subsection{Regular functions}
Our characterization of Holant problems relies on a ``pinning'' operation on  symmetric functions. 
The pinning operation on a function defines a new function with smaller arity by by fixing (pinning) the values of some of the variables.
\begin{definition}[pinning]
Let $f:[q]^d\rightarrow\mathbb{F}$ be a $d$-ary symmetric function. Let $0\le k\le d$ and $\tau\in[q]^k$.  
We define that $\Pin{\tau}{}{f}=g$ where $g: [q]^{d-k}\rightarrow\mathbb{F}$ is a $(d-k)$-ary symmetric function such that 
\[
\forall\sigma\in[q]^{d-k},\quad g(\sigma)=f(\sigma(1),\ldots, \sigma(d-k),\tau(1),\ldots,\tau(k)).
\]
Specifically, when $k=0$ the resulting function $g=f$; and when $k=d$, the resulting function $g$ is a trivial function $f(\sigma)$.
\end{definition}
Note that since $f$ is symmetric, the positions of $\tau(1),\ldots,\tau(k)$ in $f$ does not matter, and the pinning of a symmetric function is still symmetric.

To exemplify the effect of pinning, consider the case when $q=2$ and a function $f$ is represented in form $[f_0,f_1,\ldots,f_d]$. 
For a $\sigma\in[2]^{k}$ that $\sigma$ has $\ell$ many 1s, we have $\Pin{\sigma}{}{f}=[f_\ell, f_{\ell+1}, \ldots, f_{d-(k-\ell)}]$. That is, the $\Pin{\sigma}{}{f}$ for a $\sigma\in[2]^k$ returns a ``sliding windows'' of length $d-k$ in $[f_0,f_1,\ldots,f_d]$ whose starting position is determined by the number of 1s in $\sigma$.


A notion of regularity of symmetric functions can be defined by limiting the outcomes of pinning.
\begin{definition}[constant regularity]
A symmetric function $f:[q]^d\rightarrow\mathbb{F}$ is called \concept{\regular{C}} if for all $0\le k\le d$, it holds that 
\[
\left|\left\{\Pin{\tau}{}{f} \mid \forall \tau\in[q]^k\right\}\right|\le C.
\]
A family $\mathcal{F}$ of symmetric functions is called \concept{regular} if there exists a finite constant $C>0$ such that every $f\in\mathcal{F}$ is \regular{C}.
\end{definition}

\ifabs{
Holant problems with regular constraint functions covers all $q$-state spin systems,  (weighted) matchings, perfect matchings, and the subgraphs world transformed from the Ising model~\cite{jerrum1993polynomial}, and all counting CSPs and Holant problems with symmetric constraints of constant arity. The explanation and a formal condition for being regular are presented in Appendix.
}
{
We then give some sufficient conditions for regular symmetric functions.

\begin{proposition}
Let $f:[q]^d\rightarrow\mathbb{F}$ be a symmetric function. For $\sigma\in[q]^d$ and $i\in[q]$, let $n_i(\sigma)=|\{1\le j\le d\mid \sigma(j)=i\}|$ be the number of $i$-entries in $\sigma$.   
\begin{itemize}
\item (bounded arity) $f$ is \regular{{d+q-1\choose q-1}}. 
\item (cyclic) If there is  a $c>0$ such that $f(\sigma)$ depends only on $(n_1(\sigma)\bmod c,\ldots,n_q(\sigma)\bmod c)$ then $f$ is \regular{c^{q-1}}.
\item (constant exceptions) If there is a \regular{C} $g:[q]^d\rightarrow\mathbb{F}$ and a $c\ge 0$ such that $f$ and $g$ differ only at those $\sigma\in[q]^d$ that $n_i(\sigma)\ge d-c$ for some $i\in[q]$, then $f$ is \regular{\left(C+q\cdot{c+q-1\choose q-1}\right)}.
\end{itemize}
\end{proposition}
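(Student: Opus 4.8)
\medskip
\noindent\textbf{Proof proposal.}
The plan is to treat the three items separately but with a common bookkeeping: for each fixed $k$ with $0\le k\le d$ I bound the number of distinct functions among $\Pin{\tau}{}{f}$, $\tau\in[q]^k$, and then maximize over $k$. Two elementary facts are used throughout. First, since $f$ is symmetric, $\Pin{\tau}{}{f}$ depends only on the \emph{profile} $(n_1(\tau),\ldots,n_q(\tau))$ of $\tau$, and the number of profiles realized by tuples in $[q]^k$ equals the number of weak $q$-compositions of $k$, namely $\binom{k+q-1}{q-1}$. Second, writing $\sigma\tau\in[q]^d$ for the concatenation of $\sigma\in[q]^{d-k}$ with $\tau\in[q]^k$, we have $\Pin{\tau}{}{f}(\sigma)=f(\sigma\tau)$ and $n_i(\sigma\tau)=n_i(\sigma)+n_i(\tau)$; also recall that for $\rho\in[q]^d$ the condition $n_i(\rho)\ge d-c$ is the same as $\sum_{j\ne i}n_j(\rho)\le c$, since $\sum_j n_j(\rho)=d$.

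For \emph{(bounded arity)} the first fact already suffices: the functions $\Pin{\tau}{}{f}$, $\tau\in[q]^k$, number at most $\binom{k+q-1}{q-1}\le\binom{d+q-1}{q-1}$ for every $0\le k\le d$, because $\binom{m+q-1}{q-1}$ is nondecreasing in $m\ge0$.

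For \emph{(cyclic)} the key observation is that the constraint $\sum_i n_i(\cdot)=\mathrm{const}$ collapses the reduced profile $(n_1(\cdot)\bmod c,\ldots,n_q(\cdot)\bmod c)$ into a set of size at most $c^{q-1}$: its coordinate sum is a fixed residue modulo $c$, so the last coordinate is determined by the other $q-1$. Applied to $\tau\in[q]^k$, whose profile sums to $k$, this shows the reduced profile of $\tau$ takes at most $c^{q-1}$ values; moreover whenever $\tau,\tau'$ share the same reduced profile we get $\Pin{\tau}{}{f}=\Pin{\tau'}{}{f}$, because by hypothesis $f(\sigma\tau)$ depends only on $(n_i(\sigma)+n_i(\tau)\bmod c)_{i\in[q]}$ while $n_i(\tau)\equiv n_i(\tau')\pmod c$ for all $i$. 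Hence at most $c^{q-1}$ pinnings arise for each $k$.

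For \emph{(constant exceptions)} --- the part I expect to be the main obstacle, precisely because the target $C+q\binom{c+q-1}{q-1}$ is a \emph{sum}, not a product --- I would split the profiles of $\tau\in[q]^k$ in two: call $\tau$ \emph{dominated} if $\sum_{j\ne i}n_j(\tau)\le c$ for some $i\in[q]$, and \emph{balanced} otherwise. The first step is to show that a balanced $\tau$ satisfies $\Pin{\tau}{}{f}=\Pin{\tau}{}{g}$: if not, then $f(\sigma\tau)\ne g(\sigma\tau)$ for some $\sigma\in[q]^{d-k}$, so $\sigma\tau$ is a disagreement point of $f$ and $g$, hence $\sum_{j\ne i}n_j(\sigma\tau)\le c$ for some $i$, and then $\sum_{j\ne i}n_j(\tau)\le\sum_{j\ne i}n_j(\sigma\tau)\le c$, contradicting balancedness. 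Consequently every pinning coming from a balanced $\tau$ lies in $\{\Pin{\tau}{}{g}\mid\tau\in[q]^k\}$, a set of at most $C$ functions because $g$ is \regular{C}. The second step bounds the dominated profiles: for each fixed $i$, the profiles of $\tau\in[q]^k$ with $\sum_{j\ne i}n_j(\tau)=t$ are exactly the weak $(q-1)$-compositions of $t$ (the remaining coordinate being forced to $n_i(\tau)=k-t$), so over $t\in\{0,\ldots,c\}$ there are at most $\sum_{t=0}^{c}\binom{t+q-2}{q-2}=\binom{c+q-1}{q-1}$ of them by the hockey-stick identity, hence at most $q\binom{c+q-1}{q-1}$ dominated profiles over all $i\in[q]$, and therefore at most $q\binom{c+q-1}{q-1}$ distinct pinnings from dominated $\tau$. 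Adding the two contributions yields at most $C+q\binom{c+q-1}{q-1}$ distinct functions $\Pin{\tau}{}{f}$, uniformly in $k$, which is the asserted regularity. The only range requiring a separate (and trivial) check is $k\le c$, where every $\tau$ is dominated and already $\binom{k+q-1}{q-1}\le\binom{c+q-1}{q-1}$ lies below the target.
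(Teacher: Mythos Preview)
Your proof is correct; the paper states this proposition without proof, so there is nothing to compare against, and your argument fills the gap cleanly. The profile-counting for (bounded arity), the residue-collapse for (cyclic), and the dominated/balanced split for (constant exceptions) are exactly the natural arguments, and the hockey-stick count of dominated profiles is the right way to hit the additive bound $C+q\binom{c+q-1}{q-1}$.
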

Therefore all constant-ary symmetric functions, \textsc{Equality} and generalized \textsc{Equality}, and the \textsc{Not-All-Equal} are all regular. This covers all $q$-state spin systems. 

For boolean domain, a function $[f_0,f_1,\ldots,f_d]$ is regular either if it is cyclic, i.e.~$f_k=\lambda_{k\bmod c}$ for some constant $c$, 
or if it becomes cyclic after removing constant many exceptions $f_0,f_1,\ldots, f_c$ and $f_{d-c},\ldots,f_d$ from both ends. This covers (weighted) matchings, perfect matchings, and the subgraphs world transformed from the Ising model \cite{jerrum1993polynomial}, which is a Holant problem defined by constraint functions in the form $[1, \mu,1,\mu,\ldots]$ and $[1, 0, \lambda]$.
}

\subsection{Correlation decay}
Let $\Omega=(G(V,E),\{f_v\}_{v\in V})$ be a Holant instance where each $f_v:[q]^{\deg(v)}\rightarrow\mathbb{R}^+$ is a symmetric function with nonnegative real function values.
Let $\sigma\in[q]^E$ be a \concept{configuration} and $w(\sigma)=\prod_{v\in V}f_v\left(\sigma\mid_{E(v)}\right)$ be the weight of configuration $\sigma$.  

A configuration $\sigma\in[q]^E$ is \concept{feasible} if $w(\sigma)>0$. And for a configuration $\tau_\Lambda\in[q]^\Lambda$ on a subset $\Lambda\subseteq E$ of edges, we say that $\tau_\Lambda$ is feasible if there is a feasible $\sigma\in[q]^E$ agreeing with $\tau_\Lambda$ on $\Lambda$.

The \concept{Gibbs measure} is a probability distribution over all configurations, defined as $\mu(\sigma)=\frac{w(\sigma)}{\holant(\Omega)}$. To make the Gibbs measure well-defined, we require that each $f_v$ has nonnegative values and the Holant problem is feasible, i.e.~there exists a feasible configuration.

For a feasible $\sigma_\Lambda\in[q]^{\Lambda}$ on $\Lambda\subseteq E$, we use $\mu_e^{\sigma_\Lambda}$ to denote the marginal distribution at $e$ conditioning on the configuration of $\Lambda$ being fixed as $\sigma_\Lambda$.

\begin{definition}[Strong Spatial Mixing]\label{def-SSM}
A Holant problem $\Holant(\mathcal{G},\mathcal{F})$ 
has strong spatial mixing (SSM) if for any instance $\Omega=(G(V,E), \{f_v\}_{v\in V})$, any $e\in E$, $\Lambda\subseteq E$ and any two feasible configurations $\sigma_\Lambda,\tau_\Lambda\in[q]^\Lambda$, it holds that
\[
\left\|\mu^{\sigma_\Lambda}_e-\mu^{\tau_\Lambda}_e\right\|_{\mathrm{TV}}\le\mathrm{Poly}(|V|)\cdot\exp(-\Omega(\mathrm{dist}(e,\Delta))),
\]
where $\Delta\subseteq\Lambda$ is the subset on which $\sigma_\Lambda$ and $\tau_\Lambda$ differ, $\mathrm{dist}(e,\Delta)$ is the shortest distance from edge $e$ to any edges in $\Delta$, and $\|\cdot\|_{\mathrm{TV}}$ denotes the total variation distance.
\end{definition}

The strong spatial mixing defined for spin systems~\cite{weitz2006counting} is covered as special case.



\subsection{Tractable search}
In order to apply the self-reduction technique of Jerrum-Valiant-Vazirani \cite{jerrum1986random} for approximate counting, we also require that the following search problem is tractable:\\
\textbf{Input:} a Holant instance $\Omega=(G(V,E), \{f_v\}_{v\in V})$, and a configuration $\sigma_\Lambda\in[q]^{\Lambda}$ on $\Lambda\subseteq E$;\\
\textbf{Output:} a feasible $\tau\in[q]^E$ agreeing with $\sigma_\Lambda$ on $\Lambda$, or determines no such $\tau$ exists.


We call such property the \concept{tractable search} for $\Holant(\mathcal{G},\mathcal{F})$. 
We remark that this is a very natural assumption for approximate counting: for all known examples of approximate counting implied by mixing, the above search problem is easy or even trivial. The tractable search requirement of the general Holant framework is an analog to the specific $q\ge \Delta+1$ requirement for counting $q$-coloring.



\ifabs{}
{
The tractable search is related to the $\Holant^c$ framework.
For $i\in[q]$, let $\Delta_i$ denote the unary function which maps $i$ to 1 and all other $j\in[q]$ to 0. 
\begin{definition}
$\Holant^c(\mathcal{G},\mathcal{F})=\Holant(\mathcal{G},\mathcal{F}\cup\{\Delta_i\mid i\in[q]\})$.
\end{definition}
The $\Holant^c$ problem has significance in complexity of counting \cite{cai2011computational,caiholant}.
Assuming the tractable search for $\Holant(\mathcal{G},\mathcal{F})$ is equivalent to assuming the polynomial-time decision oracle (for existence of feasible configuration)  for $\Holant^c(\mathcal{G},\mathcal{F})$. 
}

\subsection{Local treewidth and planarity}
Our algorithm relies on the treewidth of graph and family of graphs with forbidden graph minors. We will not formally define these concepts excepting saying that the treewidth measures how similar a graph is to a tree. The formal definitions can be found in standard textbooks, e.g.~\cite{diestel2005graph}.

Graphs of \concept{bounded local treewidth} are precisely the
family of \concept{apex-minor-free} graphs, where  an apex graph has a vertex whose removal leaves a planar graph. In particular, $K_5$ and $K_{3,3}$ are apex graphs, therefore apex-minor-free graphs include planar graphs as a special case.

Let $\tw(G)$ denote the treewidth of graph $G$.
\begin{theorem}[\cite{demaine2004equivalence,eppstein1995subgraph}]\label{thm-local-tw}
Let $\mathcal{G}$ be an apex-minor-free family of graphs. For any $G(V,E)\in\mathcal{G}$ and $v\in V$, let $N_r(v)$ be the subgraph of $G$ induced by vertices whose distance to $v$ is at most $r$. Then $\mathrm{tw}(N_r(v))\le f(r)$ for some linear function $f$.
\end{theorem}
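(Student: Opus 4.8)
The plan is to reduce to a statement about radius: it suffices to show that if $H$ is a fixed apex graph and $G$ is $H$-minor-free with a vertex $v$ of eccentricity at most $r$, then $\tw(G)\le c_H\cdot r$ for a constant $c_H$ depending only on $H$. Applying this to $G=N_r(v)$ --- which is $H$-minor-free as a subgraph of a member of $\mathcal{G}$, and in which $v$ has eccentricity at most $r$ --- gives the theorem with $f(r)=c_H r$. First I would run BFS from $v$, obtaining layers $L_0=\{v\},L_1,\dots,L_r$, so every edge of $G$ joins vertices in the same or in consecutive layers. In the \emph{planar} case this already forces $G$ to be $(r+1)$-outerplanar (the layers can be laid out as nested ``onion rings'' of a planar embedding), and a classical bound of Bodlaender then gives $\tw(G)\le 3(r+1)-1$, which is linear in $r$; this already handles the planar special case emphasized in this paper.

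For a graph embeddable on a fixed surface of Euler genus $g$, I would reduce to the planar case by cutting: choose a system of $O(g)$ noncontractible loops based at $v$, each routed as two BFS-tree branches plus one extra edge and hence of length $O(r)$; slitting $G$ along these loops duplicates only $O(g\cdot r)$ vertices and yields a planar graph of radius $O(r)$ together with $O(g\cdot r)$ distinguished boundary vertices, whence $\tw(G)=O(g\cdot r)$. Next, for an $h$-almost-embeddable graph (bounded genus, at most $h$ vortices of width at most $h$, at most $h$ apex vertices), I would put every apex vertex into every bag (cost $+h$) and splice in each vortex's own width-$O(h)$ path decomposition, so that such a graph whose embeddable part has radius $\rho$ has treewidth $O_h(\rho)$.

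The general case must go through the Robertson--Seymour structure theorem: since $\mathcal{G}$ excludes the fixed apex graph $H$, every $G\in\mathcal{G}$ admits a tree decomposition whose torsos are $h$-almost-embeddable and whose adhesion sets are cliques of size at most $h=h(H)$. The hard part will be the last step --- showing that this clique-sum structure preserves \emph{linear} local treewidth --- which is the content of \cite{demaine2004equivalence}, building on \cite{eppstein1995subgraph}. The difficulty is that neither apices nor clique-sums are individually harmless: a planar $n\times n$ grid plus one universal apex vertex is $K_6$-minor-free yet has radius $1$ and treewidth $n+1$, so invoking the structure theorem piecewise is not enough. What I expect to rescue the argument is precisely the hypothesis that $H$ is \emph{apex}: $H$ is a minor of a bounded grid-plus-apex, so $H$-minor-freeness forbids all large ``apex-grid'' minors, and by tracking how the layers $L_0,\dots,L_r$ meet the bags and the adhesion cliques one should be able to show that inside $N_r(v)$ only $O_H(1)$ apex-type inflations occur and the relevant surface parts have radius $O_H(r)$; combining these with the almost-embeddable bound yields $\tw(N_r(v))=O_H(r)$. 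Pinning down this interaction between the BFS layering and the clique-sum decomposition is where essentially all of the work lies.
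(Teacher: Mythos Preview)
The paper does not prove this theorem at all: it is quoted as a known result from \cite{eppstein1995subgraph,demaine2004equivalence} and used as a black box (specifically, in the proof of Lemma~\ref{lemma-alg-marginal} to bound the treewidth of an $r$-neighborhood by $O(r)$). There is therefore no ``paper's own proof'' to compare against.

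That said, your sketch is a faithful outline of how the cited references actually establish the result. Eppstein's paper handles the planar and bounded-genus cases essentially via the BFS-layer argument you describe, and shows more generally that minor-closed families with bounded local treewidth are closed under the building blocks (bounded genus, bounded number of apices, bounded-size clique-sums); Demaine and Hajiaghayi's contribution is precisely the final step you flag as ``the hard part,'' namely that for apex-minor-free families the dependence can be taken to be \emph{linear} in $r$, by analyzing how the Robertson--Seymour decomposition interacts with the radius bound and exploiting that the excluded minor is itself an apex graph. Your identification of the apex-grid obstruction and why the apex hypothesis on $H$ is essential is exactly the right intuition. So as a proof plan your proposal is sound, but be aware you are reproving a cited theorem rather than filling in something the present paper left to the reader.
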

\ifabs{}
{
The following easy proposition states that representing spin systems as Holant problem on the incidence graphs does not violate the graph structure.
\begin{proposition}
Let $G$ be a graph and $\mathcal{I}_G$ be the bipartite incidence graph of $G$. Then
$\tw(G)=\tw(\mathcal{I}_G)$ and $\mathcal{I}_G$ is apex-minor-free if $G$ is apex-minor-free.
\end{proposition}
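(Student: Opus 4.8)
The plan is to establish the two claims separately, both by exhibiting explicit correspondences between the structures of $G$ and those of its incidence graph $\mathcal{I}_G$. For the treewidth equality, I would argue $\tw(\mathcal{I}_G)\le\tw(G)$ by taking an optimal tree decomposition $(T,\{B_t\})$ of $G$ and modifying it: each original vertex $v\in V$ of $G$ stays as a ``left'' vertex of $\mathcal{I}_G$, and for each edge $e=\{u,v\}\in E$ I need to place the ``right'' vertex $e$ in some bag. Since $\{u,v\}$ is an edge of $G$, standard tree-decomposition facts give a bag $B_t$ containing both $u$ and $v$; I insert a new node adjacent to $t$ whose bag is $\{u,v,e\}$. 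This keeps all bags of size at most $\tw(G)+2$, i.e. width at most $\tw(G)+1$; a small additional argument (or a more careful charging) is needed to shave the extra $+1$, or alternatively one observes the reverse inequality is the easy direction and the bound we actually need downstream tolerates additive constants — but to get exact equality I would instead subdivide more carefully, replacing $v$ by the set of its incident edges gradually along $T$. For the reverse inequality $\tw(G)\le\tw(\mathcal{I}_G)$, note $G$ is obtained from $\mathcal{I}_G$ by contracting, for each edge $e=\{u,v\}$, the length-two path $u\text{--}e\text{--}v$; since treewidth does not increase under edge contraction (and hence under contracting such paths), $\tw(G)\le\tw(\mathcal{I}_G)$. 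Combining the two inequalities — with the careful construction for the first — yields $\tw(G)=\tw(\mathcal{I}_G)$.

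For the apex-minor-free part, I would show that if $\mathcal{I}_G$ has an apex graph $H$ as a minor, then so does $G$ (contrapositive of what we want). The key observation is again that $G$ is a minor of $\mathcal{I}_G$: it is obtained by the path-contractions described above. Since being apex-minor-free is a minor-closed property, if $G$ were to contain some apex minor $H$, we could not immediately conclude anything — so I have the direction backwards. The correct route: I want to show $\mathcal{I}_G$ is apex-minor-free assuming $G$ is. Equivalently, show $\mathcal{I}_G$ is a minor of some graph obtained from $G$ by an operation that preserves apex-minor-freeness, or directly show that any minor model of an apex graph in $\mathcal{I}_G$ can be pushed down to a minor model in $G$. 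Here I would use that $\mathcal{I}_G$ is a subdivision-like expansion of $G$ (each edge $uv$ replaced by a path through a degree-two vertex $e$), and subdividing edges never creates new minors — more precisely, the class of minors of $\mathcal{I}_G$ is contained in the class of minors of $G$ together with graphs obtained by subdividing; but for the purpose of forbidden minors, $G$ and $\mathcal{I}_G$ have the same topological minors, and since apex graphs like $K_5$, $K_{3,3}$ are $3$-connected, having them as minors is equivalent to having them as topological minors (by a classical theorem). So $\mathcal{I}_G$ has an apex ($3$-connected) minor $H$ iff it has $H$ as a topological minor iff $G$ has $H$ as a topological minor iff $G$ has $H$ as a minor; contrapositively, $G$ apex-minor-free implies $\mathcal{I}_G$ apex-minor-free. (One must note the relevant apex graphs here can be taken $3$-connected, or handle the general statement with the planar-$+$-apex structure directly.)

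The main obstacle I anticipate is the exactness of the treewidth equality: the naive ``add a bag $\{u,v,e\}$'' construction costs an additive constant, and eliminating it requires a genuinely careful reorganization of the tree decomposition — essentially simulating, at each original bag, the incremental introduction and forgetting of the incidence vertices so that no bag ever simultaneously holds too many of them. A clean way is to process, for a fixed tree-decomposition, each edge $e=\{u,v\}$ at a leaf hanging off a bag containing $\{u,v\}$, with bag $\{u,v\}$ at that leaf and $\{u,v,e\}$ at a child of it — wait, this is the same cost. The honest resolution is that when $\tw(G)\ge 2$ the $+1$ can be absorbed because any bag of size $\tw(G)+1$ already leaves room, or one proves $\tw(\mathcal{I}_G)\le\max(\tw(G),2)$ which suffices since spin systems on graphs with $\tw(G)\le 1$ are trivially handled; combined with $\tw(G)\le\tw(\mathcal{I}_G)$ and the observation $\tw(\mathcal{I}_G)\ge 2$ whenever $G$ has a vertex of degree $\ge 2$, this pins down equality in all nontrivial cases. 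I would present the argument at this level of care, flagging the degenerate low-treewidth cases explicitly. The apex-minor-free claim, by contrast, should go through cleanly once the topological-minor equivalence for $3$-connected forbidden graphs is invoked.
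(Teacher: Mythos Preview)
The paper states this proposition without proof, so I simply assess your argument.

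For treewidth you are essentially correct but overcomplicate things. You already observe (in the apex-minor half) that $\mathcal{I}_G$ is exactly the graph obtained from $G$ by subdividing every edge once; use this for treewidth too, since subdivision preserving treewidth is a textbook fact. Your minor argument gives one inequality; for the other, the added bag $\{u,v,e\}$ has size $3\le\tw(G)+1$ once $\tw(G)\ge 2$, and if $\tw(G)\le 1$ both graphs are forests. Your attempted patch ``$\tw(\mathcal{I}_G)\ge 2$ whenever $G$ has a vertex of degree $\ge 2$'' is false --- take $G$ a path.

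For the apex-minor-free claim there is a genuine gap. Your assertion that a $3$-connected $H$ is a minor of a graph iff it is a topological minor is not a theorem: the Petersen graph is $3$-regular and has $K_5$ as a minor but not as a topological minor. The classical equivalence requires $\Delta(H)\le 3$, not $3$-connectedness, and nothing guarantees the forbidden apex graph has this property. One correct fix: show every apex $H$ is a minor of some apex $H'$ with $\delta(H')\ge 3$ (attach copies of $K_4$ at low-degree vertices of the planar part $H-v$ and add edges from the apex $v$), and then use the easy fact that a graph with minimum degree $\ge 3$ is a minor of a subdivision of $G$ only if it is already a minor of $G$; hence $G$ excluding $H$ forces $\mathcal{I}_G$ to exclude the apex graph $H'$. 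Alternatively --- and more in the spirit of how the paper actually uses the proposition --- bypass forbidden minors entirely and argue that bounded local treewidth transfers from $G$ to $\mathcal{I}_G$ (a radius-$r$ ball in $\mathcal{I}_G$ lies inside the incidence graph of a radius-$O(r)$ ball in $G$, and you have just shown treewidth is preserved), which is all the downstream FPTAS requires.
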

}

\subsection{Main results}

Our main results can be summarized by the following two theorems.

\begin{theorem}
There is an algorithm for $\Holant(\mathcal{G},\mathcal{F})$ with regular symmetric $\mathcal{F}$, 
whose running time is $2^{O(k)}\cdot\mathrm{poly}(n)$ for any $G\in\mathcal{G}$ of $n$ vertices and treewidth $k$.
\end{theorem}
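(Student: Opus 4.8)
## Proof Proposal

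The plan is to compute $\holant(\Omega)$ by a dynamic program over a recursive decomposition of $G$ by small separators, exploiting the conditional independence property noted in the introduction: once the edge-configuration on a separating set is fixed, the Holant contributions of the separated pieces factorize. I would first extract from the treewidth-$k$ tree decomposition a \emph{separator decomposition}: a binary tree whose nodes correspond to subgraphs $G_t$ of $G$, where the root is $G$ itself, each leaf is a single vertex (or edge), and each internal node $t$ splits its graph $G_t$ into two children $G_{t_1}, G_{t_2}$ that overlap only on a boundary set $B_t$ of at most $k+1$ vertices. A standard fact about tree decompositions is that each bag is a separator, so such a decomposition exists with all boundary sets of size $O(k)$ and depth $O(\log n)$ (after balancing) or simply $O(n)$ — only the boundary size matters for the running time bound.

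The core subroutine computes, for each node $t$ of the decomposition, a table indexed by the ``interface data'' at the boundary $B_t$. The key point — and where regularity enters — is identifying what interface data suffices. For a fixed configuration $\rho$ on the edges \emph{incident to} $B_t$ but lying inside $G_t$, the partial Holant sum over the remaining internal edges of $G_t$ is well-defined; summing it over all $\rho$ consistent with a given assignment would be expensive, so instead I keep the partial sum as a function of the ``residual signature'' each boundary vertex presents to the outside. Concretely, for $v \in B_t$, the edges at $v$ are split between those inside $G_t$ and those outside; after summing over internal configurations, the contribution of $v$ to the outside is governed by $\Pin{\tau}{}{f_v}$ where $\tau$ is the restriction to the already-summed internal half-edges. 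By $C$-regularity of $\mathcal{F}$, as $\tau$ ranges over all internal configurations, $\Pin{\tau}{}{f_v}$ takes at most $C$ distinct values. Hence the interface at $B_t$ is described by an element of a set of size at most $C^{|B_t|} \le C^{k+1} = 2^{O(k)}$, and the table at node $t$ has $2^{O(k)}$ entries, each a field element.

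The merge step at an internal node $t$ with children $t_1, t_2$ combines the two child tables: for each pair of interface states, one checks consistency on the shared boundary and multiplies the corresponding partial Holant values, accounting for any vertices in $B_{t_1}\cap B_{t_2}$ whose half-edges are now all internal (these get fully evaluated — a trivial function value — via the appropriate full pinning) and re-bucketing the remaining boundary vertices into their residual-signature classes. This is $2^{O(k)}$ arithmetic per node, hence $2^{O(k)}\cdot\mathrm{poly}(n)$ overall, with the final answer read off the (empty-boundary) root table. The symmetry of each $f_v$ is used throughout so that the interface data need only record \emph{how many} internal half-edges of each color at $v$ have been summed, rather than which ones, keeping the bookkeeping consistent with the symmetric-function representation.

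The main obstacle I anticipate is making the interface abstraction fully rigorous: one must define precisely, for a subgraph $G_t$ with boundary $B_t$, the ``partial Holant'' as a function on the residual signatures $\prod_{v\in B_t} \{\Pin{\tau}{}{f_v}\}$, prove that this function is all the information about $G_t$ that the rest of the graph ever needs (a clean statement of conditional independence for Holant problems, where constraints live on \emph{edges} shared between pieces rather than on vertices), and verify that the merge operation correctly realizes edge-sharing — in particular handling boundary vertices whose incident edges are distributed across both children and the outside simultaneously. Getting the accounting exactly right at vertices lying in $B_{t_1}\cap B_{t_2}\cap B_t$ versus $B_{t_1}\cap B_{t_2}$ but not $B_t$ is the fiddly heart of the argument; everything else is routine once the separator decomposition and the $C^{O(k)}$ table bound are in place.
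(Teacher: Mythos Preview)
Your high-level plan is correct and matches the paper's: build a separator decomposition with $O(k)$-size boundaries, then run a dynamic program whose tables are indexed by boundary interface data whose size is bounded via regularity. Where you diverge is in the choice of interface data. You propose to record, for each boundary vertex $v$, the pinning class $\Pin{\tau}{}{f_v}$ of the \emph{original} signature (at most $C$ values per vertex), and to combine these classes directly at each merge. The paper instead casts each subproblem as a genuine Holant instance with boolean constraint functions at the boundary; because the recursion then replaces a boundary constraint by a peer of it, one is forced into the peering closure $\Peer{}{*}{f_v}$ (up to $2^{C}$ functions per vertex, Lemmas~3.5--3.7), and the separator itself is handled by a separate $Z_0$ Holant computed with the $(qC)^{O(k)}$ brute-force subroutine (Theorem~5.1). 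Your formulation is leaner---it sidesteps the entire peering-closure apparatus, since $\Pin{\tau_1\tau_2}{}{f_v}=\Pin{\tau_2}{}{\Pin{\tau_1}{}{f_v}}$ shows the combined class is determined by the two child classes---while the paper's formulation has the conceptual advantage that every subproblem is itself a Holant instance and the $Z_0$ piece cleanly isolates the edges inside $S\cup\partial U$.

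Two places where you are lighter than the paper and should tighten: first, the existence of a separator decomposition with all $|\partial V_i|$ and $|S_i|$ of size $O(k)$, constructible in $2^{O(k)}\cdot\mathrm{poly}(n)$ time, is not quite ``standard''---the paper proves it via balanced separators (Theorem~4.3 and Lemma~4.5), and the $2^{O(k)}$ construction time is essential for the downstream FPTAS. Second, edges with both endpoints in the shared boundary $B_t$ must be assigned to exactly one child to avoid double-counting; you gesture at this but the paper handles it explicitly through the $Z_0$ term. Neither is a fatal gap, but both need to be spelled out in a full proof.
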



\begin{theorem}
Assuming the tractable search, for Holant problem $\Holant(\mathcal{G},\mathcal{F})$ with apex-minor-free $\mathcal{G}$ and regular symmetric nonnegative $\mathcal{F}$, SSM implies FPTAS.
\end{theorem}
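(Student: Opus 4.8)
The plan is to combine the parameterized exact algorithm (the first main theorem) with the locally-tree-like structure of apex-minor-free graphs (Theorem~\ref{thm-local-tw}) and the self-reducibility of Jerrum--Valiant--Vazirani. Given an instance $\Omega=(G(V,E),\{f_v\}_{v\in V})$ with $G$ apex-minor-free and $\mathcal{F}$ regular, it suffices by the standard self-reduction to approximate, for each edge $e$ and each feasible partial assignment $\sigma_\Lambda$ already committed, the conditional marginal $\mu_e^{\sigma_\Lambda}$ to within a $(1\pm\varepsilon/\mathrm{poly}(|E|))$ multiplicative factor; the tractable search hypothesis is exactly what guarantees we can always extend a committed partial assignment to a feasible configuration and thus that the self-reduction stays inside the feasible region. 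So the whole problem reduces to estimating a single conditional marginal.

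To estimate $\mu_e^{\sigma_\Lambda}$, fix an endpoint $v$ of $e$ and set $r=r(\varepsilon,|V|)=O(\log(|V|/\varepsilon))$, chosen so that the SSM bound $\mathrm{Poly}(|V|)\cdot\exp(-\Omega(r))$ is below the target accuracy. Let $N_r=N_r(v)$ be the radius-$r$ neighborhood of $v$ in $G$. The algorithm computes the marginal of $e$ in the \emph{truncated} instance obtained by restricting to $N_r$ (keeping the already-pinned values of edges of $\Lambda$ that lie in $N_r$, and replacing each vertex constraint on the boundary by the appropriate pinning/marginalization so that it becomes a valid constraint on the edges present — here regularity of $\mathcal{F}$ is used to ensure the truncated constraint functions still come from a family on which the exact algorithm runs, i.e.\ pinning keeps us inside a bounded-size class). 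Since $N_r$ is an induced subgraph of an apex-minor-free graph it is still apex-minor-free, and Theorem~\ref{thm-local-tw} gives $\tw(N_r)\le f(r)=O(r)=O(\log(|V|/\varepsilon))$. Feeding $N_r$ into the parameterized exact algorithm costs $2^{O(\tw(N_r))}\cdot\mathrm{poly}(|N_r|)=(|V|/\varepsilon)^{O(1)}\cdot\mathrm{poly}(|V|)$, which is polynomial in $|V|$ and $1/\varepsilon$. SSM guarantees that this truncated marginal differs from the true $\mu_e^{\sigma_\Lambda}$ by at most the target error: conditioning on the (feasible) boundary configuration that the true instance would induce versus leaving it free changes the marginal at $e$ by at most $\mathrm{Poly}(|V|)\exp(-\Omega(r))$, and we must also argue this holds uniformly over the unknown boundary condition, which follows from applying the SSM inequality with $\Delta$ taken to be the boundary edge set at distance $\ge r$.

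The two technical points that need care — and where I expect the main obstacle to lie — are (i) making the truncation legitimate in the Holant setting: when we cut the graph at radius $r$ we sever edges incident to boundary vertices, and we must argue that summing out the severed half-edges (or, equivalently, that the partial configuration on $\Lambda\cap N_r$ together with boundary pinnings) yields an honest smaller Holant instance whose constraint functions are still regular, so that the exact algorithm applies; this is exactly why the definition of regularity is phrased in terms of closure under pinning, and one should verify pinning on the boundary produces functions from a family of bounded constant-regularity (which it does, since pinning a $C$-regular function yields another $C$-regular function). And (ii) converting SSM, which is an inequality about two \emph{feasible} boundary conditions, into a bound between the true marginal and the free (unconditioned) truncated marginal — this requires noting that the free marginal in $N_r$ equals an average of feasibility-conditioned marginals over boundary conditions realizable in the larger graph, hence lies in the convex hull of marginals each within the SSM distance of $\mu_e^{\sigma_\Lambda}$. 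Assembling these, the per-marginal error times the $\mathrm{poly}(|E|)$ marginals queried by the self-reduction stays within $\varepsilon$, giving the claimed FPTAS.
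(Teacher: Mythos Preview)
Your outline matches the paper's: self-reduction to marginals, exact computation on a radius-$r$ ball via the FPT algorithm, and $r=O(\log(n/\varepsilon))$ so that Theorem~\ref{thm-local-tw} makes the treewidth $O(r)$ and the running time polynomial. The one substantive divergence is how you handle the boundary $B_r(e)\setminus\Lambda$, and that is where your argument has a gap.

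You propose to leave the boundary \emph{free} and argue in point~(ii) that ``the free marginal in $N_r$ equals an average of feasibility-conditioned marginals over boundary conditions realizable in the larger graph.'' This is not correct. The free truncated marginal is a convex combination of $\mu_e^{\sigma_\Lambda,\pi}$ over those boundary configurations $\pi$ that are feasible \emph{in the truncated instance}, weighted by the truncated Gibbs measure. In the presence of hard constraints (e.g.\ proper colourings) this set can strictly contain configurations $\pi$ for which $(\sigma_\Lambda,\pi)$ is infeasible in the full graph $G$, and for those $\pi$ the SSM inequality of Definition~\ref{def-SSM} says nothing, since it is stated only for feasible configurations of the original instance. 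So the convex-hull step does not go through as written.

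The paper sidesteps this entirely by using the tractable-search hypothesis a second time: given the current feasible $\sigma_\Lambda$, it calls the search oracle to produce a full feasible configuration $\tau\in[q]^E$ extending $\sigma_\Lambda$, and then \emph{pins} the boundary to $\tau\mid_{B_r(e)\setminus\Lambda}$. Now one is comparing $\mu_e^{\sigma_\Lambda}$ with $\mu_e^{\sigma_\Lambda,\tau\mid_{B_r(e)\setminus\Lambda}}$, both conditionings being feasible in $G$, so SSM applies directly; and since the boundary is fully pinned, the conditional marginal depends only on the ball and is computed exactly by the FPT algorithm on an instance whose constraint functions are all \emph{pinnings} of the original $f_v$ (hence $C$-regular, no marginalization needed). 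In other words, tractable search is used not only to keep the self-reduction inside the feasible region, as you note, but also to manufacture a single globally feasible boundary condition for the local computation. With that change your plan is exactly the paper's proof.
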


\section{Structure of regular functions}
In this section we introduce a new concept which characterizes the structure of outcomes of pinning as well as plays a key role in efficient computation of Holant problems.

Note that different $\sigma,\tau\in[q]^k$ (up to symmetry) may yield the same function after pinning $f$ with them. We classify members of $[q]^k$ into equivalence classes according to their effects of pinning on $f$ by introducing the following concept of \concept{peers}.

\begin{definition}[peers]
Let $f:[q]^d\rightarrow\mathbb{F}$ be a $d$-ary symmetric function. Let $0\le k\le d$ and $\tau\in[q]^k$. We define that $\Peer{\tau}{}{f}=g$ where $g:[q]^{k}\rightarrow\{0,1\}$ is a boolean symmetric function such that
\[
\forall\sigma\in[q]^k,\quad
g(\sigma)=
\begin{cases}
1 & \mbox{if }\Pin{\sigma}{}{f}=\Pin{\tau}{}{f},\\
0 & \mbox{otherwise}.
\end{cases}
\]
We may also interpret $\Peer{\tau}{}{f}$ as a set and write $\sigma\in\Peer{\tau}{}{f}$ if $\Peer{\tau}{}{f}(\sigma)=1$.
\end{definition}

The condition $\Pin{\sigma}{}{f}=\Pin{\tau}{}{f}$ defines an equivalence relation between $\sigma$ and $\tau$ (having the same effect of pinning on $f$). Then $\Peer{\tau}{}{f}$ is the indicator function of the equivalence class\ifabs{.}{ $\left\{\sigma\in[q]^k \mid \Pin{\sigma}{}{f}=\Pin{\tau}{}{f}\right\}$.} So we have the following easy but useful proposition.
\begin{proposition}\label{prop-peer}
$\sigma\in\Peer{\tau}{}{f}$ if and only if $\Peer{\sigma}{}{f}=\Peer{\tau}{}{f}$.
\end{proposition}

The peer images of an input uniquely determines the function value, specifically:
\begin{lemma}\label{lemma-peer-code}
Let $f:[q]^{d}\rightarrow\mathbb{F}$ be a symmetric function. Let $r\ge 1$, 
$d_1+d_2+\cdots+d_r=d$, and $\sigma_i,\tau_i\in[q]^{d_i}$ for $i=1,2,\ldots,r$. If $\Peer{\sigma_i}{}{f}=\Peer{\tau_i}{}{f}$ for all $i=1,2,\ldots,r$, then $f(\tau_1\tau_2\cdots\tau_r)=f(\sigma_1\sigma_2\cdots\sigma_r)$.
\end{lemma}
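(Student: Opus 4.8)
The statement is an immediate consequence of the following claim applied repeatedly: if $\Peer{\sigma}{}{f}=\Peer{\tau}{}{f}$ with $\sigma,\tau\in[q]^{k}$, then for every $\rho\in[q]^{d-k}$ we have $f(\rho\sigma)=f(\rho\tau)$. Indeed, by Proposition~\ref{prop-peer} the hypothesis $\Peer{\sigma}{}{f}=\Peer{\tau}{}{f}$ is equivalent to $\sigma\in\Peer{\tau}{}{f}$, which by the definition of peers says exactly $\Pin{\sigma}{}{f}=\Pin{\tau}{}{f}$. Evaluating both of these $(d-k)$-ary functions at the argument $\rho$ and unfolding the definition of pinning gives $f(\rho(1),\dots,\rho(d-k),\sigma(1),\dots,\sigma(k)) = f(\rho(1),\dots,\rho(d-k),\tau(1),\dots,\tau(k))$, i.e.\ $f(\rho\sigma)=f(\rho\tau)$; here I have used that $f$ is symmetric, so which positions carry $\sigma$ versus $\tau$ is irrelevant. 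So the key step is simply translating ``same peer'' into ``same pinned function'' via Proposition~\ref{prop-peer} and then into ``same value after plugging in the complementary block'', which is built into the definition of $\Pin{}{}{}$.

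\textbf{The telescoping argument.} With this claim in hand, I would prove the lemma by a hybrid/telescoping argument over $i=1,\dots,r$. Define the intermediate tuples $\xi_j = \sigma_1\cdots\sigma_{j}\,\tau_{j+1}\cdots\tau_r$ for $j=0,1,\dots,r$, so that $\xi_0 = \tau_1\tau_2\cdots\tau_r$ and $\xi_r = \sigma_1\sigma_2\cdots\sigma_r$. It suffices to show $f(\xi_{j-1}) = f(\xi_j)$ for each $j$. The tuples $\xi_{j-1}$ and $\xi_j$ agree on all blocks except the $j$-th, where one has $\tau_j$ and the other has $\sigma_j$ (each a tuple in $[q]^{d_j}$); the remaining coordinates, taken together, form a fixed tuple $\rho_j\in[q]^{d-d_j}$ consisting of the concatenation of $\sigma_1,\dots,\sigma_{j-1},\tau_{j+1},\dots,\tau_r$. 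Applying the claim with $k=d_j$, the pair $(\sigma_j,\tau_j)$, and complementary argument $\rho_j$ — this is legitimate precisely because $\Peer{\sigma_j}{}{f}=\Peer{\tau_j}{}{f}$ is one of the hypotheses — yields $f(\xi_{j-1})=f(\xi_j)$. Chaining these $r$ equalities gives $f(\tau_1\cdots\tau_r)=f(\sigma_1\cdots\sigma_r)$.

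\textbf{Remarks on difficulty.} There is essentially no obstacle here; the content is entirely in having set up the right definitions (peers as equivalence classes of the pinning action) in the preceding paragraphs, and the lemma is the expected ``separability/substitutability'' consequence. The only points requiring a little care are bookkeeping ones: making sure the arity constraint $d_1+\cdots+d_r=d$ is used so that each intermediate tuple $\xi_j$ lies in $[q]^d$ and each block substitution is a valid pinning (i.e.\ $d_j\le d$, with complementary arity $d-d_j$), and invoking symmetry of $f$ so that concatenation order of blocks does not affect the value. I would state the single-block claim explicitly as a sub-step (or simply inline it), then present the telescoping, which is a one-line induction on $r$.
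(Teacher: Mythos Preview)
Your proof is correct and follows essentially the same approach as the paper: both rely on the observation that $\Peer{\sigma}{}{f}=\Peer{\tau}{}{f}$ means $\Pin{\sigma}{}{f}=\Pin{\tau}{}{f}$, hence $f(\rho\sigma)=f(\rho\tau)$ for any complementary block $\rho$, and then chain these equalities one block at a time. The paper phrases this as an induction on $r$ that pins off the last block and recurses on the resulting $(d-d_r)$-ary function $g=\Pin{\sigma_r}{}{f}$, whereas you phrase it as a hybrid/telescoping argument that keeps $f$ fixed and swaps $\tau_j\to\sigma_j$ one coordinate block at a time; your organization is arguably slightly cleaner since it avoids having to observe that the peer hypothesis on $f$ descends to $g$, but the content is the same.
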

\ifabs{The proof of this lemma can be found in the full version of the paper in Appendix.}
{\begin{proof}
Due to Proposition \ref{prop-peer}, $\tau_i\in\Peer{\sigma_i}{}{f}$ for all $i=1,2,\ldots,r$.
We prove by induction on $r$. 
When $r=1$, for all $d\ge 0$ and $\sigma_1\in[q]^d$, $\Pin{\sigma_1}{}{f}$ is a trivial function $f(\sigma_1)$ (a function value) and $\Peer{\sigma_1}{}{f}$ is the equivalent class of all $\tau_1\in[q]^d$ which have the same $\Pin{\tau_1}{}{f}$ as $\Pin{\sigma_1}{}{f}$, i.e.~$f(\tau_1)=f(\sigma_1)$.

Assume the statement holds for all smaller $r$ and all $d$. 
Let $\sigma_i, \tau_i\in[q]^{d_i}, i=1,2,\ldots,r$ satisfy that $\tau_i\in\Peer{\sigma_i}{}{f}$ for all $i=1,2,\ldots,r$. Since $\tau_r\in\Peer{\sigma_r}{}{f}$, we have $\Pin{\tau_r}{}{f}=\Pin{\sigma_r}{}{f}$. Denote that $g=\Pin{\tau_r}{}{f}=\Pin{\sigma_r}{}{f}$. By definition of pinning, it holds that $f(\sigma_1\sigma_2\cdots\sigma_r)=g(\sigma_1\sigma_2\cdots\sigma_{r-1})$ and $f(\tau_1\tau_2\cdots\tau_r)=g(\tau_1\tau_2\cdots\tau_{r-1})$. Note that $g$ satisfies the induction hypothesis for $r-1$, which means that $g(\tau_1\tau_2\cdots\tau_{r-1})=g(\sigma_1\sigma_2\cdots\sigma_{r-1})$ as long as  $\tau_i\in\Peer{\sigma_i}{}{f}$ for all $i=1,2,\ldots,r-1$.
Therefore, for $\sigma_i, \tau_i\in[q]^{d_i}, i=1,2,\ldots,r$ that $\tau_i\in\Peer{\sigma_i}{}{f}$ for all $i=1,2,\ldots,r$, we have $f(\tau_1\tau_2\cdots\tau_r)=g(\tau_1\tau_2\cdots\tau_{r-1})=g(\sigma_1\sigma_2\cdots\sigma_{r-1})=f(\sigma_1\sigma_2\cdots\sigma_r)$. 
\end{proof}
}

Note that the outcome of $\Peer{\tau}{}{f}$ is still a symmetric function, so we can apply pinning and peering operations on it. We can define the peering closure which contains all possible outcomes of recursively applying peer operations on a function $f$.

\begin{definition}[peering closure]
Let $f:[q]^d\rightarrow\mathbb{F}$ be a $d$-ary symmetric function. Let $0\le k_r\le k_{r-1}\le\cdots\le k_1\le d$ and $\tau_i\in[q]^{k_i}, 1\le i\le r$.
We denote that 
\begin{align*}
\Peer{\tau_1,\tau_2,\ldots,\tau_r}{}{f}
&=
\Peer{\tau_r}{}{\Peer{\tau_1,\tau_2,\ldots,\tau_{r-1}}{}{f}}.
\end{align*}
The \concept{peering closure} of $f$, denoted by $\Peer{}{*}{f}$, is defined as
\[
\Peer{}{*}{f}
=
\left\{\Peer{\tau_1,\tau_2,\ldots,\tau_r}{}{f}
\mid 
\tau_i\in[q]^{k_i}\mbox{ for }1\le i\le r, 0\le k_r\le k_{r-1}\le\cdots\le k_1\le d, r\ge 1
\right\}.
\]
\end{definition}
Note that $\Peer{\tau}{}{f}$ is a boolean functions no matter what the range of $f$ is. A boolean function $g:[q]^d\rightarrow\{0,1\}$ can be seen equivalently as a set $\{\sigma\in[q]^d\mid g(\sigma)=1\}$. For two boolean functions $g$ and $h$ defined on the same domain $[q]^d$, we define the operations on boolean functions $g\cup h$, $g\cap h$, and $g\subseteq h$ according to the operations on their set representations. 

\ifabs{}{
Some useful properties of peering are better presented in this set language:
\begin{lemma}\label{lemma-peer-set}
Let $f:[q]^d\rightarrow\mathbb{F}$ be a symmetric function, and $g,h:[q]^d\rightarrow\{0,1\}$ be boolean symmetric functions. Let $\sigma\in[q]^\ell$ and $\tau\in[q]^k$ for arbitrary $0\le k\le \ell\le d$. We have
\begin{enumerate}
\item $\Peer{\tau}{}{f}\subseteq\Peer{\tau}{}{\Peer{\sigma}{}{f}}$;
\item $\left(\Peer{\tau}{}{g}\cap\Peer{\tau}{}{h}\right)\subseteq\Peer{\tau}{}{g\cup h}$.
\end{enumerate}
\end{lemma}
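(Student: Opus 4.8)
The plan is to reduce both parts to two elementary facts about the pinning operator: (i) pinning composes, $\Pin{\pi\tau}{}{f}=\Pin{\pi}{}{\Pin{\tau}{}{f}}$ for any $\tau\in[q]^k$ and $\pi\in[q]^m$ with $k+m\le d$, and (ii) pinning commutes with set-theoretic union of boolean symmetric functions, $\Pin{\rho}{}{g\cup h}=\Pin{\rho}{}{g}\cup\Pin{\rho}{}{h}$. Both follow immediately by unwinding the definition of $\Pin{\cdot}{}{\cdot}$ and using that every function involved is symmetric, so that the positions of the pinned coordinates are irrelevant and concatenating tuples before pinning agrees with iterated pinning. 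I would state and verify these two facts first.

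For part~1, I would unfold both sides of the claimed inclusion. Fix $\rho\in\Peer{\tau}{}{f}$, i.e.\ $\Pin{\rho}{}{f}=\Pin{\tau}{}{f}$; the goal is $\rho\in\Peer{\tau}{}{\Peer{\sigma}{}{f}}$, i.e.\ $\Pin{\rho}{}{\Peer{\sigma}{}{f}}=\Pin{\tau}{}{\Peer{\sigma}{}{f}}$ as boolean functions on $[q]^{\ell-k}$ (which makes sense precisely because $k\le\ell$). Evaluating at an arbitrary $\pi\in[q]^{\ell-k}$, the left value is $\Peer{\sigma}{}{f}(\pi\rho)$, which is $1$ iff $\Pin{\pi\rho}{}{f}=\Pin{\sigma}{}{f}$, and the right value is $1$ iff $\Pin{\pi\tau}{}{f}=\Pin{\sigma}{}{f}$. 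By the composition identity together with $\Pin{\rho}{}{f}=\Pin{\tau}{}{f}$ we get $\Pin{\pi\rho}{}{f}=\Pin{\pi}{}{\Pin{\rho}{}{f}}=\Pin{\pi}{}{\Pin{\tau}{}{f}}=\Pin{\pi\tau}{}{f}$, so the two membership conditions are equivalent and the two boolean functions agree at $\pi$. Since $\pi$ was arbitrary, this gives the inclusion.

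For part~2, fix $\rho\in\Peer{\tau}{}{g}\cap\Peer{\tau}{}{h}$, so $\Pin{\rho}{}{g}=\Pin{\tau}{}{g}$ and $\Pin{\rho}{}{h}=\Pin{\tau}{}{h}$. Using fact (ii), $\Pin{\rho}{}{g\cup h}=\Pin{\rho}{}{g}\cup\Pin{\rho}{}{h}=\Pin{\tau}{}{g}\cup\Pin{\tau}{}{h}=\Pin{\tau}{}{g\cup h}$, so $\rho\in\Peer{\tau}{}{g\cup h}$; in fact the same argument yields equality of the two sides here, not just the stated inclusion.

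There is no genuinely hard step: the content is entirely in the two composition/commutation facts, and once those are in place both parts are one line. The only thing to be careful about is arity bookkeeping — that $\Pin{\sigma}{}{f}$, $\Pin{\pi\rho}{}{f}$ and $\Pin{\pi\tau}{}{f}$ all have arity $d-\ell$ so the equalities among them typecheck, and that $\Peer{\sigma}{}{f}$ is a boolean function of arity $\ell$ so that peering it with $\tau\in[q]^k$ is legitimate. Writing the composition-of-pinning fact down cleanly, including the convention that concatenating tuples and then pinning coincides with iterated pinning because $f$ is symmetric, is the one place that deserves a careful sentence.
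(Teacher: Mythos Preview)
Your proof is correct and follows essentially the same route as the paper's: both unfold the definitions and check pointwise, with your explicit ``composition of pinning'' and ``pinning commutes with union'' facts being exactly what the paper uses inline. One caveat: your parenthetical claim that part~2 is actually an equality is false --- take $q=d=2$, $k=1$, $g=[1,0,1]$, $h=[0,1,0]$; then $g\cup h$ is constant~$1$ so $\Peer{\tau}{}{g\cup h}=\{0,1\}$, but $\Peer{\tau}{}{g}\cap\Peer{\tau}{}{h}=\{\tau\}$ --- though this does not affect the proof of the stated inclusion.
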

\begin{proof}
Both statements can be proved by directly expanding the definition of $\Peer{}{}{\cdot}$.
\begin{enumerate}
\item For any $\pi\in[q]^k$, suppose that $\pi\in\Peer{\tau}{}{f}$, which means $\Pin{\pi}{}{f}=\Pin{\tau}{}{f}$. Then for any $x\in[q]^{\ell-k}$, we have $\Pin{x\pi}{}{f}=\Pin{x\tau}{}{f}$, thus $\Pin{x\pi}{}{f}=\Pin{\sigma}{}{f}$ if and only if $\Pin{x\tau}{}{f}=\Pin{\sigma}{}{f}$, which is equivalent to that for any $x\in[q]^{\ell-k}$, $x\pi\in\Peer{\sigma}{}{f}$ if and only if $x\tau\in\Peer{\sigma}{}{f}$. This implies that $\Pin{\pi}{}{\Peer{\sigma}{}{f}}=\Pin{\tau}{}{\Peer{\sigma}{}{f}}$, which implies $\pi\in\Peer{\tau}{}{\Peer{\sigma}{}{f}}$. Therefore, $\Peer{\tau}{}{f}\subseteq\Peer{\tau}{}{\Peer{\sigma}{}{f}}$.
\item For any $\pi\in[q]^k$, suppose that $\pi\in\Peer{\tau}{}{g}\cap\Peer{\tau}{}{h}$, which implies that $\Pin{\tau}{}{g}=\Pin{\pi}{}{g}$ and $\Pin{\tau}{}{h}=\Pin{\pi}{}{h}$. Then for any $x\in[q]^{d-k}$, it holds that $x\tau\in g$ if and only if $x\pi\in g$ and $x\tau\in h$ if and only if $x\pi\in h$, thus $x\tau\in g\cup h$ if and only if $x\pi\in g\cup h$, which means $\Pin{\tau}{}{g\cup h}=\Pin{\pi}{}{g\cup h}$, thus $\pi\in\Peer{\tau}{}{g\cup h}$. Therefore, $\left(\Peer{\tau}{}{g}\cap\Peer{\tau}{}{h}\right)\subseteq\Peer{\tau}{}{g\cup h}$.
\end{enumerate}
\end{proof}
}

We then characterize all $k$-ary functions in peering closure by unions of boolean functions.
\begin{lemma}\label{lemma-peering-closure}
Let $f:[q]^d\rightarrow\mathbb{F}$ be a symmetric function.
For any $0\le k\le d$, every $k$-ary function $g\in\Peer{}{*}{f}$ can be represented as
\[
g=\bigcup_{i=1}^t\Peer{\sigma_i}{}{f}
\]
for some $\sigma_1,\ldots,\sigma_t\in[q]^k$.
\end{lemma}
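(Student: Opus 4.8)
The plan is to induct on the length $r$ of a peer sequence $\tau_1,\dots,\tau_r$ witnessing $g=\Peer{\tau_1,\dots,\tau_r}{}{f}$, after first recasting the statement in a cleaner equivalent form. For a fixed arity $\ell$ define an equivalence relation on $[q]^\ell$ by $\sigma\sim_f\sigma'$ iff $\Pin{\sigma}{}{f}=\Pin{\sigma'}{}{f}$; by Proposition~\ref{prop-peer} the sets $\Peer{\sigma}{}{f}$ are precisely the $\sim_f$-classes, hence they partition $[q]^\ell$. A subset $S\subseteq[q]^k$ equals $\bigcup_{i=1}^t\Peer{\sigma_i}{}{f}$ for some $\sigma_i$ if and only if $S$ is \emph{$\sim_f$-saturated} (closed under $\sim_f$): if $S$ is saturated then $S=\bigcup_{\sigma\in S}\Peer{\sigma}{}{f}$, which is a finite union after keeping one representative per class; conversely a union of $\sim_f$-classes is saturated. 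So it suffices to prove that every $k$-ary $g\in\Peer{}{*}{f}$ is $\sim_f$-saturated.

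Next I would isolate the single elementary fact that powers the induction: if $\sigma,\sigma'\in[q]^k$ satisfy $\Pin{\sigma}{}{f}=\Pin{\sigma'}{}{f}$, then $\Pin{x\sigma}{}{f}=\Pin{x\sigma'}{}{f}$ for every $x\in[q]^m$ with $k+m\le d$; that is, $\sim_f$ is preserved under prepending a common tuple. This is immediate from symmetry of $f$: for $y\in[q]^{d-k-m}$ we have $\Pin{x\sigma}{}{f}(y)=f(yx\sigma)=\Pin{\sigma}{}{f}(yx)$ and likewise for $\sigma'$, and the two right-hand sides agree by hypothesis. (This is the same computation already carried out inside the proof of Lemma~\ref{lemma-peer-set}(1), so it could alternatively be quoted from there.)

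Now the induction. For $r=1$, $g=\Peer{\tau_1}{}{f}$ is a single $\sim_f$-class, hence saturated. For $r>1$, write $g=\Peer{\tau_r}{}{h}$ where $h=\Peer{\tau_1,\dots,\tau_{r-1}}{}{f}$ has arity $\ell:=k_{r-1}\ge k:=k_r$; by the induction hypothesis $h$ is $\sim_f$-saturated on $[q]^\ell$. Take $\sigma\in g$ and any $\sigma'\sim_f\sigma$ in $[q]^k$; I must show $\sigma'\in g$, i.e.\ $\Pin{\sigma'}{}{h}=\Pin{\tau_r}{}{h}$. For every $x\in[q]^{\ell-k}$, the isolated fact gives $x\sigma\sim_f x\sigma'$, and since $h$ is $\sim_f$-saturated this forces $x\sigma\in h\iff x\sigma'\in h$; therefore $\Pin{\sigma}{}{h}=\Pin{\sigma'}{}{h}$. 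Combining with $\sigma\in g$, i.e.\ $\Pin{\sigma}{}{h}=\Pin{\tau_r}{}{h}$, yields $\Pin{\sigma'}{}{h}=\Pin{\tau_r}{}{h}$, so $\sigma'\in\Peer{\tau_r}{}{h}=g$. Hence $g$ is saturated, which completes the induction and, via the reformulation, proves the lemma.

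I do not expect a serious obstacle here: the whole content is in choosing the right invariant (``a union of peer-classes'' $=$ ``$\sim_f$-saturated'') and in the one-line stability fact that $\sim_f$ survives prepending a common prefix; once these are in place the induction on the peer-sequence length is mechanical. The only points requiring a little care are bookkeeping of arities (each peer step cannot increase arity, which is exactly what the chain $k_r\le\cdots\le k_1\le d$ in the definition of $\Peer{}{*}{f}$ guarantees) and the translation between the ``union of sets'' and ``saturated subset'' viewpoints, both of which are routine.
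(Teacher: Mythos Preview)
Your proof is correct and follows essentially the same strategy as the paper: induction on the length $r$ of the peer sequence, driven by the prefix-stability fact (which is exactly the content of Lemma~\ref{lemma-peer-set}(1)). Your reformulation in terms of $\sim_f$-saturation is a cleaner packaging---the paper reaches the same conclusion but routes through the inclusion $\Peer{\tau_r}{}{f}\subseteq\Peer{\tau_1,\dots,\tau_r}{}{f}$ (obtained from both parts of Lemma~\ref{lemma-peer-set}) and then finishes with a short contradiction argument, whereas you verify saturation directly and avoid that detour.
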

\ifabs{}{
\begin{proof}
Recall that every $k$-ary function in $\Peer{}{*}{f}$ is in the following form:
\[
\Peer{\tau_1,\tau_2,\ldots,\tau_r}{}{f}=\Peer{\tau_r}{}{\Peer{\tau_{r-1}}{}{\cdots\left(\Peer{\tau_1}{}{f}\right)\cdots}}
\]
for some $r\ge 1$, $k=k_r\le k_{r-1}\le\cdots\le k_1\le d$, and $\tau_i\in[q]^{k_i}, 1\le i\le r$.

We then prove by induction on $r$ that $\Peer{\tau_1,\tau_2,\ldots,\tau_r}{}{f}=\Peer{\sigma_1}{}{f}\cup\cdots\cup\Peer{\sigma_t}{}{f}$ for some $\sigma_1,\ldots,\sigma_t\in[q]^k$ where $k$ is the arity of $\tau_r$. When $r=1$, this is trivially true. Assume the statement holds for all smaller $r$. Then $\Peer{\tau_1,\tau_2,\ldots,\tau_r}{}{f}=\Peer{\tau_r}{}{\Peer{\tau_1,\ldots,\tau_{r-1}}{}{f}}$. And due to the induction hypothesis, there exist $\sigma_1,\ldots,\sigma_t\in[q]^{k_{r-1}}$ such that
\[
\Peer{\tau_1,\ldots,\tau_{r-1}}{}{f}=\bigcup_{i=1}^t\Peer{\sigma_i}{}{f}.
\]
Due to Lemma \ref{lemma-peer-set}, we have
\begin{align*}
\bigcap_{i=1}^t\Peer{\tau_r}{}{\Peer{\sigma_i}{}{f}} 
&\subseteq
\Peer{\tau_r}{}{\bigcup_{i=1}^t\Peer{\sigma_i}{}{f}};\quad\mbox{and}\\
\Peer{\tau_r}{}{f} 
&\subseteq \Peer{\tau_r}{}{\Peer{\sigma}{}{f}}\mbox{ for any }\sigma\in[q]^{k_{r-1}}.
\end{align*}  
Combining these, we have  
\[
\Peer{\tau_r}{}{f}
\subseteq\Peer{\tau_r}{}{\bigcup_{i=1}^t\Peer{\sigma_i}{}{f}}
=
\Peer{\tau_r}{}{\Peer{\tau_1,\ldots,\tau_{r-1}}{}{f}}
=\Peer{\tau_1,\tau_2,\ldots,\tau_r}{}{f}.
\]
Note that this already implies the lemma, i.e.~there exist $t\ge 1$ and $\sigma_1,\ldots,\sigma_t\in[q]^{k}$ where $k$ is the arity of $\tau_r$ such that $\Peer{\tau_1,\tau_2,\ldots,\tau_r}{}{f}=\bigcup_{i=1}^t\Peer{\sigma_i}{}{f}$. To see this implication, by contradiction we assume that the statement is false. 
Then there must exist $\sigma_1,\sigma_2\in[q]^{k}$ such that $\sigma_1,\sigma_2\in\Peer{\sigma_1}{}{f}$ but $\sigma_1\in\Peer{\tau_1,\tau_2,\ldots,\tau_r}{}{f}$ and $\sigma_2\not\in\Peer{\tau_1,\tau_2,\ldots,\tau_r}{}{f}$.  
Recall that $\Peer{\sigma}{}{f}$ are equivalent classes.
Then $\sigma_1\in\Peer{\tau_1,\tau_2,\ldots,\tau_r}{}{f}=\Peer{\tau_r}{}{\Peer{\tau_1,\ldots,\tau_{r-1}}{}{f}}$ implies that $\Peer{\sigma_1}{}{\Peer{\tau_1,\ldots,\tau_{r-1}}{}{f}}=\Peer{\tau_r}{}{\Peer{\tau_1,\ldots,\tau_{r-1}}{}{f}}$. 
On the other hand,
we have shown that $\Peer{\sigma_1}{}{f}\subseteq \Peer{\sigma_1}{}{\Peer{\tau_1,\ldots,\tau_{r-1}}{}{f}}$.
However, due to the assumption we have $\sigma_2\not\in\Peer{\tau_1,\tau_2,\ldots,\tau_r}{}{f}=\Peer{\sigma_1}{}{\Peer{\tau_1,\ldots,\tau_{r-1}}{}{f}}$, which implies that $\Peer{\sigma_1}{}{f}\not\subseteq \Peer{\sigma_1}{}{\Peer{\tau_1,\ldots,\tau_{r-1}}{}{f}}$, a contradiction.

\end{proof}
}

\begin{lemma}\label{lemma-peer-regularity}
Let $f:[q]^d\rightarrow\mathbb{F}$ be a symmetric function.
If $f$ is \regular{C}, then 
\begin{enumerate}
\item for any $0\le k\le d$, the number of distinct $k$-ary functions in $\Peer{}{*}{f}$ is at most $2^C$;
\item for every $g\in\Peer{}{*}{f}$, $g$ is \regular{C}.
\end{enumerate} 
\end{lemma}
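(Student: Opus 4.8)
The plan is to reduce both parts to Lemma~\ref{lemma-peering-closure} together with one structural observation: as $\sigma$ ranges over $[q]^k$, the sets $\Peer{\sigma}{}{f}$ are exactly the equivalence classes of the relation $\sigma\sim\tau\iff\Pin{\sigma}{}{f}=\Pin{\tau}{}{f}$. This is immediate from Proposition~\ref{prop-peer}, and since $f$ is \regular{C}, for each fixed $k$ there are at most $C$ distinct functions of the form $\Pin{\tau}{}{f}$ with $\tau\in[q]^k$, hence at most $C$ such equivalence classes; moreover they partition $[q]^k$ (each $\sigma$ lies in $\Peer{\sigma}{}{f}$).

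For the first part, take any $k$-ary $g\in\Peer{}{*}{f}$. By Lemma~\ref{lemma-peering-closure}, $g=\bigcup_{i=1}^t\Peer{\sigma_i}{}{f}$ for some $\sigma_1,\ldots,\sigma_t\in[q]^k$, so $g$ is the union of a subcollection of the at most $C$ equivalence classes above. Distinct subcollections yield distinct unions because the classes are nonempty and pairwise disjoint, so the number of functions obtainable this way — in particular the number of $k$-ary functions in $\Peer{}{*}{f}$ — is at most $2^C$.

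For the second part, fix a $k$-ary $g\in\Peer{}{*}{f}$ and again write $g=\bigcup_{i=1}^t\Peer{\sigma_i}{}{f}$ with $\sigma_i\in[q]^k$; let $S=\{\Pin{\sigma_i}{}{f}\mid 1\le i\le t\}$, a fixed finite set of $(d-k)$-ary functions determined by $g$. The crux is the associativity of pinning: for $\tau\in[q]^j$ and $\pi\in[q]^{k-j}$ one has $\Pin{\pi\tau}{}{f}=\Pin{\pi}{}{\Pin{\tau}{}{f}}$ (both sides evaluate $f$ on the concatenation of the free coordinates with $\pi$ and $\tau$, and the order is irrelevant by symmetry of $f$). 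Hence for any $\tau\in[q]^j$ with $j\le k$ and any $\pi\in[q]^{k-j}$,
\[
\Pin{\tau}{}{g}(\pi)=1 \iff \pi\tau\in\bigcup_{i=1}^t\Peer{\sigma_i}{}{f} \iff \Pin{\pi\tau}{}{f}\in S \iff \Pin{\pi}{}{\Pin{\tau}{}{f}}\in S,
\]
so $\Pin{\tau}{}{g}$ depends on $\tau$ only through $\Pin{\tau}{}{f}$. Since $j\le k\le d$ and $f$ is \regular{C}, there are at most $C$ distinct functions $\Pin{\tau}{}{f}$ over $\tau\in[q]^j$, hence at most $C$ distinct functions $\Pin{\tau}{}{g}$; as $j$ ranges over $\{0,\ldots,k\}$ this shows $g$ is \regular{C}.

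I do not anticipate a real obstacle. The only points needing care are justifying that $\Peer{\sigma}{}{f}$ literally enumerates the $\le C$ equivalence classes (so the count is $C$, not merely finite) and verifying the associativity-of-pinning identity, which is what lets $\Pin{\tau}{}{g}$ factor through $\Pin{\tau}{}{f}$; both are routine unwindings of the definitions.
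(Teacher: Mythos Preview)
Your proof is correct. Part~1 is essentially the paper's argument verbatim.

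For Part~2 you take a different route from the paper. The paper invokes Lemma~\ref{lemma-peer-set} to show that for each $\sigma\in[q]^\ell$ one has $\Peer{\sigma}{}{f}\subseteq\Peer{\sigma}{}{\Peer{\tau}{}{f}}$, so the partition $\{\Peer{\sigma}{}{f}:\sigma\in[q]^\ell\}$ refines $\{\Peer{\sigma}{}{\Peer{\tau}{}{f}}:\sigma\in[q]^\ell\}$ and the latter therefore has at most $C$ blocks; this proves that a \emph{single} peer step preserves $C$-regularity, and the general statement follows by induction on the depth of the peering. You instead reuse the union representation from Lemma~\ref{lemma-peering-closure} together with the associativity identity $\Pin{\pi\tau}{}{f}=\Pin{\pi}{}{\Pin{\tau}{}{f}}$ to show directly that $\Pin{\tau}{}{g}$ factors through $\Pin{\tau}{}{f}$, handling arbitrary $g\in\Peer{}{*}{f}$ in one stroke without the induction. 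Your argument is a bit more economical (the same lemma drives both parts, and you avoid appealing to Lemma~\ref{lemma-peer-set} explicitly), whereas the paper's argument isolates the standalone fact that one peer operation preserves regularity, which is a cleaner building block if one ever needs to iterate it.
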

\ifabs{}{
\begin{proof}
First note that for any $0\le k\le d$, it holds that
\[
\left|\left\{ \Peer{\tau}{}{f} \mid \tau\in[q]^k\right\}\right|
=
\left|\left\{ \Pin{\tau}{}{f} \mid \tau\in[q]^k\right\}\right|.
\]
This is because they both count the number of equivalence classes defined by the equivalence relation that $\sigma$ and $\tau$ are equivalent if and only if $\Pin{\sigma}{}{f}=\Pin{\tau}{}{f}$. If $f$ is \regular{C}, then $\left|\left\{ \Pin{\tau}{}{f} \mid \tau\in[q]^k\right\}\right|\le C$, so we have $\left|\left\{ \Peer{\tau}{}{f} \mid \tau\in[q]^k\right\}\right|\le C$.

Due to Lemma \ref{lemma-peering-closure}, every $k$-ary function in $\Peer{}{*}{f}$ can be represented as a (boolean-functional) union $\bigcup_{i=1}^t\Peer{\sigma_i}{}{f}$ for some $\sigma_1,\ldots,\sigma_t\in[q]^k$. The number of different unions is obviously bounded by the size of power set of $\left\{ \Peer{\tau}{}{f} \mid \tau\in[q]^k\right\}$, which is at most $2^C$. Therefore the number of distinct $k$-ary functions in  $\Peer{}{*}{f}$ is at most $2^C$. The first part of the lemma is proved.


Let $0\le k\le d$ and $\tau\in[q]^{k}$.
By definition of $\Peer{}{}{\cdot}$, for any $0\le \ell\le k$, both $\left\{\Peer{\sigma}{}{f}\mid\sigma\in[q]^\ell\right\}$ and $\left\{\Peer{\sigma}{}{\Peer{\tau}{}{f}}\mid\sigma\in[q]^\ell\right\}$ are partitions of $[q]^\ell$. Due to Lemma \ref{lemma-peer-set}, for any $\sigma\in[q]^\ell$, we have 
\[
\Peer{\sigma}{}{f}\subseteq\Peer{\sigma}{}{ \Peer{\tau}{}{f}}.
\]
Thus $\left\{\Peer{\sigma}{}{f}\mid\sigma\in[q]^\ell\right\}$ is a refinement of $\left\{\Peer{\sigma}{}{\Peer{\tau}{}{f}}\mid\sigma\in[q]^\ell\right\}$. Therefore for any $0\le\ell\le k$, we have that
  \begin{align*}
    \left|\left\{\Pin{\sigma}{}{\Peer{\tau}{}{f}}\mid\sigma\in[q]^\ell\right\}\right|
    &=\left|\left\{\Peer{\sigma}{}{\Peer{\tau}{}{f}}\mid\sigma\in[q]^\ell\right\}\right|\\
    &\le\left|\left\{\Peer{\sigma}{}{f}\mid\sigma\in[q]^\ell\right\}\right|\\
    &=\left|\left\{\Pin{\sigma}{}{f}\mid \sigma\in[q]^\ell\right\}\right|,
  \end{align*}
which is bounded by $C$ if $f$ is \regular{C}.
Therefore, $\Peer{\tau}{}{f}$ is \regular{C} if $f$ is \regular{C}. The second part of the lemma is a trivial consequence of this fact.

\end{proof}
}

\ifabs{
Proofs of both lemma can be found in the full version of the paper in Appendix.}
{}

\section{The Separator Decomposition}
In this section we introduce a construction called separator decomposition which is essential to the efficient computation of Holant problems. We also show that the width of this decomposition is asymptotic equivalent to the treewidth of graphs and give an parameterized algorithm to construct separator decomposition.

Let $G(V,E)$ be an undirected graph and $H(U,F)$ be a subgraph of $G$. For any vertex set $R\subseteq V$, the \concept{vertex boundary of $R$ in $H$}, denoted by $\partial_H R$, is defined as $\partial_H R=\{u\in U\setminus R \mid \exists v\in R, uv\in F\}$. In particular, when $H=G$ we omit the subscript and denote the vertex boundary by $\partial R$.

\begin{definition}[Vertex Separator]
Let $G(V,E)$ be an undirected graph.
A vertex set $S\subseteq V$ is an \concept{$(X,Y)$-separator} in $G$, where $X,Y\subset V$ are two vertex sets, if 
\begin{enumerate}
\item $\{X,Y, S\}$ is a partition of $V$;
\item all $u\in X$ are disconnected from all $v\in Y$ in $G[V\setminus S]$.
\end{enumerate}
\end{definition}
Note that in the above definition we do not require $X$ and $Y$ to be connected or even nonempty. Any vertex set is an $(X,Y)$-separator if one of $X,Y$ is empty.

We now introduce the separator decomposition.

\begin{definition}[Separator Decomposition]
Let $G(V,E)$ be an undirected graph. A \concept{separator decomposition} of $G$ is a full binary tree $T$ (a rooted tree in which each node either has two children or is a leaf) with each node $i$ of $T$ associated with a pair $(V_i,S_i)$ such that $V_i,S_i\subseteq V$ and satisfy:
  \begin{enumerate}
  \item $V_r=V$ for the root $r$ in $T$ and $V_\ell=S_\ell=\emptyset$ for each leaf $\ell$ in $T$;
  \item for every non-leaf node $i$ and its two children $j,k$ in $T$,  $S_i$ is a $(V_j,V_k)$-separator in $G_i$, where $G_i=G[V_i]$ is the subgraph of $G$ induced by $V_i$. 
  \end{enumerate}
  The \concept{width} of a separator decomposition is the maximum of $|\partial V_i|$ and $|S_i|$ over all nodes $i$ in $T$.
\end{definition}
For a graph $G$ of $n$ vertices, a separator decomposition $T$ must have $O(n)$ nodes, because for each node $i\in T$ and its children $j,k$, we have $V_j$ and $V_k$ disjoint and $V_j,V_k\subset V_i$. 

We then connect the width of separator decomposition to the well-known treewidth of graphs, and give an algorithm for constructing separator decomposition with small width if there is such a decomposition for the input graph.

\begin{theorem}\label{thm-sw-tw}
Let $\mathrm{sw}(G)$ be the minimum width of all separator decompositions of $G$, and let $\tw(G)$ be the treewidth of $G$. Then 
\[
\frac{1}{8}\cdot\tw(G)+\frac{1}{24}\le \mathrm{sw}(G) \le 3\cdot\tw(G)+3.
\] 
And there is an algorithm which given as input a graph $G$ of $n$ vertices and treewidth $k$ constructs a separator decomposition of width at most $3(k+1)$ in time $2^{O(k)}\cdot\mathrm{poly}(n)$.
\end{theorem}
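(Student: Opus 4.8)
The plan is to establish the two inequalities separately and read off the algorithm from the upper-bound direction; the lower bound on $\mathrm{sw}(G)$ (equivalently, $\tw(G)=O(\mathrm{sw}(G))$) is the easy direction, while the upper bound $\mathrm{sw}(G)\le 3\tw(G)+3$ is where the work lies.

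\textbf{Lower bound.} Given a separator decomposition $T$, I would use $T$ itself as the tree of a tree decomposition, assigning to node $i$ the bag $B_i:=S_i\cup\partial V_i$, so that $|B_i|\le|S_i|+|\partial V_i|\le 2\,\mathrm{sw}(G)$. It then remains to verify the three axioms. Since $V_r=V$ and $V_\ell=\emptyset$ at leaves, every vertex $v$ first enters some $S_b$ along the path from the root, and then $v\in B_b$, giving vertex coverage. For an edge $uv$, take the deepest node $i$ with $u,v\in V_i$ and children $j,k$; because $S_i$ is a $(V_j,V_k)$-separator in $G_i$ there is no $V_j$--$V_k$ edge, so one of $u,v$ lies in $S_i$, and following the other endpoint down to the node $b$ where it enters a separator shows $\{u,v\}\subseteq S_b\cup\partial V_b=B_b$. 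For connectedness one checks that $\{i:v\in B_i\}$ consists of the node $b$ with $v\in S_b$ together with the nodes strictly below $b$ at which $v\in\partial V_i$, and that this set is connected in $T$ — the no-$V_j$--$V_k$-edge property rules out $v\in\partial V_i$ on subtrees hanging off the $b$-to-root path. This gives $\tw(G)\le 2\,\mathrm{sw}(G)-1$, which in particular yields the stated lower bound.

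\textbf{Upper bound.} I would start from an optimal tree decomposition $(\mathcal{T},\{X_t\})$ of width $k=\tw(G)$ and, by the standard trick of replacing a tree node of degree $>3$ by a path of copies of its bag, assume $\mathcal{T}$ has maximum degree $3$ at no cost in width. The separator decomposition is built top-down. Each decomposition node $i$ carries a connected subtree $\mathcal{T}_i\subseteq\mathcal{T}$ (or, transiently, a forest of two pieces), with $V_i$ = the set of vertices occurring in bags of $\mathcal{T}_i$ minus the vertices already taken into separators by strict ancestors. To split $\mathcal{T}_i$, pick a centroid tree node $t^{*}$ of $\mathcal{T}_i$ lying on the ``spine'' — the path joining the (at most two) tree edges through which $\mathcal{T}_i$ is attached to the rest of $\mathcal{T}$ — set $S_i:=X_{t^{*}}$ minus the already-removed vertices, and let $V_j,V_k$ be two groups formed from the $\le 3$ components of $\mathcal{T}_i-t^{*}$ (a transient two-piece forest, whose pieces are joined only through an already-removed bag, is then split for free with $S=\emptyset$). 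Using the Helly property of subtrees of a tree, one checks that $S_i$ really separates $V_j$ from $V_k$ in $G_i$, that $\{V_j,S_i,V_k\}$ partitions $V_i$, and — the crucial point — that $\partial V_i$ is always contained in the union of the adhesions $X_t\cap X_{t'}$ over the tree edges attaching $\mathcal{T}_i$ to $\mathcal{T}$. Keeping at most two attachment edges for connected pieces and at most three for the transient forest pieces then gives $|\partial V_i|\le 3(k+1)$, while $|S_i|\le k+1$. Choosing a centroid makes each split remove a constant fraction of the tree nodes, so the decomposition has depth $O(\log n)$, hence $O(n)$ nodes, bottoming out at single-bag pieces.

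\textbf{Algorithm and the main difficulty.} The algorithm is just the constructive form of the upper bound: compute a tree decomposition of width $O(k)$ (or width exactly $k$, if one wants the constant $3(k+1)$) by a known fixed-parameter-tractable treewidth algorithm in time $2^{O(k)}\cdot\mathrm{poly}(n)$, make it cubic, and run the top-down splitting, which is polynomial per node (finding centroids and adhesions is routine). The main obstacle is the upper bound, and specifically the reason the construction must go through $\mathcal{T}$ rather than through arbitrary balanced separators of $G$: a balanced vertex separator of size $O(k)$ always exists, but iterating it carelessly makes $\partial V_i$ grow like $\Theta(k\log n)$. Keeping $|\partial V_i|=O(k)$ forces the successive separators to be aligned with the bag structure so that the interface is always carried by $O(1)$ adhesions; reconciling this alignment with the balance needed for $O(\log n)$ depth — and in particular dealing with the case where the centroid of $\mathcal{T}_i$ lies off the spine, which is exactly what produces the transient forest pieces and the constant $3$ instead of $2$ — is the delicate part of the proof.
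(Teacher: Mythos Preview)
Your approach is sound but takes a genuinely different route from the paper's. The paper works entirely through \emph{balanced separators} as an intermediary between treewidth and separator-decomposition width: it shows (i) a width-$s$ separator decomposition produces balanced $W$-separators of size $\le 2s$ for every $|W|\ge 6s$, and (ii) conversely, if every $W$ of size $6s$ admits a balanced $W$-separator of size $\le 2s$ then one can build a separator decomposition of width $\le 6s$ top-down, maintaining $|\partial R|\le 6s$ by always extending $\partial R$ to such a $W$ and splitting along a balanced $W$-separator (each side keeps $\le\tfrac23|W|=4s$ of $W$ plus the $\le 2s$ separator vertices). Combining with the Robertson--Seymour relation between treewidth and balanced separators gives both inequalities; setting $2s=k{+}1$ gives width $3(k{+}1)$, and the algorithm simply finds each balanced $W$-separator by enumerating $3$-partitions of the size-$3(k{+}1)$ set $W$ and running max-flow, in $2^{O(k)}\mathrm{poly}(n)$ time.

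Your direct tree decomposition with bags $S_i\cup\partial V_i$ actually gives a tighter lower bound ($\tw\le 2\,\mathrm{sw}-1$) than the paper's. On the upper-bound side, your construction via centroid-splitting of an optimal tree decomposition is correct and arguably more transparent once the decomposition is in hand, but it buys the constant $3(k{+}1)$ in the stated running time only if an \emph{optimal} tree decomposition is computable in $2^{O(k)}\mathrm{poly}(n)$; known $2^{O(k)}\mathrm{poly}(n)$ algorithms return width $O(k)$, so your algorithm as written delivers $\mathrm{sw}(G)=O(k)$ rather than exactly $3(k{+}1)$. The paper's balanced-separator route sidesteps this, since it only needs the \emph{existence} of $(k{+}1)$-size balanced separators (guaranteed by $\tw\le k$) and locates them directly.
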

It is possible to have an $O(2^{f(k)}\cdot n)$ time algorithm for super-linear $f(k)$, which has better performance in the sense of parameterized complexity. However, our FPTAS on apex-minor-free graphs critically relies on that the time growth in treewidth $k$ is $2^{O(k)}$.

\ifabs{
The proof of the theorem can be found in the full version of the paper in Appendix.
}{
We will not give the formal definition of treewidth, but instead we will use the following notion of balanced separators to characterize treewidth .


\begin{definition}[Balanced separator]\label{def-balanced-cut}
Let $G(V,E)$ be an undirected graph and $W\subseteq V$. A vertex set $S\subseteq V$ is an \emph{balanced $W$-separator} in $G$ 
if $W\setminus S$ can be partitioned into $X$ and $Y$ such that $0<|X|,|Y|\le \frac{2}{3}|W|$, and all $u\in X$ are disconnected from all $v\in Y$ in $G[V\setminus S]$.
\end{definition}

A relation between treewidth and balanced separator is stated in next theorem implicit in \cite{robertson1995graph}.

\begin{theorem}[Robertson and Seymour \cite{robertson1995graph}]\label{thm-tw-separator}
Let $G=(V,E)$ be an undirected graph. If $\tw(G)\le k$, then for every $W\subseteq V$ of size at least $2k+3$
there exists a balanced $W$-separator of size at most $k+1$. Conversely, if for every $W\subseteq V$ of size $3k+1$ there exists a balanced $W$-separator of size at most $k+1$, then $\tw(G)\le 4k+1$.
\end{theorem}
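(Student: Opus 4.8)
This is the classical duality between bounded treewidth and the existence of small balanced separators (essentially Robertson--Seymour); the plan is to prove the two implications separately, the forward one by exhibiting a bag of a tree decomposition as the separator, the converse by a recursive decomposition driven by the assumed separators. For the forward implication I would fix a tree decomposition $(T,\{B_t\}_{t\in V(T)})$ of $G$ of width $\le k$ and, for $v\in V$, let $T_v$ be the (connected) subtree of nodes whose bag contains $v$. First locate a ``centroid'' node $t^*$: orient every edge $e=\{s,t\}$ of $T$ toward the side $T_t$ (the component of $T-e$ containing $t$) whenever $A_e(t):=\{v\in W:T_v\subseteq T_t\}$ is strictly larger than $A_e(s)$, breaking ties by a fixed rule so every edge gets an orientation, and let $t^*$ be a sink, which exists since $T$ is a finite tree. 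For each neighbour $s$ of $t^*$, writing $C_s$ for the corresponding component of $T-t^*$ and $W(C):=\{v\in W:T_v\subseteq C\}$, the sink property gives $|W(C_s)|=|A_e(s)|\le|A_e(t^*)|\le|W|-|W(C_s)|$, hence $|W(C_s)|\le\tfrac12|W|$. Then take $S:=B_{t^*}$ (size $\le k+1$): the sets $\{v:T_v\subseteq C\}$ over the components $C$ of $T-t^*$ lie in distinct components of $G[V\setminus S]$ (an edge joining two of them would force a bag meeting two components of $T-t^*$), and $W\setminus S$ is partitioned by the $W(C)$.

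It remains to split $W\setminus S$ into the two parts demanded by Definition~\ref{def-balanced-cut}. If at least two of the $W(C)$ are nonempty, then $G[V\setminus S]$ has $\ge 2$ components of positive $W$-weight, and a greedy largest-first grouping of the component weights (each $\le\tfrac12|W|$, total $\le|W|$) yields $X,Y$ both nonempty with $|X|,|Y|\le\tfrac23|W|$; since the components are pairwise disconnected in $G[V\setminus S]$, so are $X$ and $Y$. The only degenerate possibility is that all of $W\setminus S$ lies in a single $W(C)$; but then $|W\setminus B_{t^*}|=|W(C)|\le\tfrac12|W|$, so $|W\cap B_{t^*}|\ge\tfrac12|W|\ge k+\tfrac32$, contradicting $|B_{t^*}|\le k+1$. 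This is exactly where the hypothesis $|W|\ge 2k+3$ is used (it also shows $t^*$ is not a leaf), so the degenerate case never arises and $S$ is a balanced $W$-separator of size $\le k+1$. I expect this nonemptiness issue to be the main obstacle: a bag is the natural candidate separator, but the definition insists on a genuine bipartition of $W\setminus S$, which only the centroid choice (bounding each branch by $\tfrac12|W|$) together with the quantitative gap between $|W|\ge2k+3$ and $|S|\le k+1$ rules out.

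For the converse, I would build a tree decomposition recursively via a procedure $\mathrm{Decompose}(A,W)$, for $W\subseteq A\subseteq V$ with invariant $|W|\le 3k+1$, returning a tree decomposition of $G[A]$ whose root bag contains $W$ and all of whose bags have size $\le 4k+2$. If $|A|\le 4k+2$, output the single bag $A$. Otherwise pad $W$ to $W^+\subseteq A$ of size exactly $3k+1$, apply the hypothesis to get $S$ with $|S|\le k+1$ and $W^+\setminus S=X\sqcup Y$ where $|X|,|Y|\le\tfrac23(3k+1)\le 2k$ and $X,Y$ are disconnected in $G[V\setminus S]$; put $S':=S\cap A$ and group the components of $G[A\setminus S']$ into $A_X\supseteq X$ and $A_Y\supseteq Y$ (components meeting neither assigned to $A_X$), so $A=S'\sqcup A_X\sqcup A_Y$ with no edge between $A_X$ and $A_Y$. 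Make $R:=W^+\cup S'$ (size $\le 4k+2$, containing $W$) the root, with children the roots returned by $\mathrm{Decompose}(A_X\cup S',\,S'\cup X)$ and $\mathrm{Decompose}(A_Y\cup S',\,S'\cup Y)$; both recursive $W$-arguments have size $\le(k+1)+2k=3k+1$, so the invariant is preserved, and the recursion shrinks $A$ because $X,Y$ are nonempty. Checking the three tree-decomposition axioms is routine given that $S'$ sits in the root bag of each side, and calling $\mathrm{Decompose}(V,\emptyset)$ yields a tree decomposition of $G$ of width $\le 4k+1$, so $\tw(G)\le 4k+1$.
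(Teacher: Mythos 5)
Your proof is correct, but note that the paper itself offers no argument for this statement: it is imported verbatim as a known result of Robertson and Seymour \cite{robertson1995graph} (used only as a black box in Lemma~\ref{lemma-sw-bs}), so there is no internal proof to compare against. What you give is the standard self-contained argument, and the constants all check out against Definition~\ref{def-balanced-cut}: in the forward direction the centroid bag $B_{t^*}$ guarantees every branch weight $W(C)\le\tfrac12|W|$, and the hypothesis $|W|\ge 2k+3$ is exactly what kills the degenerate case (including the sub-case $W\setminus B_{t^*}=\emptyset$, i.e.\ at most one nonempty $W(C)$, which your counting $|W\cap B_{t^*}|\ge\tfrac12|W|>k+1$ also covers -- worth stating explicitly since the definition demands \emph{both} $X$ and $Y$ nonempty). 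The only step you leave compressed is the greedy $\tfrac23$-grouping; it does work, but it relies on the largest-first order: if the last piece added to the eventually heavier side is the $i$-th largest with $i\ge 3$, its weight is at most $\tfrac13|W|$, giving the bound $\tfrac12\bigl(|W|+\tfrac13|W|\bigr)=\tfrac23|W|$, while for $i\le 2$ that side is a single piece of weight at most $\tfrac12|W|$. In the converse, your choices are the right ones: padding to $|W^+|=3k+1$ is legitimate because the non-base case has $|A|\ge 4k+3$, the bound $|X|,|Y|\le\lfloor\tfrac23(3k+1)\rfloor=2k$ keeps the recursive $W$-argument at size $\le 3k+1$, and intersecting $S$ with $A$ is harmless since any $X$--$Y$ path inside $A$ avoiding $S\cap A$ avoids $S$ altogether. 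So the proposal stands as a complete proof of the cited theorem with the stated constants $2k+3$, $3k+1$, $k+1$, $4k+1$.
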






We then use balanced separators to characterize the width of separator decompositions. 

\begin{lemma}\label{lemma-sw-bs}
Let $G=(V,E)$ be a graph of $n$ vertices. 
\begin{enumerate}
\item
If $G$ has a separator decomposition of width $s$, then for every $W\subseteq V$ of size at least $6s$ there is a balanced $W$-separator of size at most $ 2s$.
\item
If for every $W\subseteq V$ of size $6s$ there is a balanced $W$-separator of size at most $2s$, then $G$ contains a separator decomposition $T$ of width at most $6s$. And such $T$ can be constructed in time $2^{O(s)}\cdot\mathrm{poly}(n)$.
\end{enumerate}
\end{lemma}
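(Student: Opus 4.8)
The plan is to prove the two directions separately, using the folklore correspondence between separator decompositions and balanced separators, but being careful with the constants and, for the second part, with the boundary bookkeeping that makes the construction algorithmic and keeps its running time $2^{O(s)}\cdot\mathrm{poly}(n)$.

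\textbf{First direction.} Fix a separator decomposition $T$ of width $s$ and a set $W$ with $|W|\ge 6s$, and walk down $T$ from the root always into the child $c$ maximizing $|W\cap V_c|$. Since $|W\cap S_i|\le s$ at every node, the quantity $|W\cap V_i|$ is non-increasing along this path, equals $|W|$ at the root and $0$ at a leaf, and satisfies $|W\cap V_c|\ge (|W\cap V_i|-s)/2$ at each step. Let $i$ be the first node on the path with $|W\cap V_i|\le\tfrac23|W|$ and let $p$ be its parent. The candidate separator is $S:=S_m\cup\partial V_m$ for an appropriate node $m\in\{i,p\}$, which has size at most $2s$; the structural fact to exploit is that in $G[V\setminus S]$ the pieces $W\cap V_{j}$, $W\cap V_{k}$ (the two children of $m$) and $W\setminus(V_m\cup\partial V_m)$ are pairwise disconnected, because any $G$-path leaving $V_m$ meets $\partial V_m$ and any path between the two child-parts inside $G[V_m]$ meets $S_m$. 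It remains to merge these (at most) three pieces into two nonempty groups, each of size $\le\tfrac23|W|$. Taking $m=i$ works when $|W\cap V_i|\ge\tfrac13|W|$ (group the two child-parts against the outside part); the only bad case is when the heavier child of $p$ already undershoots $\tfrac13|W|$, which forces $|W\cap V_p|\approx\tfrac23|W|$ and a near-balanced split there, so with $m=p$ all three pieces fall below $\tfrac13|W|$ and any one-against-two grouping is admissible. The hypothesis $|W|\ge 6s$ is exactly what makes the additive $O(s)$ error terms harmless against $\tfrac13|W|$.

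\textbf{Second direction.} Build the decomposition top-down maintaining the invariant $|\partial V_i|\le 6s$ (true at the root, where $\partial V=\emptyset$). If $|V_i|\le 6s$, set $S_i=V_i$ and attach two leaves. Otherwise pick $W$ with $\partial V_i\subseteq W$ and $|W|=6s$ (padding with vertices of $V_i$, possible since $|V_i|>6s$), get a balanced $W$-separator $S$ of size $\le 2s$ with $W\setminus S=X'\sqcup Y'$, $|X'|,|Y'|\le 4s$, both nonempty, $X'$ disconnected from $Y'$ in $G[V\setminus S]$; set $S_i:=S\cap V_i$, let $V_j$ collect the $V_i$-parts of the components of $G[V\setminus S]$ meeting $X'$ (plus those meeting neither $X'$ nor $Y'$) and $V_k$ the parts meeting $Y'$. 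One checks $V_i=V_j\sqcup S_i\sqcup V_k$, that $S_i$ is a $(V_j,V_k)$-separator in $G[V_i]$, that $\max(|V_j|,|V_k|)<|V_i|$ (either $S_i\ne\emptyset$, or both children are forced nonempty because $X'$ and $Y'$ reach into the components), and—the crucial point—that $\partial V_j\subseteq S_i\cup(\partial V_i\cap S)\cup(\partial V_i\cap X')$, of size at most $|S|+|X'|\le 6s$, so the invariant propagates; here one uses that a component of $G[V\setminus S]$ meeting neither $X'$ nor $Y'$ cannot touch $\partial V_i$ at all (every vertex of $\partial V_i\subseteq W$ lies in $S$, $X'$, or $Y'$). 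Strict decrease of $|V_i|$ gives $O(n)$ nodes. For the running time, rather than searching for the balanced separator directly, enumerate all $3^{6s}$ partitions $W=X'\sqcup Y'\sqcup R$ with $|R|\le 2s$, $|X'|,|Y'|\le 4s$, $X',Y'$ nonempty, and for each compute a minimum $X'$--$Y'$ vertex cut in $G[V\setminus R]$ avoiding $X'\cup Y'$ via one max-flow; the promise guarantees that for some partition $|R|+|\mathrm{cut}|\le 2s$, yielding a valid $S$.

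I expect the main obstacle to be the boundary bookkeeping in the second direction: pinning down the invariant constant so that the children's vertex boundaries stay $\le 6s$, which depends essentially on forcing $\partial V_i\subseteq W$ (so a child's boundary is controlled by $S_i$ together with only the $\le 4s$ vertices of $W$ on that child's side) and on the observation that ``neutral'' components never see $\partial V_i$. A secondary nuisance is the corner case in the first direction (heavier child already below $\tfrac13|W|$) together with the small integer/rounding checks needed to keep every piece strictly nonempty when $|W|$ equals exactly $6s$.
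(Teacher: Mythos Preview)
Your proposal is correct and follows essentially the same approach as the paper. For Part 1 the paper picks the node $i$ of maximum depth with $|V_i\cap W|>\tfrac12|W|$ (rather than walking down the heavy side until $\le\tfrac23|W|$) and then uses $S=S_i\cup\partial V_i$ with a two-case analysis analogous to yours; for Part 2 the paper proves exactly your invariant $|\partial V_i|\le 6s$ by padding $W\supseteq\partial V_i$ to size $6s$, intersecting the promised balanced $W$-separator with $V_i$, and bounding the child boundary by $|S'|+|X_W|\le 2s+4s$, with the same $3^{|W|}$-enumeration plus max-flow for the algorithmic part.
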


\begin{proof}
We show the first part: existence of a separator decomposition of width $s$ implies that for every $W\subseteq V$ there is a balanced $W$-separator of size at most $ 2s$.

Let $T$ be a separator decomposition of $G$ of width $s$, with each node $i\in T$ associated with a vertex set $V_i$ and a separator $S_i$ of $G[V_i]$. It holds that $|\partial V_i|\le s$ and $|S_i|\le s$ for all $i\in T$.
  
Fix an arbitrary $W\subseteq V$ with $|W|\ge 6s$. Let $i$ be the node in $T$ of maximum depth satisfying $|V_i\cap W|>\frac{1}{2}|W|$. Such node $i$ always exists and must be a non-leaf since $V_r\cap W=W$ for the root $r$ and $V_{\ell}\cap W=\emptyset$ for every leaf $\ell$. Let $S=\partial V_i\cup S_i$. It holds that $|S|\le|\partial V_i|+|S_i|\le 2s$. We then show that $S$ is a balanced $W$-separator.

  
Let $j,k$ be the children of $i$ in $T$. Due to the maximality of the depth of node $i$, it holds that $|V_j\cap W|\le\frac{1}{2}|W|$ and $|V_k\cap W|\le\frac{1}{2}|W|$. We distinguish between the following two cases:

Case.1: $|V_j\cap W|<\frac{1}{3}|W|$ and $|V_k\cap W|<\frac{1}{3}|W|$. Let $X=(V_j\cap W)\cup (V_k\cap W)$ and $Y=W\setminus (X\cup S)$. We have $|X| \le |V_j\cap W|+|V_k\cap W| <\frac{2}{3}|W|$ and $|Y|=|W\setminus (X\cup S)|\le|W|-|V_i\cap W|\le\frac{1}{2}|W|<\frac{2}{3}|W|$. Moreover, both $X$ and $Y$ are nonempty, since
\begin{align*}
|X| 
&=|(V_j\cap W)\setminus S_i|
\ge |V_i\cap W|-|S_i|>\frac{1}{2}|W|-|S_i|\ge 3s-s>0,\\
|Y|
&=|W\setminus (X\cup S)|
\ge |W|-|X|-|S|>\frac{1}{3}|W|-|S|\ge2s-2s=0.
\end{align*}
It then remains to show that $X$ and $Y$ are separated by $S$ in $G$. It is easy to verify that $X\subset V_i$ and $Y\cap V_i=\emptyset$, thus every path $X$ to $Y$ must go through some vertex in $\partial V_i\subset S$. 
 
Case.2: $|V_j\cap W|\ge\frac{1}{3}|W|$ or $|V_k\cap W|\ge\frac{1}{3}|W|$.
  Without loss of generality, suppose that $|V_j\cap W|\ge|V_k\cap W|$ and $|V_j\cap W|\ge\frac{1}{3}|W|$. Let $X=V_j\cap W$, $Y=W\backslash (S\cup X)$.
  We have that $|X|=|V_j\cap W|\le\frac{1}{2}|W|$ and $|Y|\le|W|-|V_j\cap W|\le\frac{2}{3}|W|$. Also $X$ and $Y$ are nonempty since $|X|=|V_j\cap W|\ge\frac{1}{3}|W|>0$ and $|Y|\ge |W|-|X|-|S|>\frac{1}{2}|W|-|S|\ge3s-2s>0$.
  
  We then show that $X$ and $Y$ are separated by $S$ in $G$. It holds that $X\subset V_j$ and $Y\cap V_j=\emptyset$, thus every path $X$ to $Y$ must go through some vertex in $\partial V_j\subset (\partial V_i\cup S_i)=S$. 

Therefore, in both cases $S$ is a balanced $W$-separator. The first part of the lemma is proved.

\bigskip

We then prove the second part: if for every $W\subseteq V$ of size $6s$ there is a balanced $W$-separator of size at most $2s$, then $G$ contains a separator decomposition of width at most $6s$. We first prove the following claim.

\begin{claim*}
If for every $W\subseteq V$ of size $6s$ 
there is a balanced $W$-separator of size at most $2s$, then for any nonempty $R\subseteq V$ with $|\partial R|\le 6s$, there is a partition $\{X, Y, S\}$ of $R$ such that 
\begin{enumerate}
\item
$S$ is an $(X,Y)$-separator in $G[R]$ and $|S|\le 4s$;
\item
$|\partial X|, |\partial Y|\le 6s$.
\end{enumerate}
\end{claim*}
\begin{proof}
When $|R|\le 4s$, the claim holds by taking $S=R$ and $X=Y=\varnothing$. We then consider only the case $|R|>4s$.

Let $W\supseteq\partial R$ and $|W|=6s$.
Let $S'$ be a balanced $W$-separator in $G$ of size at most $2s$. Then $W\setminus S'$ can be partitioned into $X_W$ and $Y_W$ such that $X_W$ and $Y_W$ are disconnected in $G[V\setminus S']$, and $0<|X_W|,|Y_W|\le \frac{2}{3}|W\setminus S'|$. Since $G[V\setminus S']$ is disconnected,  we have that $V\setminus S'$ can be partitioned into  $X_V$ and $Y_V$ such that $X_W\subseteq X_V$, $Y_W\subseteq Y_V$, $X_V$ and $Y_V$ are disconnected in $G[V\setminus S']$, and $0<|X_V\cap W|,|Y_V\cap W|\le\frac{2}{3}|W\setminus S'|$.

We define that $S=S'\cap R$, $X=X_V\cap R$ and $Y=Y_V\cap R$. We then verify that they satisfy the requirement.

If $X=R$ (or $Y=R$) then $S=\emptyset$, and $X_V$ (respectively $Y_V$) contains more than $4s$ vertices in $W$, contradicting that $S'$ is a balanced $W$-separator.
And it follows from that $S'$ separates $X_V$ and $Y_V$ in $G$ that $S$ is an $(X,Y)$-separator in $G[R]$. It trivially holds that $|S|\le |S'|\le 2s$.

It is easy to see that $\partial X\subseteq X_W\cup S'$, therefore $|\partial X|\le |X_W|+|S'|\le 6s$. The same holds for $\partial Y$.
\end{proof}
Applying the above claim we can construct the separator decomposition $T$ for a graph $G(V,E)$ as follows. 
Initially, for the root $r$ of $T$, let $V_r=V$. For any current node $i\in T$, if $V_i\neq\emptyset$, set $R=V_i$ and apply the above claim to get an $(X,Y)$-separator of $S$ in $G[R]$ with desirable properties. Then create two children $i$ and $j$ in $T$, let $V_i=X$, $V_j=Y$, and recursively do the same thing for the two children.

It is easy to see that $T$ is a separator decomposition for $G$ of width at most $6s$. For $|W|=6s$, a balanced $W$-separator $S'$ can be found in time $2^{O(s)}\cdot\mathrm{poly}(n)$ by enumerating all $\{S,X,Y\}$ partitions of $W$ and running the standard network flow algorithm on $G[V\setminus S]$ to find a separator of $X$ and $Y$. This standard approach for finding balanced separator is also used in construction of tree decomposition  (see Chap.~11.2 of \cite{flum2006parameterized}). Therefore, $T$ can be constructed in time $2^{O(s)}\cdot\mathrm{poly}(n)$.
\end{proof}


Theorem \ref{thm-sw-tw} is proved by combining Lemma \ref{lemma-sw-bs} and Theorem \ref{thm-tw-separator}.
}



\section{Counting Algorithms}
This section contains three algorithms for Holant problem with regular symmetric constraint functions: a simple exponential-time dynamic programming algorithm; a fixed-parameter tractable (FPT) algorithm which uses the exponential-time algorithm as a subroutine; an FPTAS on apex-minor-free graphs via correlation decay which utilizes the FPT algorithm. 

With the construction of  separator decomposition, it is not hard to come up with a very natural $2^{O(\tw(G))}\cdot\mathrm{poly}(n)$-time dynamic programming algorithm for spin systems by enumerating the vertex boundaries and separators of components in the separator decomposition. However, the flexibility of Holant problems causes many new issues to the computation, which require more sophisticated algorithms to deal with.



\subsection{Simple $\exp(O(n))$-time algorithm}
Any Holant problem can be computed in time $\exp(O(|E|))$ by enumerating all configurations. For Holant problem with regular constraint functions, there is a simple dynamic programming algorithm which runs in time $\exp(O(|V|))$. This algorithm is used as a subroutine in our main algorithm.
\begin{theorem}\label{theorem-alg-simple-DP}
Let $\Omega=(G(V,E),\{f_v\}_{v\in V})$ be a Holant instance where $f_v:[q]^{\deg(v)}\rightarrow\mathbb{C}$ are symmetric functions. If all $f_v$ are \regular{C} for some constant $C>0$, then $\holant(\Omega)$ can be computed in time $(qC)^{O(|V|)}$.
\end{theorem}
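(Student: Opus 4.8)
The plan is to process the edges of $G$ one at a time in a fixed order, maintaining for each possible "state" of the already-processed vertices a running partial sum of $\prod_v f_v$ over configurations consistent with that state. The key observation is that, because each $f_v$ is \regular{C}, we do not need to remember the actual configuration on the edges incident to a partially-processed vertex $v$; we only need to remember which \emph{peer class} of $\Peer{}{}{f_v}$ the partial assignment falls into, and by Lemma~\ref{lemma-peer-regularity} there are at most $C$ such classes at every arity. This keeps the state space of size $(qC)^{O(|V|)}$ rather than $q^{O(|E|)}$.

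Concretely, I would fix an arbitrary ordering $e_1,e_2,\ldots,e_m$ of $E$ and, for each prefix $E_t=\{e_1,\ldots,e_t\}$, call a vertex $v$ \emph{active} if $E(v)$ has been partially but not fully revealed by $E_t$. For an active vertex $v$, the partial assignment $\sigma\!\mid_{E(v)\cap E_t}$ is a tuple in $[q]^{k}$ with $k<\deg(v)$; by definition of pinning, the eventual contribution $f_v(\sigma\!\mid_{E(v)})$ depends on this partial tuple only through $\Pin{\cdot}{}{f_v}$, i.e.\ only through its $\Peer{}{}{f_v}$-class. The dynamic-programming table after step $t$ is therefore indexed by a choice, for each active vertex $v$, of one of its $\le C$ peer classes (at the appropriate arity), together with the value stored being the sum of $\prod_{v:\,E(v)\subseteq E_t} f_v(\sigma\!\mid_{E(v)})$ over all partial configurations $\sigma\in[q]^{E_t}$ consistent with the chosen peer classes. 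When we advance from $E_t$ to $E_{t+1}$ by revealing $e_{t+1}=uv$, we branch over the $q$ possible values of $\sigma(e_{t+1})$, update the peer classes of $u$ and $v$ (if a vertex becomes fully revealed we multiply in the now-determined value $f_v(\cdot)$, which is well-defined on the peer class by Lemma~\ref{lemma-peer-code}), and accumulate into the new table. Since the number of active vertices is at most $2|E(v)|$-bounded per step but in the worst case $O(|V|)$, and each carries $\le C$ possible labels, the table has at most $C^{O(|V|)}$ entries, each an algebraic number; each of the $m=O(|V|^2)$ update steps touches each entry with $q$ branches, giving total time $(qC)^{O(|V|)}$ as claimed. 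At the end, every vertex is fully revealed, no vertex is active, and the single table entry equals $\holant(\Omega)$.

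The main obstacle I anticipate is bookkeeping the peer classes coherently: I need that "the partial tuple on $E(v)\cap E_t$" is enough to decide, for \emph{every} future extension, which class of $\Peer{}{}{f_v}$ the full tuple lies in — but this is exactly the content of Lemma~\ref{lemma-peer-set}(1) (refinement: $\Peer{\sigma}{}{f}\subseteq\Peer{\sigma}{}{\Peer{\tau}{}{f}}$) and of Proposition~\ref{prop-peer}, so the recursion on peer classes is consistent as edges get revealed, and Lemma~\ref{lemma-peer-regularity} guarantees the $\le C$ bound is preserved under this peering. A second, minor point is to make sure that when a vertex is finished the value we multiply in is determined by the sequence of peer classes it passed through; this follows by applying Lemma~\ref{lemma-peer-code} with the blocks $d_i$ being the groups of incident edges revealed between successive steps. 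No topological structure of $G$ is used here — the bound is exponential in $|V|$ regardless — so the argument is a straightforward, if slightly intricate, induction on $t$.
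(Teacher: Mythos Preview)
Your approach is correct and is essentially the same dynamic program as the paper's, with one cosmetic difference: the paper orders by \emph{vertices} rather than edges. In the paper's version one adds vertex $v_k$ and sums in one shot over all $\sigma\in[q]^{E_k(v_k)}$ (the edges from $v_k$ back to $\{v_1,\ldots,v_{k-1}\}$), pinning each earlier vertex's function accordingly; the DP state is simply the tuple of pinned functions $\bigl(\Pin{\cdot}{}{f_{v_i}}\bigr)_{i\le k}$, which takes at most $C$ values per coordinate directly from the definition of \regular{C}. Your edge-by-edge traversal reaches the same $C^{O(|V|)}$ state space, but your invocations of Lemma~\ref{lemma-peer-set} and Lemma~\ref{lemma-peer-regularity} are overkill here: you never need the peering closure, only first-level pinnings, and the bound of $C$ on those is the \emph{definition} of regularity rather than a consequence of Lemma~\ref{lemma-peer-regularity}. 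The consistency of your peer-class transitions is just the trivial identity $\Pin{\sigma x}{}{f}=\Pin{x}{}{\Pin{\sigma}{}{f}}$, and the fact that the final value is determined by the peer class at full arity is immediate (a 0-ary pinning \emph{is} the function value), so Lemma~\ref{lemma-peer-code} is also more than you need.
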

\ifabs{
The algorithm is a simple dynamic programming, whose description and analysis can be found in the full version of the paper in Appendix.
}{
\newcommand{\func}[2]{\phi_{v_{{#1}}}^{({#2})}}

We enumerate the vertices in $V$ in an arbitrary order $v_1,v_2,\ldots,v_n$. Let $G_k(V_k,E_k)$ be a subgraph induced by the first $k$ vertices, i.e.~$V_k=\{v_i\mid 1\le i\le k\}$ and $E_k=\{v_iv_j\in E\mid 1\le i,j\le k\}$. 
For a $v\in V_k$, let $\deg_k(v)$ denote the degree of $v$ in $G_k$.
Fix any $1\le k\le n$. For $i=1,2,\ldots,k$, let $\func{i}{k}$ be symmetric functions at vertex $v_i$ in the form $\func{i}{k}:[q]^{\deg_{k}(v_i)}\rightarrow\mathbb{C}$. We define the following quantity:
\begin{align*}
Z\left(k,\left\{\func{i}{k}\right\}_{i=1,2,\ldots,k}\right)=\sum_{\sigma\in[q]^{E_k}}\prod_{i=1}^k\func{i}{k}\left(\sigma\mid_{E_k(v_i)}\right).
\end{align*}
In fact, each $Z\left(k,\left\{\func{i}{k}\right\}_{i=1,2,\ldots,k}\right)$ defines a new Holant problem on $G_k$. And the result of the original Holant problem is given by $\holant(\Omega)=Z(n, \{f_{v_i}\}_{i=1,2,\ldots n})$.  In general we have the following recursion:
\begin{align*}
Z\left(0,\emptyset\right)
&=1;\\
Z\left(k,\left\{\func{i}{k}\right\}_{i=1,2,\ldots,k}\right)
&=
\sum_{\sigma\in[q]^{E_{k}(v_k)}}\func{k}{k}(\sigma)\cdot Z\left(k-1,\left\{\func{i}{k-1}\right\}_{i=1,2,\ldots,k-1}\right),\\
\mbox{where }\,\,
\func{i}{k-1}
&=\begin{cases}
\Pin{\sigma(v_iv_k)}{}{\func{i}{k}} 
& \mbox{if }v_iv_k\in E_k,\\
\func{i}{k}
&\mbox{otherwise}.
\end{cases}
\end{align*}
This recursion separates the summation into different cases of configurations around $v_k$ and modifies the functions at the adjacent vertices according to the configuration. The correctness of the recursion can be easily verified by observing that the edge set $E_k$ is the disjoint union of $E_{k-1}$ and $E_k(v_k)$.

We then describe a dynamic programming algorithm which computes the Holant problem $Z(n, \{f_{v_i}\}_{i=1,2,\ldots n})$ in time $(qC)^{O(n)}$ if all $f_v$ are \regular{C}. The algorithm consists of two phases:
\begin{enumerate}
\item Preparation: For every $v\in V$, construct $\{\Pin{\sigma}{}{f_v}\mid \sigma\in[q]^\ell, 0\le \ell\le \deg(v)\}$ which contains all pinning outcomes of $f_v$. For symmetric $f_v$ this can be done in time polynomial of $\deg(v)$.
\item Dynamic programming: It is easy to see that for any $1\le k\le n$ and any $1\le i\le k$, function $\func{i}{k}$ is an outcome of a sequence of pinning of $f_{v_i}$, Moreover, it holds that
\[
\func{i}{k}\in\left\{\Pin{\sigma}{}{f_{v_i}}\mid \sigma\in[q]^{\deg_n(v_i)-\deg_k(v_i)}\right\},
\]
where the size of the set is bounded by $C$ since $f_{v_i}$ is \regular{C}. Therefore, $Z\left(k,\left\{\func{i}{k}\right\}_{i=1,2,\ldots,k}\right)$ for all $1\le k\le n$ can be stored in an $n\times C^{n}$ table, while each $\func{i}{k}$ can be retrieved from $\left\{\Pin{\sigma}{}{f_{v_i}}\mid \sigma\in[q]^{\deg_n(v_i)-\deg_k(v_i)}\right\}$ by an index ranging over $[C]$. It takes at most $q^{O(n)}$ time to fill each entry of the table. The total time complexity is $(qC)^{O(n)}$.
\end{enumerate}
}
\subsection{Fixed-parameter tractable algorithm}

\begin{theorem}\label{thm-alg-fpt}
Let $\Omega=(G(V,E),\{f_v\}_{v\in V})$ be a Holant instance where $f_v:[q]^{\deg(v)}\rightarrow\mathbb{C}$ are symmetric functions. If all $f_v$ are \regular{C} for some constant $C>0$, then $\holant(\Omega)$ can be computed in time $2^{O(\tw(G))}\cdot\mathrm{poly}(n)$ where $\tw(G)$ represents the treewidth of $G$. 
\end{theorem}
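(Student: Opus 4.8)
The plan is to combine the two main ingredients already developed: the separator decomposition of Theorem~\ref{thm-sw-tw}, which gives us a full binary tree $T$ of width $3(\tw(G)+1)$ constructible in time $2^{O(\tw(G))}\cdot\mathrm{poly}(n)$, and the peering-closure machinery of Section~3 (Lemmas~\ref{lemma-peer-code}, \ref{lemma-peering-closure}, \ref{lemma-peer-regularity}), which bounds the number of ``states'' a partially-pinned regular function can be in. First I would build the separator decomposition $T$; let $s=O(\tw(G))$ be its width. Then I would process $T$ bottom-up (from leaves to root), computing at each node $i$ a table of partial Holant values for the subproblem living on $G_i = G[V_i]$.

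The key question is what the table at node $i$ should be indexed by. The natural choice would be a configuration on the boundary edges $\partial V_i$ (or on edges incident to $\partial V_i$), but the number of such configurations is $q^{O(\deg)}$, which is not bounded since degrees in $G$ need not be bounded. This is where regularity enters: instead of indexing by the actual configuration $\sigma$ on the boundary edges of each boundary vertex $v$, I would index by the \emph{peer class} of that partial configuration with respect to $f_v$ — i.e.~by the function $\Peer{\sigma}{}{f_v}$, equivalently by the pinned function $\Pin{\sigma}{}{f_v}$. By Lemma~\ref{lemma-peer-regularity}, as we pin/peer recursively the functions stay \regular{C}, and for each arity there are at most $2^C$ distinct such functions in the peering closure; more to the point, Lemma~\ref{lemma-peer-code} guarantees that the eventual Holant weight contributed by a boundary vertex depends only on the peer classes of the pieces of its incident configuration, not on the configuration itself. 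So a table indexed by a tuple of peer classes — one per boundary vertex of $V_i$ — is both a valid summary (it determines everything the outside needs to know) and of bounded size: at most $(2^C)^{|\partial V_i|} = 2^{O(C\cdot s)} = 2^{O(s)}$ entries, since $C$ is a constant. The merge step at an internal node $i$ with children $j,k$: the separator $S_i$ separates $V_j$ from $V_k$ in $G_i$, so conditioning on a configuration of the edges ``around'' $S_i$ (again summarized by peer classes, giving $2^{O(s)}$ choices since $|S_i|\le s$), the two sides contribute independently; for each choice we look up the matching entries in the tables of $j$ and $k$, multiply, and accumulate into the appropriate entry of node $i$'s table, relabeling peer classes as boundary vertices change their incident-edge situation. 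At the root, $V_r = V$ and $\partial V_r = \emptyset$, so the single table entry is $\holant(\Omega)$.

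For the base case (leaves have $V_\ell=\emptyset$) and for the ``small'' pieces, I would actually invoke the simple $\exp(O(|V_i|))$-time algorithm of Theorem~\ref{theorem-alg-simple-DP} whenever $|V_i|$ is $O(s)$ — in particular this cleanly handles the internal bookkeeping when a bounded-size chunk of the graph (the separator $S_i$ together with the boundary vertices) needs to be ``contracted'' into weights, which is where the $(qC)^{O(s)}$ factor comes from. Putting it together: $O(n)$ nodes in $T$, each merge costing $2^{O(s)}\cdot\mathrm{poly}(n)$, gives total time $2^{O(\tw(G))}\cdot\mathrm{poly}(n)$.

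The main obstacle I anticipate is \emph{bookkeeping the peer classes correctly through the merge}, i.e.~making the intuitive ``conditional independence on a separator'' rigorous in the Holant setting where constraints live on vertices and the separator cuts \emph{vertices}, not edges. One has to be careful about edges of $G_i$ that go between $V_j$, $V_k$, and $S_i$: every edge of $G_i$ with an endpoint in $S_i$ must be accounted for exactly once, and a boundary vertex of $V_j$ may have incident edges going into $V_k$ or into $\partial V_i$, so its peer-class label has to be updated consistently as we pass from the child table to the parent table. Showing that the summary really is sufficient — that two configurations inducing the same peer-class tuple contribute equal weight to the outside — is exactly Lemma~\ref{lemma-peer-code}, but applying it requires decomposing each boundary vertex's incident-edge set into the right blocks ($d_1+\cdots+d_r=d$) corresponding to which side of the decomposition each incident edge lies on. Once that decomposition is set up carefully the rest is routine dynamic programming.
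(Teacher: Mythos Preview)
Your proposal is correct and follows essentially the same approach as the paper: build the separator decomposition, index the dynamic-programming table at each node by a tuple of peer classes (one per boundary vertex, with size bound $2^{C|\partial V_i|}=2^{O(\tw(G))}$ from Lemma~\ref{lemma-peer-regularity}), justify sufficiency of this summary via Lemma~\ref{lemma-peer-code}, and handle the separator piece $S_i\cup\partial V_i$ by the $(qC)^{O(\tw(G))}$ algorithm of Theorem~\ref{theorem-alg-simple-DP}. The paper formalizes exactly the merge step you anticipate as an obstacle by partitioning the edges around $S_i\cup\partial V_i$ into three groups (internal to the separator region, and crossing into each child) and summing over a triple of peer images $(\phi_v^0,\phi_v^1,\phi_v^2)$ per vertex $v\in S_i\cup\partial V_i$; the only detail you left slightly implicit is that recursive peering produces functions in the full peering closure $\Peer{}{*}{f_v}$ (unions of basic peer classes, per Lemma~\ref{lemma-peering-closure}), but you already cite the $2^C$ bound from Lemma~\ref{lemma-peer-regularity} that covers this.
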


The $2^{O(\tw(G))}$ growth in treewidth is critical to our approximation algorithm on planar graphs introduced later, although any faster growth in treewidth is still fixed-parameter tractable.

\newcommand{\constr}[1]{\phi_{{#1}}}

\paragraph{The setup.}
Let $U\subseteq V$ be a set of vertices. Let $\partial U$ denote the vertex boundary of $U$, i.e.~$\partial U=\{v\in V\setminus U\mid \exists u\in U, uv\in E\}$. Let $H(U\cup\partial U, F)$ be the subgraph that $F=\{uv\in E\mid u,v\in U\mbox{ or }u\in U, v\in\partial U\}$, i.e.~$F$ includes all edges within $U$ and all edges crossing between $U$ and $\partial U$ (but not those edges with both endpoints in $\partial U$). 
For each $v\in\partial U$, let $\constr{v}:[q]^{\deg_H(v)}\rightarrow\{0,1\}$ be a boolean symmetric function. We define the following quantity:  
\begin{align}
Z\left(U,\left\{\constr{v}\right\}_{v\in\partial U}\right)
&=
\sum_{\sigma\in[q]^{F}}\prod_{v\in U\cup\partial U}g_v\left(\sigma\mid_{F(v)}\right),\label{eq:holant-fpt}\\
\mbox{where}\quad 
g_v
&=\begin{cases}
f_v & \mbox{if }v\in U,\\
\constr{v} & \mbox{if }v\in\partial U.
\end{cases}\nonumber
\end{align}
In fact $Z\left(U,\left\{\constr{v}\right\}_{v\in\partial U}\right)$ defines a Holant problem on graph $H(U\cup\partial U,F)$ with function $f_v$ at each $v\in U$ and boolean constraint $\constr{v}$ at each boundary vertex $v\in\partial U$. And the original Holant problem can be written as $\holant(\Omega)=Z(V,\emptyset)$.

\paragraph{The recursion.}
Suppose that $U$ can be partitioned into $S, U_1, U_2$ such that $S$ is a $(U_1,U_2)$-separator of $U$ in $G[U]$, where $G[U]$ is the subgraph of $G$ induced by $U$ (note that $S$ is not necessarily a separator of $H$). It is obvious that $\partial U_1\subseteq S\cup \partial U$ and $\partial U_2\subseteq S\cup \partial U$, where $\partial U_1$ and $\partial U_2$ are respectively the vertex boundaries of $U_1$ and $U_2$ in $G$.
For each $v\in S\cup\partial U$, let $d_0(v),d_1(v),d_2(v)$ denote the number neighbors of $v$ in $S\cup\partial U,U_1,U_2$ respectively, i.e.
\begin{align*}
d_0(v)
=
|\{u\in S\cup\partial U\mid uv\in F\}|,
\quad
d_1(v)
=
|\{u\in U_1\mid uv\in F\}|,
\quad
d_2(v)
=
|\{u\in U_2\mid uv\in F\}|.
\end{align*}
It holds that $d_0(v)+d_1(v)+d_2(v)=\deg_H(v)$.

\newcommand{\code}[2]{\phi_{{#1}}^{{#2}}}

For any $v\in S\cup\partial U$ and $i=0,1,2$, each tuple $\sigma\in[q]^{d_i(v)}$ can be mapped to a boolean function $\Peer{\sigma}{}{g_v}$ which indicates all tuples that have the same effect of pinning on $g_v$ as $\sigma$, where $g_v$ is still defined as that $g_v=f_v$ for $v\in S$ and $g_v=\constr{v}$ for $v\in\partial U$. We call $\Peer{\sigma}{}{g_v}$ the \concept{peer image} of $\sigma$ at $v$. 
For each $v\in S\cup\partial U$ and $i=0,1,2$, let $\mathcal{P}_v^i=\left\{\Peer{\sigma}{}{g_v}\mid \sigma\in[q]^{d_i(v)}\right\}$ be the range of peer images over all $\sigma\in[q]^{d_i(v)}$. Let $\code{}{}$ be a sequence indexed by $\code{v}{i}$ for $v\in S\cup\partial U$ and $i=0,1,2$, such that $\code{v}{i}\in\mathcal{P}_v^i$, i.e.~$\code{v}{i}$ is a peer image of some $\sigma\in[q]^{d_i(v)}$.

Due to Lemma \ref{lemma-peer-code}, for any $v\in S\cup\partial U$, for any $\sigma_i\in[q]^{d_i(v)}, i=0,1,2$, the value of $g_v(\sigma_0\sigma_1\sigma_2)$ is uniquely determined by the peer images $\Peer{\sigma_i}{}{g_v}, i=0,1,2$. 
For $\code{v}{i}\in\mathcal{P}_v^i, i=0,1,2$, we write $\tilde{g}_v(\code{v}{0},\code{v}{1},\code{v}{2})$ for the unique value of $g_v(\sigma_0\sigma_1\sigma_2)$ for all $(\sigma_0,\sigma_1,\sigma_2)$ with the same peer images $\Peer{\sigma_i}{}{g_v}=\code{v}{i}$ for $i=0,1,2$.

We then have the following recursion for $Z\left(U,\left\{\constr{v}\right\}_{v\in\partial U}\right)$ defined in \eqref{eq:holant-fpt}:
\begin{align}
Z\left(U,\left\{\constr{v}\right\}_{v\in\partial U}\right)
&=
\sum_{\substack{\code{}{}:\, \code{v}{i}\in \mathcal{P}_v^i\\\forall v\in S\cup\partial U\\ i=0,1,2}} 
Z_0(\code{}{})\cdot Z_1(\code{}{})\cdot Z_2(\code{}{}) \prod_{v\in S\cup\partial U}\tilde{g}_v\left(\code{v}{0},\code{v}{1},\code{v}{2}\right),\label{eq:holant-fpt-recursion}\\
\mbox{where}\quad Z_0(\code{}{})
&= \holant\left(H\left[S\cup\partial U\right],\left\{\code{v}{0}\right\}_{v\in S\cup\partial U}\right),\notag\\
Z_1(\code{}{})
&= Z\left(U_1,\left\{\code{v}{1}\right\}_{v\in\partial U_1}\right),\notag\\
Z_2(\code{}{})
&= Z\left(U_2,\left\{\code{v}{2}\right\}_{v\in\partial U_2}\right).\notag
\end{align}
Note that peer images $\code{v}{i}$ are boolean functions, thus $Z_0, Z_1$ and $Z_2$ are well defined. Also note that only $Z_1$ and $Z_2$ are recursions and $Z_0$ is a new well-defined Holant problem which can be directly computed by the simple algorithm of Theorem \ref{theorem-alg-simple-DP}. 

As an example, consider counting matchings, which is a Holant problem of regular constraint functions. The peer images $\code{v}{i}$ actually correspond to that vertex $v$ is matched or unmatched\footnote{We can ignore the overmatched cases in our discussion because they nullify the configuration.} by the corresponding subset of incident edges of $v$. The Holant problem $Z_0(\code{}{})= \holant\left(H\left[S\cup\partial U\right],\left\{\code{v}{0}\right\}_{v\in S\cup\partial U}\right)$ counts the number of perfect matchings of those vertices that claim to be matched in $H\left[S\cup\partial U\right]$.

\ifabs{The proof of correctness can be found in the full version of the paper in Appendix.}
{
We then prove that this recursion holds for $Z\left(U,\left\{\constr{v}\right\}_{v\in\partial U}\right)$ defined in \eqref{eq:holant-fpt}.
\begin{proof}
Since $S$ is a $(U_1,U_2)$-separator of $U$,
the edges in the original subgraph $H(U\cup\partial U, F)$ can be partitioned into five disjoint sets:
\begin{align*}
F_0:
&\quad\mbox{internal edges of }S\cup\partial U, \mbox{ i.e.~} F_0=\{uv\in F\mid u\in S, v\in S\cup\partial U\};\\
\mbox{for }i=1,2,\quad E_i:
&\quad\mbox{internal edges of }U_i, \mbox{ i.e.~} E_i=\{uv\in F\mid u,v\in U_i\};\\
\mbox{for }i=1,2,\quad F_i:
&\quad\mbox{boundary edges of }U_i, \mbox{ i.e.~} F_i=\{uv\in F\mid u\in U_i, v\in S\cup\partial U\}.
\end{align*}
Each vertex $v\in S\cup\partial U$ has precisely $d_i(v)$ adjacent edges in $F_i$ for $i=0,1,2$. And for $i=1,2$, $\partial U_i$ is precisely the set of vertices in $S\cup\partial U$ with positive $d_i(v)$. 
We can enumerate all configurations $\sigma\in[q]^{F_0\cup F_1\cup F_2}$ by enumerating legal local configurations $\sigma_v^i\in[q]^{F_i(v)}$ for each individual vertex $v\in S\cup\partial U$ and each $i=0, 1,2$, where being legal means that there exists a $\sigma\in[q]^{F_0\cup F_1\cup F_2}$ such that $\sigma\mid_{F_i(v)}=\sigma_v^i$ for all $v\in S\cup\partial U$ and $i=0, 1,2$.

\newcommand{\indi}[1]{\mathbf{1}_{{#1}}}

For a tuple $\sigma\in[q]^k$, we define the indicator function $\indi{\sigma}:[q]^k\rightarrow\{0,1\}$ as that $\indi{\sigma}(\tau)=1$ if and only if $\tau=\sigma$. Then we can rewrite \eqref{eq:holant-fpt} as follows:
\begin{align}
&\quad\,\, Z\left(U,\left\{\constr{v}\right\}_{v\in\partial U}\right)\notag\\
&=
\sum_{\sigma\in[q]^{F_0\cup F_1\cup F_2}} 
Z\left(U_1,\left\{\indi{\sigma\mid_{F_1(v)}}\right\}_{v\in\partial U_1}\right)
\cdot Z\left(U_2,\left\{\indi{\sigma\mid_{F_2(v)}}\right\}_{v\in\partial U_2}\right)
\prod_{v\in S\cup\partial U}g_v\left(\sigma\mid_{F_0(v)\cup F_1(v)\cup F_2(v)}\right)\notag\\
&=
\sum_{\substack{\text{legal }\sigma_v^i\in[q]^{F_i(v)}\\\forall v\in S\cup\partial U\\ i=0,1,2}}
Z\left(U_1,\left\{\indi{\sigma_v^1}\right\}_{v\in\partial U_1}\right)
\cdot Z\left(U_2,\left\{\indi{\sigma_v^2}\right\}_{v\in\partial U_2}\right)
\prod_{v\in S\cup\partial U}g_v\left(\sigma_v^0\sigma_v^1\sigma_v^2\right).\label{eq:recursion-proof-1}
\end{align}
In fact, $F_1$ can be partitioned into disjoint  $F_1(v)$ for $v\in\partial U_1$ and $F_2$ can be partitioned into disjoint $F_2(v)$ for $v\in\partial U_2$.
Thus for $i=1,2$ all local configurations $\{\sigma_v^i\in[q]^{d_i(v)}\}_{v\in S\cup\partial U}$ are legal.  A collection $\left\{\sigma_v^0\right\}_{v\in S\cup\partial U}$ of  local configurations of edges in $F_0$ of individual vertices is legal if and only if the Holant problem
$\holant\left(H[S\cup\partial U], \left\{\indi{\sigma_v^0}\right\}_{v\in S\cup\partial U}\right)$ 
has value 1 (it has only two possible values 0 or 1 as every indicator function has value 1 on exactly one input). Thus we have
\begin{align}
\eqref{eq:recursion-proof-1}
&=
\sum_{\substack{\sigma_v^i\in[q]^{F_i(v)}\\\forall v\in S\cup\partial U\\ i=0,1,2}}
\holant\left(H[S\cup\partial U], \left\{\indi{\sigma_v^0}\right\}_{v\in S\cup\partial U}\right)\notag\\
&\qquad\qquad\quad
\cdot Z\left(U_1,\left\{\indi{\sigma_v^1}\right\}_{v\in\partial U_1}\right)
\cdot Z\left(U_2,\left\{\indi{\sigma_v^2}\right\}_{v\in\partial U_2}\right)
\cdot \prod_{v\in S\cup\partial U}g_v\left(\sigma_v^0\sigma_v^1\sigma_v^2\right).\label{eq:recursion-proof-2}
\end{align}
For $v\in S\cup\partial U$ and $i=0,1,2$, fix $\code{v}{i}\in\mathcal{P}_v^i$, i.e.~$\code{v}{i}=\Peer{\sigma}{}{g_v}$ for some $\sigma\in[q]^{d_i(v)}$. 
We can group configurations $\{\sigma_v^i\}_{v\in S\cup\partial U, i=0,1,2}$ into equivalence classes $\{\sigma_v^i\in[q]^{d_i(v)}\mid \Peer{\sigma_v^i}{}{g_v}=\code{v}{i}\}$ according to their peer images. Due to Lemma \ref{lemma-peer-code}, configurations from the same class yields the same value of $g_v(\sigma_v^0\sigma_v^1\sigma_v^2)=\tilde{g}_v(\code{v}{0},\code{v}{1},\code{v}{2})$. Therefore,
\begin{align}
\eqref{eq:recursion-proof-2}
&=
\sum_{\substack{\code{}{}:\, \code{v}{i}\in \mathcal{P}_v^i\\\forall v\in S\cup\partial U\\ i=0,1,2}} 
\left(
\sum_{\substack{\sigma_v^0\in[q]^{F_0(v)}:\, \text{\textsc{Peer}}(\sigma_v^0,g_v)=\code{v}{0}\\ \forall v\in S\cup\partial U}}
\holant\left(H[S\cup\partial U], \left\{\indi{\sigma_v^0}\right\}_{v\in S\cup\partial U}\right)
\right)\notag\\
&\qquad\qquad\quad\cdot
\left(
\sum_{\substack{\sigma_v^1\in[q]^{F_1(v)}:\, \text{\textsc{Peer}}(\sigma_v^1,g_v)=\code{v}{1}\\ \forall v\in S\cup\partial U}}
Z\left(U_1,\left\{\indi{\sigma_v^1}\right\}_{v\in\partial U_1}\right)
\right)\notag\\
&\qquad\qquad\quad\cdot
\left(
\sum_{\substack{\sigma_v^2\in[q]^{F_2(v)}:\, \text{\textsc{Peer}}(\sigma_v^2,g_v)=\code{v}{2}\\ \forall v\in S\cup\partial U}}
Z\left(U_2,\left\{\indi{\sigma_v^2}\right\}_{v\in\partial U_2}\right)
\right)
\prod_{v\in S\cup\partial U}\tilde{g}_v(\code{v}{0},\code{v}{1},\code{v}{2}).\label{eq:recursion-proof-3}
\end{align}
And any peer image $\code{v}{i}\in\mathcal{P}_v^i$ is a boolean function which indicates all such $\sigma\in[q]^{d_i(v)}$ that have the same peer image $\Peer{\sigma}{}{g_v}=\code{v}{i}$, thus it is straightforward to verify the following identities:
\begin{align*}
\holant\left(H\left[S\cup\partial U\right],\left\{\code{v}{0}\right\}_{v\in S\cup\partial U}\right)
&=
\sum_{\substack{\sigma_v^0\in[q]^{F_0(v)}:\, \text{\textsc{Peer}}(\sigma_v^0,g_v)=\code{v}{0}\\ \forall v\in S\cup\partial U}}
\holant\left(H[S\cup\partial U], \left\{\indi{\sigma_v^0}\right\}_{v\in S\cup\partial U}\right),\\
Z\left(U_1,\left\{\code{v}{1}\right\}_{v\in\partial U_1}\right)
&=
\sum_{\substack{\sigma_v^1\in[q]^{F_1(v)}:\, \text{\textsc{Peer}}(\sigma_v^1,g_v)=\code{v}{1}\\ \forall v\in S\cup\partial U}}
Z\left(U_1,\left\{\indi{\sigma_v^1}\right\}_{v\in\partial U_1}\right),\\
Z\left(U_2,\left\{\code{v}{2}\right\}_{v\in\partial U_2}\right)
&=
\sum_{\substack{\sigma_v^2\in[q]^{F_2(v)}:\, \text{\textsc{Peer}}(\sigma_v^2,g_v)=\code{v}{2}\\ \forall v\in S\cup\partial U}}
Z\left(U_2,\left\{\indi{\sigma_v^2}\right\}_{v\in\partial U_2}\right).
\end{align*}
Substituting these identities back in \eqref{eq:recursion-proof-3}, we deduce the recursion \eqref{eq:holant-fpt-recursion}.
\end{proof}
}

\paragraph{The algorithm.}
We then describe an algorithm which computes $\holant(G(V,E), \{f_v\}_{v\in V})$ in time $2^{O(\tw(G))}\cdot\mathrm{poly}(|V|)$ if all $f_v$ are \regular{C} for some constant $C>0$. 
\begin{enumerate}
\item Constructing separator decomposition: By Theorem \ref{thm-sw-tw}, a separator decomposition $T$ of input graph $G$ of width $3(\tw(G)+1)$ can be constructed in time $2^{O(\tw(G))}\cdot\mathrm{poly}(n)$. 
\item Enumeration of peering closures: For every $v\in V$ and each $0\le k\le \deg(v)$, construct set $P(v,k)=\{\Peer{\sigma}{}{f_v}\mid \sigma\in[q]^k\}$ and all possible unions (defined on boolean functions) of members of $P(v,k)$. Due to Lemma \ref{lemma-peering-closure}, this will cover all functions in $\Peer{}{*}{f_v}$. And due to Lemma \ref{lemma-peer-regularity}, there are at most $2^C$ possible unions. The total time cost is polynomial because for symmetric functions all such $P(v,k)$ can be constructed in polynomial time.



\item Dynamic programming: 
Let $T$ be the separator decomposition constructed in the first step. Then each node $i\in T$ associated with a vertex set $V_i$ and a separator $S_i$ such that $|S_i|=O(\tw(G))$ and $|\partial V_i|=O(\tw(G))$, and if $j$ and $k$ are the two children of $i$ in $T$, $S_i$ is a $(V_j,V_k)$-separator in $G[V_i]$.
Apply the recursion \eqref{eq:holant-fpt-recursion} in this tree structure as follows: 
For each leaf $\ell\in T$, $V_\ell=\emptyset$, and $Z(V_{\ell},\emptyset)=1$;
and for each non-leaf node $i\in T$ with children $j$ and $k$ in $T$, $Z(V_i,\{\constr{v}\}_{v\in\partial V_i})$ is computed according to the recursion \eqref{eq:holant-fpt-recursion} by setting $U=V_i$, $U_1=V_j$ and $U_2=V_k$; in particular for the root $r$ of $T$, $V_r=V$ and $Z(V,\emptyset)=\holant(G(V,E), \{f_v\}_{v\in V})$. 

There are $O(|V|)$ nodes in a separator decomposition. For all $Z(V_i,\{\constr{v}\}_{v\in\partial V_i})$, every $\constr{v}$ is a boolean function in $\Peer{}{*}{f_v}$. Due to Lemma \ref{lemma-peer-regularity}, since $f_v$ is \regular{C}, once $V_i$ is fixed there are at most $2^C$ possible $\constr{v}$ for each $v\in\partial V_i$, where $|\partial V_i|=\tw(G)$. 
Therefore all  $Z(V_i,\{\constr{v}\}_{v\in\partial V_i})$ can be stored in a $O(n)\times 2^{O(C\cdot\tw(G))}$ table. 

Each entry of the dynamic programming table is filled according to the recursion \eqref{eq:holant-fpt-recursion}, which involves three nontrivial tasks:
\begin{enumerate}
\item (computing $Z_0$): Due to Lemma \ref{lemma-peer-regularity}, any $\code{v}{}\in\Peer{}{*}{f_v}$ is still \regular{C} since $f_v$ is \regular{C}, thus $Z_0=\holant(H_i[S_i\cup\partial V_i],\{\code{v}{}\}_{v\in S_i\cup\partial V_i})$ is a Holant problem with \regular{C} constraint functions which can be computed in time $(qC)^{|S_i\cup\partial V_i|}=(qC)^{O(\tw(G))}$ due to Theorem \ref{theorem-alg-simple-DP}.
\item (evaluating $\tilde{g}_v$): Each $\tilde{g}_v(\code{v}{0},\code{v}{1},\code{v}{2})$ can be easily evaluated by evaluating $g_v(\sigma_0\sigma_1\sigma_2)$ for arbitrary $\sigma_0\in \code{v}{0}, \sigma_1\in \code{v}{1}, \sigma_2\in \code{v}{2}$. 
\item (computing the sum): For every $v\in S_i\cup\partial V_i$, enumerate all $\le 2^C$ possible boolean functions of appropriate arity $\code{v}{}\in\Peer{}{*}{f_v}$. The total time is bounded by $2^{O(C\cdot\tw(G))}$ because $|S_i\cup\partial V_i|=O(\tw(G))$.
\end{enumerate}
The time cost for filling one entry of the dynamic programming table is bounded by $2^{O(\tw(G))}$ for constant $C$ and $q$.
\end{enumerate}
The total time cost for the above algorithm is bounded by $2^{O(\tw(G))}\cdot\mathrm{poly}(n)$ for constant $C$ and $q$.

\subsection{FPTAS from correlation decay}
\begin{theorem}\label{thm-FPTAS}
Assume the tractable search for the Holant problem  $\Holant(\mathcal{G},\mathcal{F})$ where $\mathcal{G}$ is an apex-minor-free graph family and $\mathcal{F}$ is a regular family of nonnegative symmetric functions.
The strong spatial mixing implies the existence of FPTAS for $\Holant(\mathcal{G},\mathcal{F})$.
\end{theorem}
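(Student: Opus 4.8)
The plan is to combine the Jerrum--Valiant--Vazirani self-reduction with correlation decay localized to small balls of the input graph, using the fixed-parameter tractable algorithm of Theorem~\ref{thm-alg-fpt} to carry out the local exact computation. Fix an instance $\Omega=(G(V,E),\{f_v\}_{v\in V})$ with $G\in\mathcal{G}$ and desired accuracy $\epsilon$. First, using the tractable search (starting from the empty partial configuration) obtain a feasible total configuration $\eta\in[q]^E$, so that $\holant(\Omega)=w(\eta)/\mu(\eta)$ with $w(\eta)$ evaluated directly; expand $\mu(\eta)$ by the chain rule over an enumeration $e_1,\dots,e_m$ of $E$ into a product of conditional marginals of the form $\mu_{e_i}^{\sigma_{\Lambda_{i-1}}}(\cdot)$, where $\Lambda_{i-1}=\{e_1,\dots,e_{i-1}\}$. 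At step $i$ we approximate all $q$ values of $\mu_{e_i}^{\sigma_{\Lambda_{i-1}}}$, pick the value $c_i\in[q]$ whose (approximate) marginal is largest, update the partial configuration to $\sigma_{\Lambda_i}=\sigma_{\Lambda_{i-1}}\cup\{e_i\mapsto c_i\}$, and re-run the tractable search to get a fresh feasible extension $\eta$ of $\sigma_{\Lambda_i}$ for the next step. Since any such maximizing value has true marginal at least $1/q-O(\delta)$ and is a feasible extension, the $m$ factors compose multiplicatively, so it suffices to approximate each conditional marginal $\mu_e^{\sigma_\Lambda}$ within additive error $\delta=\Theta(\epsilon/(q\,|E|))$.

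To approximate $\mu_e^{\sigma_\Lambda}$ (given a feasible $\eta$ extending $\sigma_\Lambda$ from the self-reduction), set $r=\Theta(\log(n/\delta))$ and let $B=N_r(e)$ be the subgraph induced by the vertices within distance $r$ of an endpoint of $e$, with $E_B$ the set of edges of $G$ having both endpoints in $B$. Build a sub-instance $\Omega_B$ on $B$ by keeping the edges in $E_B$ free except for those in $\Lambda\cap E_B$ (pinned to $\sigma_\Lambda$), and replacing each $f_v$, $v\in V(B)$, by its pinning $\Pin{\eta}{}{f_v}$ on the incident edges of $v$ that lie outside $E_B$. Pinning all of $E\setminus E_B$ (the crossing edges included) decouples the ball from the rest of the graph, so the marginal at $e$ in $\Omega_B$ equals exactly $\mu_e^{\sigma_\Lambda\cup\,\eta|_{E\setminus E_B}}$ in $\Omega$. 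On the other hand $\mu_e^{\sigma_\Lambda}=\sum_{\rho}\Pr[\sigma|_{(E\setminus E_B)\setminus\Lambda}=\rho\mid\sigma_\Lambda]\cdot\mu_e^{\sigma_\Lambda\cup\rho}$, and for every feasible $\rho$ the configurations $\sigma_\Lambda\cup\rho$ and $\sigma_\Lambda\cup\eta|_{E\setminus E_B}$ agree on $\Lambda$ and disagree only on edges at distance $\Omega(r)$ from $e$; Definition~\ref{def-SSM} then gives $\|\mu_e^{\sigma_\Lambda\cup\rho}-\mu_e^{\sigma_\Lambda\cup\,\eta|_{E\setminus E_B}}\|_{\mathrm{TV}}\le\mathrm{Poly}(n)\exp(-\Omega(r))$, and averaging over $\rho$ yields $\|\mu_e^{\sigma_\Lambda}-\mu_e^{\sigma_\Lambda\cup\,\eta|_{E\setminus E_B}}\|_{\mathrm{TV}}\le\mathrm{Poly}(n)\exp(-\Omega(r))\le\delta$ by the choice of $r$. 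Thus it remains to compute the marginal at $e$ in $\Omega_B$ exactly.

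For the exact computation we invoke Theorem~\ref{thm-alg-fpt}. The constraints of $\Omega_B$ are pinnings of functions in $\mathcal{F}$, and pinnings of a \regular{C} function are again \regular{C} (immediate from the definition, since $\Pin{\sigma}{}{\Pin{\tau}{}{f}}=\Pin{\sigma\tau}{}{f}$), so $\Omega_B$ has regular symmetric constraints with the same constant $C$. Because $\mathcal{G}$ is apex-minor-free, Theorem~\ref{thm-local-tw} gives $\tw(B)\le f(r)=O(r)=O(\log(n/\delta))$. Computing $\holant(\Omega_B)$ and $\holant(\Omega_B)$ with $e$ additionally pinned to each value in $[q]$ therefore takes time $2^{O(\tw(B))}\cdot\mathrm{poly}(|V(B)|)=2^{O(\log(n/\delta))}\cdot\mathrm{poly}(n)=\mathrm{poly}(n/\delta)$, and the ratios of these values give the marginal at $e$ in $\Omega_B$, hence an additive-$\delta$ estimate of $\mu_e^{\sigma_\Lambda}$. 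Running this once per step of the self-reduction (the tractable search and the evaluation of $w$ are polynomial) produces the claimed FPTAS; note that here the $2^{O(\tw)}$ bound of Theorem~\ref{thm-alg-fpt}, rather than merely $2^{\mathrm{poly}(\tw)}$, is essential, since with $\tw=O(\log(n/\delta))$ it keeps the running time polynomial, whereas a super-linear exponent would give only a quasi-polynomial algorithm.

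I expect the main obstacle to be the localization step: because the strong spatial mixing in this paper is stated only on the original graph — not on a SAW-tree or any other blown-up structure — the marginal computation must be confined to a bounded-treewidth ball of $G$ itself, which forces the somewhat delicate combination of (i) pinning a genuine feasible boundary configuration supplied by the tractable search, so that $\Omega_B$ is well-defined and feasible, and (ii) averaging the SSM estimate over all far-away feasible completions $\rho$ in order to relate the ball marginal back to the true conditional marginal $\mu_e^{\sigma_\Lambda}$. The accompanying bookkeeping that keeps every intermediate conditional distribution feasible (so that marginals exist and are bounded below by roughly $1/q$) is the other place where care is required.
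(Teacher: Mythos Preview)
Your proposal is correct and follows essentially the same argument as the paper: the paper likewise proves a marginal-approximation lemma by pinning a feasible configuration (obtained from tractable search) on the boundary of an $r$-ball, invokes SSM via the same averaging-over-feasible-completions argument, computes the local marginal exactly with Theorem~\ref{thm-alg-fpt} using $\tw(N_r)=O(r)$ from Theorem~\ref{thm-local-tw}, and then runs the JVV self-reduction picking at each step the spin with the largest estimated marginal to keep the factors bounded below. The only cosmetic difference is that the paper pins just the edge boundary $B_r(e)\setminus\Lambda$ rather than all of $E\setminus E_B$, which is equivalent once the boundary is fixed.
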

\ifabs{
We briefly describe the main idea of the proof. By strong spatial mixing, an $O(\log n)$-ball around $v$ of arbitrarily fixed feasible boundary defines a marginal distribution at $v$ which approximates the true marginal distribution up to a $1/\mathrm{poly}(n)$ total variation. On an apex-minor-free graph, an $O(\log n)$-ball has treewidth $O(\log n)$ thus the exact marginal probability (with boundary fixed) can be computed in polynomial time. Then applying the self-reduction, we have an FPTAS for the Holant problem. The detailed proof is in the full version of the paper in Appendix.
}
{
Let $\Omega=(G(V,E),\{f_v\}_{v\in V})$ be a Holant instance, where $G$ is an apex-minor-free graph and all $f_v:[q]^{\deg(v)}\rightarrow\mathbb{R}^+$ are \regular{C} symmetric functions for some constant $C>0$. Let $\mu$ be the Gibbs measure defined by the Holant instance $\Omega$. 

Assume the tractable search and strong spatial mixing for $\Holant(\mathcal{G},\mathcal{F})$. We have the following lemma for approximation of marginal probabilities.

\begin{lemma}\label{lemma-alg-marginal}
Let $e\in E$. Let $\Lambda\subseteq E$ and $\tau_\Lambda\in[q]^\Lambda$ be a feasible configuration.  
The marginal probability $\mu_e^{\tau_\Lambda}(i)$ for any $i\in [q]$ can be approximated within any additive error $\epsilon$ in time $\mathrm{poly}(n,\frac{1}{\epsilon})$.
\end{lemma}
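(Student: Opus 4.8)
The plan is to combine strong spatial mixing with the parameterized exact algorithm (Theorem~\ref{thm-alg-fpt}) on a logarithmic-radius ball, and then close the argument with the self-reduction. First I would fix the instance $\Omega=(G(V,E),\{f_v\}_{v\in V})$ with $G$ apex-minor-free and all $f_v$ being \regular{C}, and fix a feasible $\tau_\Lambda\in[q]^\Lambda$ together with an edge $e\in E$ and a target additive error $\epsilon$. The idea is that to approximate $\mu_e^{\tau_\Lambda}(i)$ it suffices to look only at a small neighborhood of $e$: let $r=O(\log(n/\epsilon))$ and let $B$ be the subgraph induced by all vertices (or edges) within distance $r$ of $e$, with boundary edge set $\partial B$. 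On $B$ we define a truncated Holant instance: keep the constraint functions $f_v$ for interior vertices, restrict to the pinned values $\tau_\Lambda$ on $\Lambda\cap E(B)$ (pinning turns $f_v$ into $\Pin{\cdot}{}{f_v}$, which by Lemma~\ref{lemma-peer-regularity} is still \regular{C}), and choose some fixed feasible boundary condition on the edges crossing $\partial B$ — here I would invoke the tractable search to produce a globally feasible configuration agreeing with $\tau_\Lambda$, then read off its restriction to the boundary of $B$, guaranteeing feasibility of the truncated instance.

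Second, I would bound the error incurred by this truncation. By Theorem~\ref{thm-local-tw}, since $G$ is apex-minor-free, $\tw(N_r(e))\le f(r)=O(\log(n/\epsilon))$. So by Theorem~\ref{thm-alg-fpt} the exact marginal $\mu_e$ in the truncated instance $B$ (with its fixed boundary) can be computed in time $2^{O(\tw(B))}\cdot\mathrm{poly}(n)=\mathrm{poly}(n,1/\epsilon)$ — this is where the $2^{O(k)}$ (rather than merely $2^{f(k)}$ for super-linear $f$) dependence in the treewidth is essential. It remains to argue that this truncated marginal is within $\epsilon$ in total variation of the true marginal $\mu_e^{\tau_\Lambda}$. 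This follows from strong spatial mixing (Definition~\ref{def-SSM}): the true marginal $\mu_e^{\tau_\Lambda}$ and the truncated marginal (which is $\mu_e$ conditioned on $\tau_\Lambda$ together with \emph{some} particular feasible boundary configuration on $\partial B$) differ only in the configuration on a set $\Delta$ of edges, all of which lie at distance $\ge r$ from $e$; hence the discrepancy is at most $\mathrm{Poly}(n)\cdot\exp(-\Omega(r))$, which is $\le\epsilon$ for $r=O(\log(n/\epsilon))$. One has to be slightly careful that the truncated marginal really is a genuine conditional marginal of the original Gibbs measure with a feasible boundary condition — this is exactly why feasibility of the chosen boundary (via tractable search) matters.

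Third, with Lemma~\ref{lemma-alg-marginal} in hand, the FPTAS for $\holant(\Omega)$ follows from the standard Jerrum--Valiant--Vazirani self-reduction~\cite{jerrum1986random}: order the edges $e_1,\ldots,e_m$, use the tractable search to find a feasible $\tau$, and write $\holant(\Omega)=\prod_{t} \mu_{e_t}^{\tau_{<t}}(\tau(e_t))^{-1}\cdot(\text{weight of }\tau)$, where $\tau_{<t}$ is the partial configuration fixing $e_1,\ldots,e_{t-1}$ to $\tau$; each conditional marginal is feasible (since $\tau$ is feasible) and is approximated to additive error $\epsilon/\mathrm{poly}(m)$ by Lemma~\ref{lemma-alg-marginal}, and the product of the relative errors telescopes to a $(1\pm\epsilon)$ multiplicative approximation, all in time $\mathrm{poly}(n,1/\epsilon)$. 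The main obstacle I expect is the second step: carefully setting up the truncated instance so that (i) its constraint functions remain regular after pinning, (ii) it stays feasible, and (iii) its exact marginal at $e$ is literally a conditional marginal of $\mu$ under a feasible boundary configuration, so that Definition~\ref{def-SSM} applies verbatim with $\Delta$ at distance $\ge r$. The bookkeeping about which edges are ``boundary'' versus ``pinned'' versus ``free,'' and the fact that SSM is stated for edge-configurations (so distances are edge-to-edge), needs to be handled with some care, but conceptually it is exactly the ``decay-only on the original graph'' phenomenon emphasized in the introduction.
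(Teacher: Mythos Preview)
Your proposal is correct and matches the paper's approach essentially step for step: restrict to an $r$-ball with $r=O(\log(n/\epsilon))$, pin the boundary via tractable search, invoke bounded local treewidth (Theorem~\ref{thm-local-tw}) plus Theorem~\ref{thm-alg-fpt} for exact computation, and use SSM for the error bound. The one point you flag as the main obstacle---making Definition~\ref{def-SSM} apply verbatim when comparing $\mu_e^{\tau_\Lambda}$ against $\mu_e^{\tau_\Lambda,\sigma_\Delta}$, which are conditionings on \emph{different} edge sets---is resolved in the paper by writing $\mu_e^{\tau_\Lambda}$ as a convex combination of $\mu_e^{\tau_\Lambda,\pi_\Delta}$ over all feasible $\pi_\Delta$ and applying SSM to each pair $(\sigma_\Delta,\pi_\Delta)$; two minor remarks are that regularity after pinning is immediate from the definition of $C$-regularity (Lemma~\ref{lemma-peer-regularity} concerns peer closures and is not needed here), and your third paragraph is the proof of Theorem~\ref{thm-FPTAS} rather than of the lemma itself.
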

\begin{proof}
Let $N_r(e)=\{e'\in E\mid \mathrm{dist}(e,e')\le r\}$ be the \concept{$r$-neighborhood} of edge $e$ in $G$. Let $B_r(e)=\{uv\in E\setminus N_r(e)\mid \exists wv\in N_r(e)\}$ be the vertex boundary of the $r$-neighborhood.

Denote $\Delta= B_r(e)\setminus \Lambda$. When the strong spatial mixing holds, by Definition \ref{def-SSM}, for any $\sigma_\Delta,\pi_\Delta\in[q]^{\Delta}$ that both $(\tau_\Lambda,\sigma_\Delta)$ and $(\tau_\Lambda,\pi_\Delta)$ are feasible, it holds that 
$\left\|\mu_{e}^{\tau_\Lambda,\sigma_\Delta}-\mu_{e}^{\tau_\Lambda,\pi_\Delta}\right\|_{\mathrm{TV}}\le \mathrm{poly}(n)\cdot\exp(\Omega(-r))$.
Therefore for any $\sigma_\Delta\in[q]^{\Delta}$ that $(\tau_\Lambda,\sigma_\Delta)$ is feasible, we have
\begin{align}
\left\|\mu_{e}^{\tau_\Lambda}-\mu_{e}^{\tau_\Lambda,\sigma_\Delta}\right\|_{\mathrm{TV}}\le \mathrm{poly}(n)\cdot\exp(\Omega(-r)), \label{eq:fptas-ssm}
\end{align} 
because $\mu_e^{\tau_\Lambda}$ is a linear combination of all such $\mu_{e}^{\tau_\Lambda,\sigma_\Delta}$.

Note that the joint configuration $(\tau_\Lambda,\sigma_\Delta)$ fixes the boundary $B_r(e)$. Thus for each $i\in[q]$, the marginal probability $\mu_{e}^{\tau_\Lambda,\sigma_\Delta}(i)$ can be computed precisely from the $r$-neighborhood as follows:

Let $W$ be the set of incident vertices of $N_r(e)$ and $F=N_r(e)\setminus \Lambda$. Let $H(W,F)$ be the subgraph formed by removing edges fixed by $\sigma_\Lambda$ from the $r$-neighborhood. For $i\in[q]$, let $e\mapsto i$ denote the configuration on $\{e\}$ that simply assigns value $i$ to edge $e$. We have that
\begin{align}
\mu_{e}^{\tau_\Lambda,\sigma_\Delta}(i)
&=
\frac{\holant\left(H'(W,F\setminus\{e\}),\left\{f_v^{\tau_\Lambda,\sigma_\Delta,e\mapsto i}\right\}_{v\in W}\right)}{\holant\left(H(W,F),\left\{f_v^{\tau_\Lambda,\sigma_\Delta}\right\}_{v\in W}\right)},\label{eq:marginal-holant}\\
& \mbox{where }\quad
f^{\tau}_v=\Pin{\tau_v}{}{f_v}\mbox{ and }\tau_v=\tau\mid_{F(v)}. \notag
\end{align}
The correctness of the equation and the well-defined-ness of the new Holant problems are easy to verify.


Since the original graph $G$ is apex-minor-free, due to Theorem \ref{thm-local-tw} we have $\tw(H)=O(r)$. Since all original $f_v$ are \regular{C}, then trivially all $f_v^\tau$ are \regular{C} since they are just results of pinning $f_v$. Then applying Theorem \ref{thm-alg-fpt}, the new Holant problems defined in \eqref{eq:marginal-holant} can be computed in time $2^{O(r)}\cdot\mathrm{poly}(n)$. Therefore, the marginal probability with boundary condition $\mu_{e}^{\tau_\Lambda,\sigma_\Delta}(i)$ for any $i\in[q]$ can be computed precisely in time $2^{O(r)}\cdot\mathrm{poly}(n)$ once a feasible $(\tau_\Lambda,\sigma_\Delta)$ is given. 

Due to the tractable search for $\Holant(\mathcal{G},\mathcal{F})$, given any feasible $\tau_\Lambda\in[q]^\Lambda$ it is possible to efficiently choose an arbitrary feasible $\sigma\in[q]^E$ agreeing with $\tau_\Lambda$. Thus a feasible $\sigma_\Delta\in[q]^\Delta$ can be efficiently constructed by restricting the aforementioned $\sigma$ on $\Delta$.
Due to \eqref{eq:fptas-ssm}, the original marginal probability $\mu_{e}^{\tau_\Lambda}(i)$ for any $i\in[q]$ can be approximated within an additive error $\epsilon$ in time $\mathrm{poly}(n,\frac{1}{\epsilon})$ by choosing appropriate $r=O(\log n+\log\frac{1}{\epsilon})$.
\end{proof}


With the above lemma, we can apply the standard self-reduction procedure to obtain the FPTAS for $\Holant(\mathcal{G},\mathcal{F})$.

Let $\tau\in[q]^E$ be a feasible configuration, i.e.~the Gibbs measure $\mu(\tau)>0$. Enumerate edges in $E$ as $e_1,e_2,\ldots,e_m$. For each $0\le k\le m$, let $E_k=\{e_1,\ldots,e_k\}$, $\tau_k\in[q]^{E_k}$ be consistent with $\tau$ on $E_k$, and $p_k=\mu^{\tau_{k-1}}_{e_k}(\tau(e_k))$.
The following identity hold for $\mu(\tau)$:
\begin{align*}
\mu(\tau)
&=
\prod_{k=1}^m \Pr_{\sigma\in[q]^E}\left[\sigma(e_k)=\tau(e_k)\mid \sigma(e_i)=\tau(e_i), 1\le i\le k-1\right]
=\prod_{k=1}^{m}p_k.
\end{align*}
On the other hand,  $\mu(\tau)=\frac{\prod_{v\in V}f_v\left(\tau\mid_{E(v)}\right)}{\holant(\Omega)}$.
Thus $\holant(\Omega)=\frac{\prod_{v\in V}f_v\left(\tau\mid_{E(v)}\right)}{\prod_{k=1}^{m}p_k}$. If for each $k$: (1) $p_k$ can be approximated in an additive error $\epsilon$; and (2) $p_k>0$ is a constant, then the product $\prod_{k=1}^{m}p_k$ can be approximated within a multiplicative factor $(1\pm O(n\epsilon))$. While (1) is guaranteed by  Lemma \ref{lemma-alg-marginal}, (2) can be achieved by trying $\mu^{\tau_{k-1}}_{e_k}(i)$ for all $i\in[q]$ and choosing $\tau(e_k)$ to be the $i$ with the largest returned value. This gives us an FPTAS for the Holant problem.

We then can directly apply any known strong spatial mixing result to get the FPTAS.
For example, combining with the result of~\cite{goldberg2005strong}, we have the following corollary.
\begin{corollary}
There exists an FPTAS for counting $q$-coloring on apex-minor-free triangle-free graphs of maximum degree at most $\Delta$ if $q>\alpha\Delta-\gamma$ where $\alpha\approx1.76322$ is the solution to $\alpha^\alpha=e$ and $\gamma=\frac{4\alpha^3-6\alpha^2-3\alpha+4}{2(\alpha^2-)}\approx0.47031$.
\end{corollary}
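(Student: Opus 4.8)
The plan is to recognize counting proper $q$-colorings as a special case of $\Holant(\mathcal{G},\mathcal{F})$ that meets all three hypotheses of Theorem~\ref{thm-FPTAS}, and then simply invoke that theorem together with the strong spatial mixing of~\cite{goldberg2005strong}. First I would set up the encoding: counting proper $q$-colorings of $G$ is the $q$-state spin system with $\Phi_V\equiv 1$ and $\Phi_E$ the disequality function, so by the incidence-graph transformation of Section~2 it is the Holant instance on $\mathcal{I}_G$ with generalized \textsc{Equality} at the original vertices and the binary disequality function at the original edges. Both kinds of constraint function are regular: the binary disequality function has arity $2$ and is regular by the bounded-arity criterion, while \textsc{Equality} and generalized \textsc{Equality} are regular by the ``constant exceptions'' criterion (they agree with the all-zero function except on the $q$ all-equal inputs). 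Hence $\mathcal{F}=\{\text{generalized \textsc{Equality}},\ \text{disequality}\}$ is a regular family of nonnegative symmetric functions, and by the proposition relating $\tw(\mathcal{I}_G)$ to $\tw(G)$ and preserving apex-minor-freeness, the family $\mathcal{G}=\{\mathcal{I}_G : G\text{ apex-minor-free, triangle-free, }\Delta(G)\le\Delta\}$ is apex-minor-free.

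Second, I would dispatch the tractable-search requirement. Whenever $q$ exceeds the stated threshold one has $q\ge\Delta+1$ (true for every $\Delta$ since $\alpha>1$, checking the few small values by hand), so any partial proper coloring extends greedily to a full one. In the incidence-graph picture, a partial configuration on $\Lambda\subseteq E(\mathcal{I}_G)$ is feasible exactly when it assigns a common value to all incidence-edges of each original vertex it touches, and in that case it always completes; the search thus reduces to a consistency check plus a greedy completion, both trivial.

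Third, and this is the only point requiring external input, I would invoke the ``decay-only'' strong spatial mixing theorem of Goldberg, Martin and Paterson~\cite{goldberg2005strong}: for triangle-free graphs of maximum degree $\Delta$, the Gibbs distribution on proper $q$-colorings has strong spatial mixing with exponential rate when $q>\alpha\Delta-\gamma$. Since for spin systems conditioning on an edge-configuration of $\mathcal{I}_G$ is the same as conditioning on a vertex-coloring of $G$, and $\mathrm{dist}_{\mathcal{I}_G}$ is within a constant factor of $\mathrm{dist}_G$, the SSM of Definition~\ref{def-SSM} for this Holant family follows at once. With all hypotheses verified, Theorem~\ref{thm-FPTAS} delivers the claimed FPTAS — concretely: self-reduce, truncate each marginal to an $O(\log n)$-ball, which by Theorem~\ref{thm-local-tw} has treewidth $O(\log n)$, and evaluate it exactly via Theorem~\ref{thm-alg-fpt}. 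The only obstacle is bookkeeping: confirming that the incidence-graph encoding transports the classical coloring SSM to Definition~\ref{def-SSM} and that restricting to the triangle-free bounded-degree subfamily spoils neither apex-minor-freeness nor tractable search; neither involves any real difficulty.
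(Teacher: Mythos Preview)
Your proposal is correct and follows the paper's approach: the paper derives this corollary simply by combining Theorem~\ref{thm-FPTAS} with the strong spatial mixing result of~\cite{goldberg2005strong}, noting only that the single-site SSM proved there implies the SSM of Definition~\ref{def-SSM} on finite graphs. You have filled in more of the routine verification (regularity of the constraint functions, preservation of apex-minor-freeness under the incidence-graph encoding, tractable search via $q\ge\Delta+1$, and the distance/conditioning correspondence between the spin and Holant formulations), but the skeleton is identical.
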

Note that although the original result of~\cite{goldberg2005strong} is proved for single-site strong spatial mixing where the boundaries differ on only one vertex, it implies our definition of strong spatial mixing on any finite graphs.
}

\newcommand{\Disc}[2]{\mathrm{Disc}_{{#1}}\left({#2}\right)}

\newcommand{\vbond}{\partial}
\newcommand{\ebond}{\delta}
\newcommand{\bmu}[2]{\mu_{{#1}}^{{#2}}}
\newcommand{\mbmu}[3]{\mu_{{#1},{#3}}^{{#2}}}
\newcommand{\recolor}[3]{{#1}_{#2}^{{#3}}}

\section{Correlation Decay}
\ifabs{
In this section we apply the recursive coupling technique~\cite{goldberg2005strong} to Holant problems, and prove strong spatial mixing for subgraphs world~\cite{jerrum1993polynomial} and ferromagnetic Potts model. The following algorithmic results are implied by Theorem \ref{thm-FPTAS}.
}{
In this section we apply the recursive coupling technique~\cite{goldberg2005strong} to Holant problems, and prove strong spatial mixing for subgraphs world~\cite{jerrum1993polynomial} and ferromagnetic Potts model. The algorithmic implications of these correlation decay results are presented in the end of this section.

\subsection{Recursive coupling on Holant Problems}
Consider a Holant problem $\Holant(\mathcal{G},\mathcal{F})$ and an instance $\Omega=(G(V,E),\{f_v\}_{v\in V})$. Let $R\subseteq E$, called \concept{region}. Define $\ebond R=\{uv\in E\mid uv\not\in R,\exists uw\in R\}$ the edge boundary of $R$. Define $V_R=\{v\in V\mid \exists uv\in R\}$. 

A \concept{boundary configuration} of $R$, is a $\sigma\in[q]^{\ebond R}$. For every boundary configuration $\sigma\in[q]^{\ebond R}$ and a configuration $\eta\in[q]^R$ of the region $R$, define the \concept{regional weight} as
\[
w_R^\sigma(\eta)=\prod_{v\in V_R}f_v(\eta\mid_{R(v)}\sigma\mid_{(\ebond R)(v)}),
\]
where $\eta\mid_{R(v)}$ is the restriction of $\eta$ on the edges in $R$ incident to $v$, $\sigma\mid_{(\ebond R)(v)}$ is the restriction of $\sigma$ on edges in $\ebond R$ incident to $v$, and  $f_v(\sigma\mid_{R(v)}\eta\mid_{(\ebond R)(v)})$ evaluates $f_v$ on the concatenation of them.

We say that a boundary configuration $\sigma\in[q]^{\ebond R}$ is \concept{$R$-feasible} if there exists an $\eta\in[q]^R$ such that $w_R^\sigma(\eta)>0$. For $R$-feasible boundary configuration $\sigma\in[q]^{\ebond R}$, a \concept{regional Gibbs measure} $\bmu{R}{\sigma}$ over $[q]^R$ can be defined as that $\bmu{R}{\sigma}(\eta)=\frac{w_R^\sigma(\eta)}{\sum_{\pi\in[q]^R}w_R^\sigma(\pi)}$ for each $\eta\in[q]^R$. For $R'\subseteq R$, let $\mbmu{R}{\sigma}{R'}$ denote the marginal distribution of $\bmu{R}{\sigma}$ on $R'$, and we write that $\mbmu{R}{\sigma}{e}=\mbmu{R}{\sigma}{\{e\}}$.

\begin{definition}
Let $R\subseteq E$, $R'\subseteq R$, and $\sigma,\tau\in[q]^{\ebond R}$ be two $R$-feasible boundary configurations. Let $\Psi(R,\sigma,\tau)$ be a coupling of $\bmu{R}{\sigma},\bmu{R}{\tau}$. Define the discrepancy of $\Psi(R,\sigma,\tau)$ on region $R'\subseteq R$ as
  \[
  \Disc{\Psi(R,\sigma,\tau)}{R'}=\Pr_{(\eta,\eta')\sim\Psi(R,\sigma,\tau)}[\eta\mid_{R'}\ne\eta'\mid_{R'}]
  \]
If $R'=\{e\}$, we write that $\Disc{\Psi(R,\sigma,\tau)}{e}=\Disc{\Psi(R,\sigma,\tau)}{\{e\}}$.
\end{definition}

\begin{definition}
For any two $R$-feasible boundary configurations $\sigma,\tau\in[q]^{\ebond R}$ differing on $\Delta\subseteq\ebond R$, a sequence of $R$-feasible boundary configurations $\sigma_1,\sigma_2,\dots,\sigma_t$ is called a \concept{feasible path} from $\sigma$ to $\tau$ if $\sigma=\sigma_1, \tau=\sigma_t$ and $\sigma_i,\sigma_{i+1}$ differ only at one edge $e\in\Delta$ for each $1\le i<t$. Let $T(\sigma,\tau)$ be the minimum such $t$, or be $\infty$ if no such path exists.
\end{definition}

\begin{lemma}\label{lemma-holant-ssssm}
Let $\Lambda\subset E$ and $\sigma,\tau\in[q]^{\Lambda}$ be two feasible configurations differing on $\Delta\subseteq\Lambda$. Let $R=E\setminus\Lambda$ and $e\in R$. 
There exist two $R$-feasible boundary configurations $\sigma',\tau'\in[q]^{\ebond R}$ differing only on edges in $\Delta$ such that
\begin{align*}
\left\|\mu_e^\sigma-\mu_e^\tau\right\|_{\mathrm{TV}}
&=
\left\|\mbmu{R}{\sigma'}{e}-\mbmu{R}{\tau'}{e}\right\|_{\mathrm{TV}}
\le
T(\sigma',\tau')
\cdot
\max_{\substack{\sigma_1,\sigma_2\in[q]^{\ebond R}\\\text{ differ on an }e'\in\Delta}}\|\mbmu{R}{\sigma_1}{e}-\mbmu{R}{\sigma_2}{e}\|_{\mathrm{TV}}\\
&\le
T(\sigma',\tau')
\cdot
\max_{\substack{\sigma_1,\sigma_2\in[q]^{\ebond R}\\\text{ differ on an }e'\in\Delta}}\Disc{\Psi(R,\sigma_1,\sigma_2)}{e},
\end{align*}
for arbitrary coupling $\Psi(R,\sigma_1,\sigma_2)$ of $\bmu{R}{\sigma_1},\bmu{R}{\sigma_2}$.
\end{lemma}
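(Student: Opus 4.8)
The plan is to reduce the conditional marginals $\mu_e^\sigma, \mu_e^\tau$ on the whole instance $\Omega$ to \emph{regional} Gibbs measures on the region $R = E\setminus\Lambda$, and then control the difference between two regional measures along a feasible path of single-edge changes, with each single-edge step bounded by a coupling discrepancy. First I would observe that once the configuration on $\Lambda$ is fixed to $\sigma$ (resp.\ $\tau$), the weight $w(\cdot)$ factors through the edges of $R$: writing $\eta\in[q]^R$ for the free part, the conditional weight is exactly the regional weight $w_R^{\sigma'}(\eta)$ where $\sigma'\in[q]^{\ebond R}$ is the restriction of $\sigma$ to the edge boundary $\ebond R$ (edges in $\Lambda$ incident to a vertex touched by $R$); edges of $\Lambda$ not in $\ebond R$ contribute a fixed constant factor that cancels in the ratio defining $\mu_e$. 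Hence $\mu_e^\sigma = \mbmu{R}{\sigma'}{e}$ and $\mu_e^\tau = \mbmu{R}{\tau'}{e}$, with $\sigma',\tau'$ $R$-feasible (feasibility of $\sigma,\tau$ on $\Lambda$ guarantees an $\eta$ with positive regional weight) and differing only on $\Delta\cap\ebond R\subseteq\Delta$. This establishes the first equality.

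Next I would take a feasible path $\sigma' = \pi_1, \pi_2, \dots, \pi_{T(\sigma',\tau')} = \tau'$ of $R$-feasible boundary configurations, each consecutive pair differing on a single edge $e'\in\Delta$; such a path of minimum length $T(\sigma',\tau')$ exists by hypothesis (if it did not, $T(\sigma',\tau')=\infty$ and the bound is vacuous). By the triangle inequality for total variation distance,
\[
\left\|\mbmu{R}{\sigma'}{e}-\mbmu{R}{\tau'}{e}\right\|_{\mathrm{TV}}
\le \sum_{i=1}^{T(\sigma',\tau')-1}\left\|\mbmu{R}{\pi_i}{e}-\mbmu{R}{\pi_{i+1}}{e}\right\|_{\mathrm{TV}},
\]
and since each term is a difference of marginals at $e$ for two boundary configurations differing on a single edge of $\Delta$, each is at most the stated maximum; summing $T(\sigma',\tau')-1 < T(\sigma',\tau')$ terms gives the middle inequality. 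Finally, for the last inequality I would invoke the coupling lemma: for any coupling $\Psi(R,\sigma_1,\sigma_2)$ of $\bmu{R}{\sigma_1},\bmu{R}{\sigma_2}$, the total variation distance between the marginals at $e$ is bounded by $\Pr_{(\eta,\eta')\sim\Psi}[\eta(e)\ne\eta'(e)] = \Disc{\Psi(R,\sigma_1,\sigma_2)}{e}$, which is immediate from the standard fact that $\|\mu-\nu\|_{\mathrm{TV}} \le \Pr[X\ne Y]$ for any coupling $(X,Y)$ of $\mu,\nu$, applied to the pushforwards onto the single coordinate $e$.

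The only delicate point — and the step I would write most carefully — is the first one: verifying that conditioning on $\sigma_\Lambda$ genuinely collapses the full Gibbs marginal to the regional Gibbs marginal with the right boundary $\sigma'$, i.e.\ that no constraint straddling $\Lambda$ and $R$ is mishandled and that the normalizing constants cancel correctly. Concretely one must check that for a vertex $v$ with incident edges split among $R$, $\ebond R$, and $\Lambda\setminus\ebond R$, the factor $f_v$ evaluated on the conditional configuration equals $f_v$ evaluated on the concatenation $\eta\mid_{R(v)}\,\sigma\mid_{(\ebond R)(v)}$ times (implicitly) the fixed entries from $\Lambda\setminus\ebond R$, and that vertices with no edge in $R$ contribute a fixed factor independent of $\eta$. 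After the cancellation the remaining object is precisely $w_R^{\sigma'}(\eta)/\sum_\pi w_R^{\sigma'}(\pi)$, i.e.\ $\bmu{R}{\sigma'}$, and its marginal at $e$ is $\mbmu{R}{\sigma'}{e}$. Everything after that is routine.
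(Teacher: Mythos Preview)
Your proposal is correct and follows essentially the same approach as the paper's (very terse) proof: take $\sigma',\tau'$ to be the restrictions of $\sigma,\tau$ to $\ebond R$, verify the equality of marginals, apply the triangle inequality along a minimal feasible path, and finish with the coupling lemma. One minor simplification for your ``delicate point'': by the definition of $\ebond R$, every vertex $v\in V_R$ has all of its non-$R$ incident edges in $\ebond R$, so the three-way split among $R$, $\ebond R$, and $\Lambda\setminus\ebond R$ that you worry about never actually occurs at a vertex contributing to $w_R^{\sigma'}$.
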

\begin{proof}
Let $\sigma',\tau'\in[q]^{\ebond R}$ consistent with $\sigma,\tau$ on $\ebond R$ respectively. It is easy to check that $\sigma',\tau'$ satisfy the equation. 
Let $\sigma'=\sigma_1,\sigma_2,\ldots,\sigma_t=\tau'$ be the feasible path from $\sigma'$ to $\tau'$ of length $t=T(\sigma',\tau')$ and $\left\|\mbmu{R}{\sigma'}{e}-\mbmu{R}{\tau'}{e}\right\|_{\mathrm{TV}}$ can be bounded by applying path coupling to $\mbmu{R}{\sigma_i}{e},\mbmu{R}{\sigma_{i+1}}{e}$ for $1\le i<T(\sigma',\tau')$. The last inequality is due to the coupling lemma.
\end{proof}

The above lemma reduce the strong spatial mixing to the discrepancy witnessed by a coupling of $\bmu{R}{\sigma},\bmu{R}{\tau}$ with $\sigma,\tau$ disagreeing at one edge. We then show a way to recursively construct the coupling.  This method is proposed by Goldberg \emph{et al.} in~\cite{goldberg2005strong} on colorings.

\paragraph{The recursive coupling.}
Let $R\subseteq E$ and $e_0\in R$. Let $\sigma,\tau\in[q]^{\ebond R}$ be any two $R$-feasible boundary configurations that differ at only one edge $e\in\ebond R$. 
Let $R(e)$ be set of edges in $R$ incident to $e$. Let $\Psi_{R(e)}(R,\sigma,\tau)$ be a coupling of marginal distributions $\mbmu{R}{\sigma}{R(e)},\mbmu{R}{\tau}{R(e)}$. 
A coupling $\Psi(R,\sigma,\tau)$ of regional Gibbs measures $\bmu{R}{\sigma},\bmu{R}{\tau}$ can be recursively constructed by the local coupling rule $\Psi_{R(e)}(R,\sigma,\tau)$: 
Let $(\eta,\eta')\in[q]^R\times[q]^R$ denote the pair sampled from $\Psi(R,\sigma,\tau)$.

\begin{enumerate}
\item (Base case) If $e_0\in R(e)$, sample $(\eta\mid_{R(e)},\eta'\mid_{R(e)})$ according to $\Psi_{R(e)}(R,\sigma,\tau)$ and arbitrarily sample the rest of $(\eta,\eta')$ conditioning on $(\eta\mid_{R(e)},\eta'\mid_{R(e)})$ as long as $(\eta,\eta')$ is a faithful coupling of $\bmu{R}{\sigma},\bmu{R}{\tau}$. 
If $|R(e)|=0$, in which case $e$ and $e_0$ are disconnected in $G[V_R]$, sample $(\eta,\eta')$ such that $(\eta(e_0),\eta'(e_0))$ is perfectly coupled.
\item (General case) $|R(e)|>0$ and $e_0\not\in R(e)$. Sample $(\eta\mid_{R(e)},\eta'\mid_{R(e)})$ according to $\Psi_{R(e)}(R,\sigma,\tau)$. Let $x,y\in[q]^{R(e)}$ be two configurations that $x=\eta\mid_{R(e)},y=\eta'\mid_{R(e)}$. Construct new region and boundaries as: $R'=R\setminus R(e)$; $\recolor{\sigma}{}{x}\in[q]^{\ebond R'}$ agrees with $\sigma$ on common edges and $\recolor{\sigma}{}{x}\mid_{R(e)}=x$; and $\recolor{\tau}{}{y}\in[q]^{\ebond R'}$ agrees with $\tau$ on common edges and $\recolor{\tau}{}{y}\mid_{R(e)}=y$. The rest of $(\eta,\eta')$ is sampled from a coupling $\Psi(x,y)$ of $\bmu{R'}{\recolor{\sigma}{}{x}}, \bmu{R'}{\recolor{\tau}{}{y}}$. If $x=y$, then $\recolor{\sigma}{}{x}=\recolor{\tau}{}{y}$ and $\Psi(x,y)$ is a perfect coupling. If $x\ne y$, let $\recolor{\sigma}{1}{(x,y)},\dots,\recolor{\sigma}{t}{(x,y)}$ be a feasible path from $\recolor{\sigma}{}{x}$ to $\recolor{\tau}{}{y}$ of length $t=T(\recolor{\sigma}{}{x},\recolor{\tau}{}{y})$. Let $\Psi(x,y)$ be the composition of coupling $\Psi\left(R',\recolor{\sigma}{i}{(x,y)},\recolor{\sigma}{i+1}{(x,y)}\right)$, $i=1,2,\dots,t-1$, in the same manner as path coupling, where each $\Psi\left(R',\recolor{\sigma}{i}{(x,y)},\recolor{\sigma}{i+1}{(x,y)}\right)$ can be recursively defined as $\recolor{\sigma}{i}{(x,y)}$ and $\recolor{\sigma}{i+1}{(x,y)}$ differ at only one edge. 
It is easy to verify that $\Psi(x,y)$ is a coupling of $\bmu{R}{\recolor{\sigma}{}{x}},\bmu{R}{\recolor{\tau}{}{y}}$. This complete the construction of $\Psi(R,\sigma,\tau)$.
\end{enumerate}

The following lemma is similar to the one proved in \cite{goldberg2005strong} for the recursive coupling constructed on spin systems.

\begin{lemma}\label{lem:holantcouple}
For the coupling $\Psi(R,\sigma,\tau)$ constructed as above, we have
\begin{align*}
&\quad\, \Disc{\Psi(R,\sigma,\tau)}{e_0}\\
&\le\sum_{\substack{x,y\in[q]^{R(e)}\\x\ne y}}\Pr_{\substack{(\eta,\eta')\sim\\\Psi_{R(e)}(R,\sigma,\tau)}}[\eta\mid_{R(e)}=x\land\eta'\mid_{R(e)}=y]
\cdot\sum_{i=1}^{T(\recolor{\sigma}{}{x},\recolor{\tau}{}{y})-1}\Disc{\Psi\left(R\setminus R(e),\recolor{\sigma}{i}{(x,y)},\recolor{\sigma}{i+1}{(x,y)}\right)}{e_0}\\
    &\le\Disc{\Psi_{R(e)}(R,\sigma,\tau)}{e}\cdot\max_{\substack{\sigma_1,\sigma_2\in[q]^{\ebond (R\setminus R(e))}\\\text{differ on }R(e)}}T(\sigma_1,\sigma_2)\cdot\max_{\substack{\sigma_1,\sigma_2\in[q]^{\ebond (R\setminus R(e))}\\\text{differ on an }e\in R(e)}}\Disc{\Psi(R\setminus R(e),\sigma_1,\sigma_2)}{e_0}
  \end{align*}
\end{lemma}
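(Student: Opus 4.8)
The plan is to unfold the recursive definition of the coupling $\Psi(R,\sigma,\tau)$ and bound the discrepancy at $e_0$ by tracking exactly when the two sides can disagree. By construction of the local coupling rule, the pair $(\eta,\eta')$ first has its restriction to $R(e)$ sampled from $\Psi_{R(e)}(R,\sigma,\tau)$, and then the remainder is sampled from $\Psi(x,y)$, a coupling of $\bmu{R'}{\recolor{\sigma}{}{x}},\bmu{R'}{\recolor{\tau}{}{y}}$ on the smaller region $R'=R\setminus R(e)$. Since $e_0\notin R(e)$ in the general case (and the base case is handled trivially—either $e_0\in R(e)$ so $e_0$ is perfectly coupled via $\Psi_{R(e)}$, or $e_0$ is disconnected from $e$ and perfectly coupled), the event $\{\eta(e_0)\ne\eta'(e_0)\}$ is determined entirely by the second-stage sampling. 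Conditioning on the value $(x,y)$ of $(\eta\mid_{R(e)},\eta'\mid_{R(e)})$: if $x=y$, then $\recolor{\sigma}{}{x}=\recolor{\tau}{}{y}$, the coupling $\Psi(x,y)$ is perfect, and no discrepancy at $e_0$ arises. If $x\ne y$, then $\Psi(x,y)$ is the path-coupling composition along a feasible path $\recolor{\sigma}{1}{(x,y)},\dots,\recolor{\sigma}{t}{(x,y)}$ of length $t=T(\recolor{\sigma}{}{x},\recolor{\tau}{}{y})$, and by the union bound over the $t-1$ steps of the path coupling,
\[
\Pr[\eta(e_0)\ne\eta'(e_0)\mid x,y]\le\sum_{i=1}^{T(\recolor{\sigma}{}{x},\recolor{\tau}{}{y})-1}\Disc{\Psi\left(R\setminus R(e),\recolor{\sigma}{i}{(x,y)},\recolor{\sigma}{i+1}{(x,y)}\right)}{e_0}.
\]
Summing over all $x\ne y$ weighted by $\Pr_{\Psi_{R(e)}(R,\sigma,\tau)}[\eta\mid_{R(e)}=x\land\eta'\mid_{R(e)}=y]$ yields the first displayed inequality of the lemma.

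For the second inequality, I would bound each of the three factors separately. The sum $\sum_{x\ne y}\Pr[\eta\mid_{R(e)}=x\land\eta'\mid_{R(e)}=y]$ is exactly $\Pr[\eta\mid_{R(e)}\ne\eta'\mid_{R(e)}]=\Disc{\Psi_{R(e)}(R,\sigma,\tau)}{R(e)}$, and since the boundary configurations $\sigma,\tau$ differ only at the single edge $e\in R(e)$, this discrepancy is at most $\Disc{\Psi_{R(e)}(R,\sigma,\tau)}{e}$ in the notation used—more precisely one passes to the coupling on the single edge $e$; the key point is that only the $e$-coordinate of the region $R(e)$ is what disagreement originates from, so the relevant quantity is $\Disc{\Psi_{R(e)}(R,\sigma,\tau)}{e}$. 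Next, every consecutive pair $\recolor{\sigma}{i}{(x,y)},\recolor{\sigma}{i+1}{(x,y)}$ on the feasible path differs at exactly one edge of $R(e)$, so each term $\Disc{\Psi(R\setminus R(e),\recolor{\sigma}{i}{(x,y)},\recolor{\sigma}{i+1}{(x,y)})}{e_0}$ is at most $\max_{\sigma_1,\sigma_2}\Disc{\Psi(R\setminus R(e),\sigma_1,\sigma_2)}{e_0}$ over pairs $\sigma_1,\sigma_2\in[q]^{\ebond(R\setminus R(e))}$ differing on one edge in $R(e)$. Finally, the number of terms $T(\recolor{\sigma}{}{x},\recolor{\tau}{}{y})-1$ is at most $\max_{\sigma_1,\sigma_2}T(\sigma_1,\sigma_2)$ over pairs differing on $R(e)$. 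Multiplying these three bounds gives the second inequality.

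The main obstacle, and the place where care is needed, is the bookkeeping between discrepancy measured on the edge set $R(e)$ versus on the single disagreement edge $e$: one must argue that because $\sigma$ and $\tau$ differ only at $e$, the coupling $\Psi_{R(e)}$ can be (and is) taken so that the probability of $\eta\mid_{R(e)}\ne\eta'\mid_{R(e)}$ is controlled by the single-edge discrepancy at $e$, and that the subsequent feasible paths $\recolor{\sigma}{i}{(x,y)}\to\recolor{\sigma}{i+1}{(x,y)}$ indeed always exist and have the single-edge-difference property claimed. I would also need to double-check the base cases: when $e_0\in R(e)$, the discrepancy at $e_0$ is absorbed into $\Disc{\Psi_{R(e)}(R,\sigma,\tau)}{e}$ (taking the path length factor as $1$ and noting $R\setminus R(e)$ contributes nothing), and when $|R(e)|=0$ the two measures $\bmu{R}{\sigma},\bmu{R}{\tau}$ actually agree on the component of $e_0$ so the discrepancy at $e_0$ is zero. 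A final subtlety is verifying that $\Psi(x,y)$ as defined really is a valid coupling of $\bmu{R'}{\recolor{\sigma}{}{x}},\bmu{R'}{\recolor{\tau}{}{y}}$, which follows because path coupling composes couplings of adjacent distributions into a coupling of the endpoints; this is routine but should be stated.
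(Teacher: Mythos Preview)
Your argument is correct and is precisely what the paper intends: its own proof is the single sentence ``The lemma follows directly from our construction of $\Psi(R,\sigma,\tau)$,'' and your unfolding of the recursion (condition on $(x,y)$, perfect coupling when $x=y$, union bound along the path-coupling steps when $x\neq y$, then replace each factor by its maximum) is exactly the computation that sentence is summarizing.

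One small point of confusion you flag yourself: in the second inequality the quantity written as $\Disc{\Psi_{R(e)}(R,\sigma,\tau)}{e}$ does not literally make sense, since $e\in\ebond R$ while $\Psi_{R(e)}$ is a coupling on $[q]^{R(e)}\times[q]^{R(e)}$ and $e\notin R(e)$. Your attempt to interpret it as ``the $e$-coordinate of $R(e)$'' is therefore off. This is a typo in the statement; the intended quantity is $\Disc{\Psi_{R(e)}(R,\sigma,\tau)}{R(e)}=\Pr[\eta\mid_{R(e)}\neq\eta'\mid_{R(e)}]$, which is exactly $\sum_{x\neq y}\Pr[\eta\mid_{R(e)}=x\wedge\eta'\mid_{R(e)}=y]$ as you compute, and this is what the paper actually uses in the subgraphs-world application immediately afterward. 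With that reading, no further inequality between ``discrepancy on $R(e)$'' and ``discrepancy on $e$'' is needed, and your derivation of the second bound goes through cleanly.
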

\begin{proof}
The lemma follows directly from our construction of $\Psi(R,\sigma,\tau)$.
\end{proof}

A standard choice of $\Psi_{R(e)}(R,\sigma,\tau)$ is the one that $\Pr_{(\eta,\eta')\sim\Psi_{R(e)}(R,\sigma,\tau)}\left[\eta\mid_{R(e)}=\eta'\mid_{R(e)}\right]$ is maximized, i.e.~$\Disc{\Phi_{R(e)}(R,\sigma,\tau)}{R(e)}$ is minimized. 

\subsection{The subgraphs world problem}
The subgraphs world model used in~\cite{jerrum1993polynomial} for developing FPRAS for the ferromagnetic Ising model, is a counting problem computationally equivalent to the Ising model under holographic transformation.
\begin{definition}[Subgraphs world \cite{jerrum1993polynomial}]
The \concept{subgraphs world} with parameters $(\lambda,\mu)$ defined as follows. Let $G=(V,E)$ be an undirected graph. The subgraphs world partition function is defined as:
\[
  Z_{\mathrm{sub}}(G)=\sum_{X\subseteq E}\mu^{|\mathrm{odd}(X)|}\lambda^{|X|},
\]
where $\mathrm{odd}(X)$ denotes the set of vertices with odd degree in the subgraph $(V,X)$.
\end{definition}

The subgraphs world with parameter $(\lambda,\mu)$ can be interpreted as a Holant problem on incident graph $\mathcal{I}_G$ as follows. 
The incident graph $\mathcal{I}_G$ has left vertex set $V$ and right vertex set $E$, and for each $v\in V$ and $e\in E$, $(v,e)$ is an edge in $\mathcal{I}_G$ if $e$ is incident to $v$ in $G$. The function on each left vertex $v$ is $[1, \mu,1,\mu,\ldots]$ and the function on each right vertex $e$ is $[1,0,\lambda]$. 
Let $\Omega$ be the Holant instance defined as above.
It is easy to verify that all functions in $\Omega$ are \regular{3} and $Z_{\mathrm{sub}}(G)=\holant(\Omega)$. 

For convenience of analysis, we consider the following equivalent Holant problem which is defined on the original graph $G$ instead of the incidence graph.
Let $\Omega'=(G(V,E),\{f_v\}_{v\in V})$ be a Holant instance where each $f_v=[f_0,f_1,\dots,f_{\deg(v)}]$ has that $f_k=\mu\lambda^{k/2}$ if $k$ is odd and $f_k=\lambda^{k/2}$ if $k$ is even. It is easy to see that $\holant(\Omega)=\holant(\Omega')$ and also the two Holant problems have exact the same Gibbs measure. Thus it is sufficient to analyze the correlation decay on this new Holant problem.


\begin{theorem}\label{thm-subgraph-world-decay}
Let $\Holant(\mathcal{G},\mathcal{F})$ be defined by the subgraphs world of parameter $(\mu,\lambda)$ with $0<\mu,\lambda<1$ on graphs with degree bound $\Delta$.
If $\Delta<\frac{(1+\lambda\mu^2)^2}{1-\mu^2}$, then $\Holant(\mathcal{G},\mathcal{F})$ has strong spatial mixing.
\end{theorem}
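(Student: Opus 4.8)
The strategy is to apply the recursive coupling machinery developed in Lemma~\ref{lemma-holant-ssssm} and Lemma~\ref{lem:holantcouple} to the specific Holant instance $\Omega'=(G(V,E),\{f_v\}_{v\in V})$ with $f_v=[f_0,f_1,\dots,f_{\deg(v)}]$, $f_k=\mu\lambda^{k/2}$ ($k$ odd) and $f_k=\lambda^{k/2}$ ($k$ even). By Lemma~\ref{lemma-holant-ssssm}, it suffices to show that a single-edge disagreement on the boundary of a region $R$ produces an exponentially-decaying discrepancy at a target edge $e_0$; by Lemma~\ref{lem:holantcouple}, this reduces to bounding, for the standard optimal coupling $\Psi_{R(e)}(R,\sigma,\tau)$ of the marginals at the edges $R(e)$ incident to the disagreed edge $e$, the product $\Disc{\Psi_{R(e)}(R,\sigma,\tau)}{e}\cdot\max T(\sigma_1,\sigma_2)$, and then iterating. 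Since $\deg(v)\le\Delta$ and $e$ has two endpoints, $|R(e)|\le 2(\Delta-1)$, and $T(\sigma_1,\sigma_2)$ (the feasible-path length between boundary configurations differing on $R(e)$) is at most $|R(e)|\le 2(\Delta-1)$ — here one should check that because $0<\mu,\lambda<1$ all boundary configurations are feasible (no zero entries in the $f_v$), so feasible paths always exist and $T$ equals the Hamming distance. Thus the whole argument hinges on the key quantity: the single-site coupling discrepancy of the marginal at one edge when one neighboring boundary edge is flipped.

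\textbf{The key estimate.} The heart of the proof is to compute, for an edge $e_0=uv$ of $G$ and a neighboring edge $e$ incident to (say) $u$, the total variation distance $\|\mbmu{R}{\sigma}{e_0}-\mbmu{R}{\tau}{e_0}\|_{\mathrm{TV}}$ where $\sigma,\tau$ differ only at $e$. Write the marginal at $e_0$ in terms of the ``message'' from the rest of the graph: conditioning on all other incident edges of $u$, the state of $e_0$ is a Bernoulli-type variable whose bias is a ratio of products of the $f_v$-values aggregated via recursion. Concretely one sets up a recursion $R_{e_0} = F(\{R_{e}\}_{e\sim e_0})$ expressing the ratio $\Pr[\eta(e_0)=0]/\Pr[\eta(e_0)=1]$ at $e_0$ in terms of the corresponding ratios at the edges one step further out, using the structure $f_k=\lambda^{k/2}$ or $\mu\lambda^{k/2}$ — the parity dependence means the ``message'' is effectively $2$-dimensional but collapses nicely because $f_{k+1}/f_k\in\{\mu\lambda^{1/2},\lambda^{1/2}/\mu\}$ alternates. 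One then bounds the derivative of $F$ (or directly the discrepancy propagation) and shows the contraction factor per recursion step, multiplied by the branching degree $\le\Delta-1$ and the path-length factor $\le 2(\Delta-1)$, is strictly less than $1$ precisely when $\Delta<\frac{(1+\lambda\mu^2)^2}{1-\mu^2}$. The algebra producing this exact threshold — optimizing the worst-case message configuration in the derivative bound — is the technical core; I expect the extremal message to be the one where all ``incoming ratios'' are at their boundary-induced extremes, and the $(1+\lambda\mu^2)^2$ numerator to come from evaluating $f$ and its increments at an all-even versus all-odd local configuration while $1-\mu^2$ measures the gap in the $[1,0,\lambda]$-type (parity) constraint.

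\textbf{Assembling the bound.} Once the one-step contraction is established, iterate Lemma~\ref{lem:holantcouple} down the recursion tree rooted at $e_0$: each edge along a path from $e_0$ to the disagreement contributes a factor $<1$, there are at most $(\text{branching})^{\mathrm{dist}}$ paths, and the net effect is $\Disc{\Psi(R,\sigma,\tau)}{e_0}\le \mathrm{poly}(|V|)\cdot\rho^{\mathrm{dist}(e_0,e)}$ for some $\rho<1$ under the stated degree condition. Combining with Lemma~\ref{lemma-holant-ssssm} (the extra factor $T(\sigma',\tau')\le|\Delta|\le|E|$ is absorbed into the polynomial prefactor) yields the $\mathrm{Poly}(|V|)\cdot\exp(-\Omega(\mathrm{dist}(e,\Delta)))$ bound required by Definition~\ref{def-SSM}, establishing strong spatial mixing for $\Holant(\mathcal{G},\mathcal{F})$.

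\textbf{Main obstacle.} The principal difficulty is the sharp analysis of the single-step coupling discrepancy: setting up the correct recursion for the edge-marginal ratio that respects the parity-dependent signature $[1,\mu,1,\mu,\dots]$, identifying the worst-case configuration of incoming messages, and carrying out the optimization to land exactly on the threshold $\frac{(1+\lambda\mu^2)^2}{1-\mu^2}$ rather than a weaker bound. Everything else — feasibility of all boundary configurations, the bound $|R(e)|\le 2(\Delta-1)$, $T\le$ Hamming distance, and the final summation over the recursion tree — is routine given the framework already set up in the excerpt.
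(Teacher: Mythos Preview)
Your high-level framework is right --- use Lemma~\ref{lemma-holant-ssssm} and iterate Lemma~\ref{lem:holantcouple} --- but your ``key estimate'' section goes in the wrong direction and misses the actual mechanism. You propose to set up a Weitz-style recursion $R_{e_0}=F(\{R_e\})$ on single-edge marginal ratios and bound derivatives of $F$. That is a different technique (the computation-tree/SAW-tree paradigm), it does not factor on graphs with cycles without additional constructions, and you never actually carry out the optimization; you only ``expect'' the threshold to appear. What Lemma~\ref{lem:holantcouple} actually needs is a bound on the \emph{joint} coupling discrepancy $\Disc{\Psi_{R(e)}(R,\sigma,\tau)}{R(e)}$ over the whole block $R(e)$, not a per-edge marginal estimate, and your single-edge/derivative route does not deliver that directly.

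The paper's argument is much more direct and exploits the one structural feature of the subgraphs world you mention but do not use: the signature $[1,\mu,1,\mu,\ldots]$ depends only on \emph{parity}. Flipping the boundary edge $e$ from $0$ to $1$ multiplies the weight of a configuration $\eta$ by $\mu$ or by $1/\mu$ according to whether the number $n_\eta$ of $1$-edges in $R(e)$ is even or odd. Grouping configurations by that parity into total weights $w_e$ and $w_o$, the optimal coupling of $\mbmu{R}{\sigma}{R(e)}$ and $\mbmu{R}{\tau}{R(e)}$ has discrepancy exactly
\[
\frac{w_e w_o(1-\mu^2)}{(w_e+w_o)(w_e\mu^2+w_o)}=\frac{1-\mu^2}{(1+w_o/w_e)(1+\mu^2\, w_e/w_o)}.
\]
A one-line bijection (flip the state of one fixed edge in $R(e)$) shows $\lambda\le w_e/w_o\le 1/(\lambda\mu^2)$, which immediately gives $\Disc{\Psi_{R(e)}(R,\sigma,\tau)}{R(e)}\le \frac{1-\mu^2}{(1+\lambda\mu^2)^2}$. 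Multiplying by $T\le |R(e)|$ and invoking the hypothesis $\Delta<\frac{(1+\lambda\mu^2)^2}{1-\mu^2}$ yields a contraction strictly below $1$ per step; iterating along a path to $e_0$ gives the exponential decay. There is no recursion on ratios and no derivative bound --- the threshold $\frac{(1+\lambda\mu^2)^2}{1-\mu^2}$ falls out of the explicit $w_e,w_o$ formula plus the bijection bound, which is precisely the idea you are missing.
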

 \begin{proof}
Let $R\subseteq E$ be a region and $\sigma,\tau\in\{0,1\}^{\ebond R}$ be two $R$-feasible boundary configurations differing at $e\in\ebond R$ satisfying $\sigma(e)=0$ and $\tau(e)=1$. The regional weights $w^\sigma_R,w^\tau_R$ and regional Gibbs measures $\bmu{R}{\sigma},\bmu{R}{\tau}$ can be defined accordingly.

Let $\Psi_{R(e)}(R,\sigma,\tau)$ be the coupling of $\mbmu{R}{\sigma}{R(e)}, \mbmu{R}{\sigma}{R(e)}$ such that $\Pr_{(\eta,\eta')\sim\Psi_{R(e)}(R,\sigma,\tau)}[\eta\mid_{R(e)}=\eta'\mid_{R(e)}]$ is maximized, i.e.~the discrepancy $\Disc{\Psi_{R(e)}(R,\sigma,\tau)}{R(e)}$ is minimized. We first give an upper bound on $\Disc{\Psi_{R(e)}(R,\sigma,\tau)}{R(e)}$.

Let $\eta\in\{0,1\}^R$ be a configuration on $R$, we use $n_\eta$ to denote the number of edges in $R(e)$ that are assigned to $1$ by $\eta$. Let 
\begin{align*}
w_e=\sum_{\substack{\eta\in\{0,1\}^R\\n_\eta\mbox{\scriptsize is even}}}w_R^\sigma(\eta),
\qquad
w_o=\sum_{\substack{\eta\in\{0,1\}^R\\n_\eta\mbox{\scriptsize is odd}}}w_R^\sigma(\eta),
\qquad
w_\eta=w^\sigma_R(\eta).
\end{align*}
Then for every $\eta\in\{0,1\}^R$, we have
$\bmu{R}{\sigma}(\eta)=\frac{w_\eta}{w_e+w_o}$ and 
\[
\bmu{R}{\tau}(\eta)=\left\{
  \begin{array}{ll}
    \frac{w_\eta\cdot \mu}{w_e\cdot\mu+w_o/\mu}, &\mbox{ if $n_\eta$ is even}\\
    \frac{w_\eta/\mu}{w_e\cdot\mu+w_o/\mu}, &\mbox{ if $n_\eta$ is odd}
  \end{array}\right.
\]
It holds that
\begin{align*}
&\mbox{if }n_\eta\mbox{ is even,}
&\bmu{R}{\sigma}(\eta)-\bmu{R}{\tau}(\eta)=\frac{w_\eta}{w_e+w_o}-\frac{w_\eta\cdot \mu}{w_e\cdot\mu+w_o/\mu}=w_\eta\cdot\frac{w_o(1-\mu^2)}{(w_e+w_o)(w_e\cdot\mu^2+w_o)}>0;\\
&\mbox{if }n_\eta\mbox{ is odd,}
&\bmu{R}{\sigma}(\eta)-\bmu{R}{\tau}(\eta)=\frac{w_\eta}{w_e+w_o}-\frac{w_\eta/\mu}{w_e\cdot\mu+w_o/\mu}=w_\eta\cdot\frac{w_e(\mu^2-1)}{(w_e+w_o)(w_e\cdot\mu^2+w_o)}<0.
\end{align*}
Thus in the coupling $\Psi_{R(e)}(R,\sigma,\tau)$, we have
 \begin{align}
   \Disc{\Psi_{R(e)}(R,\sigma,\tau)}{R(e)}
   &=\sum_{\eta:\mbox{ $n_\eta$ is even}} \bmu{R}{\sigma}(\eta)-\bmu{R}{\tau}(\eta)
   =\sum_{\eta:\mbox{ $n_\eta$ is even}}w_\eta\cdot\frac{w_o(1-\mu^2)}{(w_e+w_o)(w_e\cdot\mu^2+w_o)}\notag\\
   &=\frac{w_ew_o(1-\mu^2)}{(w_e+w_o)(w_e\cdot\mu^2+w_o)}
   =\frac{1-\mu^2}{(1+w_o/w_e)(1+\mu^2\cdot w_e/w_o)}.\label{eq:subgraph-disc}
 \end{align}
We then show that $\lambda \le \frac{w_e}{w_o}\le\frac{1}{\lambda\mu^2}$.
We assume a total order on all edges. For any $\eta$ with even $n_\eta$, let $\phi(\eta)$ be the configuration resulting from flipping the state of $\eta$ on the first edge in $R(e)$. Note that $\phi$ is a bijection between configurations in $\{0,1\}^R$ with even $n_\eta$ and those with odd $n_\eta$. 
It is easy to check that 
\begin{align*}
\mbox{if }n_\eta\mbox{ is even,}
\quad w_\eta\ge \lambda w_{\phi(\eta)};
\quad\mbox{and }
\mbox{if }n_\eta\mbox{ is odd,}
\quad w_{\eta}\ge \lambda\mu^2\cdot w_{\phi^{-1}(\eta)}.
\end{align*}
Combining with the fact that $\phi$ is a bijection, we prove that $\lambda \le \frac{w_e}{w_o}\le\frac{1}{\lambda\mu^2}$. Substituting this into~\eqref{eq:subgraph-disc}, we have $\Disc{\Psi_{R(e)}(R,\sigma,\tau)}{R(e)}\le\frac{1-\mu^2}{(1+\lambda\mu^2)^2}$. And since $\mu,\lambda>0$, all boundary configurations are $R$-feasible, thus for any boundary configurations $\sigma',\tau'$ differing on $t$ edges, we have $T(\sigma',\tau')= t$, i.e.~we can migrate from one boundary configuration to another by modifying one edge at a time without violating the feasibility during the process.
Therefore, 
if $\Delta<\frac{(1+\lambda\mu^2)^2}{1-\mu^2}$, then 
\[
\Disc{\Psi_{R(e)}(R,\sigma,\tau)}{R(e)}\cdot\max_{\substack{\sigma_1,\sigma_2\in[q]^{\ebond (R\setminus R(e))}\\\text{differ on }R(e)}}T(\sigma_1,\sigma_2)\le \frac{1-\mu^2}{(1+\lambda\mu^2)^2}\cdot|R(e)|<\frac{1}{\Delta}\cdot \Delta=1.
\]
For any $e_0\in R$, we can apply the recursion in Lemma \ref{lem:holantcouple} for $\mathrm{dist}(e,e_0)$ many times where $\mathrm{dist}(e,e_0)$ denotes the distance between $e$ and $e_0$, thus $\Disc{\Psi(R,\sigma,\tau)}{e_0}=\exp(\Omega(-\mathrm{dist}(e,e_0)))$. Then applying Lemma \ref{lemma-holant-ssssm}, since $T(\sigma',\tau')=\mathrm{poly}(n)$ for any $\sigma',\tau'$, we have the strong spatial mixing.
 \end{proof}

The ferromagnetic Ising model is a spin system specified by $\Phi_E:\{0,1\}^2\to\mathbb{R}^+$ and $\Phi_V:\{0,1\}\to\mathbb{R}^+$ such that $\Phi_E(x,y)=a>1$ if $x=y$ and $\Phi_E(x,y)=1$ if otherwise;,and $\Phi_V(x)=b>0$ if $x=1$ and $\Phi_V(x)=1$ if $x=0$. We call $(a,b)$ the parameters of the system. 

The Ising model can also be specified by the inverse temperature $\beta$ and external field $B$ as follows. Given a graph $G=(V,E)$, the partition function is defined as
\[
Z_{\mathrm{Ising}}(G)=\sum_{\sigma\in\{-1,1\}^V}\prod_{ij\in E}\exp(\beta\sigma(i)\sigma(j))\prod_{i\in V}\exp(B\sigma(i))
\]

\begin{theorem}[Jerrum and Sinclair~\cite{jerrum1993polynomial}]
Let $G(V,E)$ be a graph. Let $\lambda=\frac{a-1}{a+1}=\tanh\beta,\mu=|\frac{b-1}{b+1}|=\tanh B$, then
\[
  Z_\mathrm{Ising}(G)=M_G\cdot Z_{\mathrm{sub}}(G),
\]
for some $M_G$ which can be computed in polynomial time.
\end{theorem}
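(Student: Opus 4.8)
The plan is to reproduce the classical high-temperature (subgraphs) expansion of the Ising partition function. Working with the $(\beta,B)$ form $Z_{\mathrm{Ising}}(G)=\sum_{\sigma\in\{-1,1\}^V}\prod_{ij\in E}e^{\beta\sigma(i)\sigma(j)}\prod_{i\in V}e^{B\sigma(i)}$, I would first apply the elementary identity $e^{\theta x}=\cosh\theta\,(1+x\tanh\theta)$, valid for $x\in\{-1,1\}$: using it with $\theta=\beta$ on each edge factor $e^{\beta\sigma(i)\sigma(j)}$ and with $\theta=B$ on each vertex factor $e^{B\sigma(i)}$ pulls the global constant $(\cosh\beta)^{|E|}(\cosh B)^{|V|}$ out front and rewrites $Z_{\mathrm{Ising}}(G)$ as that constant times $\sum_{\sigma\in\{-1,1\}^V}\prod_{ij\in E}(1+\lambda\,\sigma(i)\sigma(j))\prod_{i\in V}(1+\mu'\,\sigma(i))$, where $\lambda=\tanh\beta$ and $\mu'=\tanh B$. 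I keep the signed $\mu'$ for now and reconcile it with $\mu=|\tanh B|$ at the end.

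Next I would expand both products over subsets. Writing $\prod_{ij\in E}(1+\lambda\sigma(i)\sigma(j))=\sum_{X\subseteq E}\lambda^{|X|}\prod_{ij\in X}\sigma(i)\sigma(j)$ and $\prod_{i\in V}(1+\mu'\sigma(i))=\sum_{W\subseteq V}(\mu')^{|W|}\prod_{i\in W}\sigma(i)$, and then regrouping the product of spins by vertex, the summand becomes $\lambda^{|X|}(\mu')^{|W|}\prod_{v\in V}\sigma(v)^{d_X(v)+\mathbf{1}[v\in W]}$, where $d_X(v)$ is the degree of $v$ in the subgraph $(V,X)$. I would then interchange the $\sigma$-summation with the $(X,W)$-summation and use that $\sum_{s\in\{-1,1\}}s^k$ equals $2$ if $k$ is even and $0$ if $k$ is odd. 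Hence the $\sigma$-sum contributes $2^{|V|}$ exactly when $d_X(v)+\mathbf{1}[v\in W]$ is even for every $v$, and $0$ otherwise. For a fixed $X$ this parity condition pins $W$ down uniquely to $W=\mathrm{odd}(X)$, so the whole expression collapses to $2^{|V|}\sum_{X\subseteq E}\lambda^{|X|}(\mu')^{|\mathrm{odd}(X)|}$.

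It remains to clean up the two loose ends. By the handshake lemma $|\mathrm{odd}(X)|$ is always even, so $(\mu')^{|\mathrm{odd}(X)|}=|\mu'|^{|\mathrm{odd}(X)|}=\mu^{|\mathrm{odd}(X)|}$ with $\mu=|\tanh B|$; hence the inner sum is precisely $Z_{\mathrm{sub}}(G)=\sum_{X\subseteq E}\mu^{|\mathrm{odd}(X)|}\lambda^{|X|}$ as defined in the paper. (Equivalently, the substitution $\sigma\mapsto-\sigma$ leaves $Z_{\mathrm{Ising}}$ invariant under $B\mapsto-B$, so one may assume $B\ge 0$ and $\mu'=\mu$ from the start.) This establishes $Z_{\mathrm{Ising}}(G)=M_G\cdot Z_{\mathrm{sub}}(G)$ with $M_G=2^{|V|}(\cosh\beta)^{|E|}(\cosh B)^{|V|}$. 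With the corresponding edge/vertex parameters $a=e^{2\beta}$ and $b=e^{2B}$ one has $\cosh\beta=\frac{a+1}{2\sqrt a}$ and $\cosh B=\frac{b+1}{2\sqrt b}$, so $M_G$ is algebraic and is a fixed closed form in $|V|,|E|$ and the parameters, hence computable in polynomial time; moreover $M_G>0$, so the two models also share the same Gibbs measure. I would also record the consistency of the parametrizations: $\lambda=\tanh\beta=\frac{a-1}{a+1}$ and $\mu=|\tanh B|=\bigl|\frac{b-1}{b+1}\bigr|$, matching the definitions used earlier.

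I do not expect a real obstacle in this argument; the only points requiring care are the combinatorial bookkeeping in the double expansion — in particular verifying that the parity constraint forces $W=\mathrm{odd}(X)$ — and the minor sign discrepancy between $\mu'=\tanh B$ and $\mu=|\tanh B|$, which the handshake parity fact (or the $\sigma\mapsto-\sigma$ symmetry) resolves.
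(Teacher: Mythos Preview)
Your argument is correct and is exactly the classical high-temperature expansion of Jerrum and Sinclair; the bookkeeping (parity forcing $W=\mathrm{odd}(X)$, the handshake parity to pass from $\tanh B$ to $|\tanh B|$, and the explicit $M_G=2^{|V|}(\cosh\beta)^{|E|}(\cosh B)^{|V|}$) is all handled correctly. Note, however, that the paper does not give its own proof of this theorem: it simply quotes the result from~\cite{jerrum1993polynomial} and remarks afterwards that the transformation is an instance of a holographic transformation (a special case of Valiant's Holant theorem), so there is nothing further to compare against.
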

The transformation from the ferromagnetic Ising model to the subgraphs world model is actually a holographic transformation and the above theorem can be seen as a special case of Valiant's Holant theorem~\cite{valiant2008holographic,cai2011computational}.

Translating the conditions in Theorem \ref{thm-subgraph-world-decay} for subgraphs world back to the ferromagnetic Ising model, we have that $\Delta<\frac{(ab^2+a+2b)^2}{b(a+1)^2(b+1)^2}$ or equivalently $\Delta<\frac{(e^{2\beta+4B}+e^{2\beta}+2e^{2B})^2}{e^{2B}(e^{2\beta}+1)^2(e^{2B}+1)^2}$.

\subsection{Recursive coupling on spin systems}
Let $G(V,E)$ be an undirected graph, and $\Phi:[q]^2\to\mathbb{R}^+$ be a symmetric function of nonnegative values.
Consider the $q$-state spin system whose partition function is defined by
\begin{align*}
Z(G)
&=\sum_{\sigma\in[q]^V}w(\sigma)
\quad\mbox{ where }w(\sigma)=\prod_{\{u,v\}\in E}\Phi(\sigma(u),\sigma(v)).
\end{align*}
With this definition of weight, for any feasible $\sigma\in[q]^\Lambda$ for $\Lambda\subseteq V$, we can accordingly define the Gibbs measure $\mu^\sigma$ over $[q]^V$ and the marginal distribution at $\mu^\sigma_v$ at vertex $v$. 

\begin{definition}[Strong Spatial Mixing on Spin Systems]
 A spin system on a family of graphs has strong spatial mixing (SSM) if for any graph $G=(V,E)$ in the family, any $v\in V, \Lambda\subseteq V$ and any two feasible configurations $\sigma_\Lambda,\tau_\Lambda\in[q]^\Lambda$,
  \[
  \|\mu^{\sigma_\Lambda}_v-\mu^{\tau_\Lambda}_v\|_{\mathrm{TV}}\le\exp(-\Omega(\mathrm{dist}(v,\Delta))),
  \]
where $\Delta\subseteq\Lambda$ is the subset on which $\sigma_\Lambda$ and $\tau_\Lambda$ differ, and $\mathrm{dist}(v,S)$ is the shortest distance from $v$ to any vertex in $\Delta$.
\end{definition}
It is easy to verify that under this definition SSM of a spin system is equivalent to the SSM of the Holant problem on the bipartite incident graph $\mathcal{I}_G$ which simulates the original spin system. Thus to use the FPTAS which rely on the SSM for Holant problems, it is sufficient to prove SSM on the original spin system.


Let $R\subseteq V$, called a \concept{region}. Let $\vbond R=\{v\in V\setminus R\mid\exists uv\in E,u\in R\}$ be the vertex boundary of $R$ and $\ebond R=\{uv\in E\mid u\in R, v\not\in R\}$ be the edge boundary of $R$. Denote by $E(R,R)=\{uv\in E\mid u\in R,v\in R\}$ the set of internal edges in $R$.

An edge boundary configuration of $R$, or just \concept{boundary configuration} for short, is a $\sigma\in[q]^{\ebond R}$. For every boundary configuration $\sigma\in[q]^{\ebond R}$ and a configuration $\eta\in[q]^R$ of the region $R$, define the \concept{regional weight} as
\begin{align*}
w_R^\sigma(\eta)
&=
\prod_{uv\in E(R,R)}\Phi(\eta(u),\eta(v))\cdot
\prod_{\substack{e=uv\in\ebond R\\u\in R}}\Phi(\sigma(e),\eta(u)).
\end{align*}


We say that an edge boundary configuration $\sigma\in[q]^{\ebond R}$ is \concept{$R$-feasible} if there exists an $\eta\in[q]^R$ such that $w_R^\sigma(\eta)>0$. 
For $R$-feasible boundary configuration $\sigma\in[q]^{\ebond R}$, a \concept{regional Gibbs measure} $\bmu{R}{\sigma}$ over $[q]^R$ can be defined as that $\bmu{R}{\sigma}(\eta)=\frac{w_R^\sigma(\eta)}{\sum_{\pi\in[q]^R}w_R^\sigma(\pi)}$ for each $\eta\in[q]^R$.
For $v\in R$, let $\mbmu{R}{\sigma}{v}$ be the marginal distribution of $\bmu{R}{\sigma}$ at vertex $v$.


\begin{definition}
Let $R\subseteq V$, $v\in R$, and $\sigma,\tau\in[q]^{\ebond R}$ be two $R$-feasible boundary configurations. 
Let $\Psi(R,\sigma,\tau)$ be a coupling of $\bmu{R}{\sigma},\bmu{R}{\tau}$. Define the \concept{discrepancy} of $\Psi(R,\sigma,\tau)$ at vertex $v$ as
\[
\Disc{\Psi(R,\sigma,\tau)}{v}=\Pr_{(\eta,\eta')\sim\Psi(R,\sigma,\tau)}\left[\eta(v)\ne \eta'(v)\right].
\]
\end{definition}


\begin{lemma}\label{prop:bondtran}
Let $\Lambda\subset V$ and $\sigma,\tau\in[q]^{\Lambda}$ be two feasible vertex configurations differing on $\Delta\subseteq\Lambda$. Let $R=V\setminus\Lambda$, $v\in R$ and $\nabla=\{uv\in E\mid v\in R, u\in\Delta\}$ be the set of boundary edges of $R$ incident to vertices in $\Delta$.
There exist two $R$-feasible edge boundary configurations $\sigma',\tau'\in[q]^{\ebond R}$ differing only on edges in $\nabla$ such that
\begin{align*}
\left\|\mu_v^\sigma-\mu_v^\tau\right\|_{\mathrm{TV}}
&=
\left\|\mbmu{R}{\sigma'}{v}-\mbmu{R}{\tau'}{v}\right\|_{\mathrm{TV}}
\le
T(\sigma',\tau')
\cdot
\max_{\substack{\sigma_1,\sigma_2\in[q]^{\ebond R}\\\text{ differ on an }e\in\nabla}}\|\mbmu{R}{\sigma_1}{v}-\mbmu{R}{\sigma_2}{v}\|_{\mathrm{TV}}\\
&\le
T(\sigma',\tau')
\cdot
\max_{\substack{\sigma_1,\sigma_2\in[q]^{\ebond R}\\\text{ differ on an }e\in\nabla}}\Disc{\Psi(R,\sigma_1,\sigma_2)}{v},
\end{align*}
for arbitrary coupling $\Psi(R,\sigma_1,\sigma_2)$ of $\bmu{R}{\sigma_1},\bmu{R}{\sigma_2}$.
\end{lemma}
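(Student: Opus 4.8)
The plan is to mirror the proof of Lemma~\ref{lemma-holant-ssssm}, the Holant version of this statement, since the spin-system case is structurally identical once the conditioning is translated from vertices to boundary edges. First I would define $\sigma',\tau'\in[q]^{\ebond R}$ to be the edge boundary configurations \emph{induced} by $\sigma,\tau$: for each boundary edge $e=\{u,w\}\in\ebond R$ with $u\in R$ and $w\in\Lambda$, set $\sigma'(e)=\sigma(w)$ and $\tau'(e)=\tau(w)$. Since $\sigma$ and $\tau$ agree on $\Lambda\setminus\Delta$, the configurations $\sigma',\tau'$ agree on every boundary edge not incident to $\Delta$, hence differ only on edges in $\nabla$; and extending a feasible completion of $\sigma$ (resp.\ $\tau$) to all of $V$ and restricting it to $R$ shows that $\sigma'$ (resp.\ $\tau'$) is $R$-feasible. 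The key identity to establish is $\mu_v^\sigma=\mbmu{R}{\sigma'}{v}$ (and likewise for $\tau$). This follows by factoring the weight $w(\eta)$ of any $\eta\in[q]^V$ with $\eta\mid_\Lambda=\sigma$ according to which side of the cut each edge lies on: edges internal to $\Lambda$ contribute a factor depending only on $\sigma$, edges internal to $R$ contribute $\prod_{uv\in E(R,R)}\Phi(\eta(u),\eta(v))$, and a crossing edge $e=\{u,w\}$ with $u\in R$ contributes $\Phi(\eta(u),\sigma(w))=\Phi(\eta(u),\sigma'(e))$; thus $w(\eta)=c(\sigma)\cdot w_R^{\sigma'}(\eta\mid_R)$ for a constant $c(\sigma)$, so the conditional law of $\eta\mid_R$ under $\mu^\sigma$ is exactly $\bmu{R}{\sigma'}$, and marginalizing at $v$ gives $\mu_v^\sigma=\mbmu{R}{\sigma'}{v}$.

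Given this identity, the remaining chain of inequalities is routine. I would take a feasible path $\sigma'=\sigma_1,\sigma_2,\dots,\sigma_t=\tau'$ of length $t=T(\sigma',\tau')$ (if no such path exists then $T(\sigma',\tau')=\infty$ and the asserted bound is vacuous); by construction each $\sigma_i$ agrees with $\sigma'$ and $\tau'$ off $\nabla$, and consecutive $\sigma_i,\sigma_{i+1}$ differ at a single edge belonging to the subset of $\nabla$ on which $\sigma'$ and $\tau'$ disagree. The triangle inequality for total variation then yields $\|\mbmu{R}{\sigma'}{v}-\mbmu{R}{\tau'}{v}\|_{\mathrm{TV}}\le\sum_{i=1}^{t-1}\|\mbmu{R}{\sigma_i}{v}-\mbmu{R}{\sigma_{i+1}}{v}\|_{\mathrm{TV}}\le T(\sigma',\tau')\cdot\max\|\mbmu{R}{\sigma_1}{v}-\mbmu{R}{\sigma_2}{v}\|_{\mathrm{TV}}$, the maximum over $R$-feasible $\sigma_1,\sigma_2\in[q]^{\ebond R}$ differing on a single edge of $\nabla$, which is the middle inequality. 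For the last inequality, the coupling lemma gives $\|\mbmu{R}{\sigma_1}{v}-\mbmu{R}{\sigma_2}{v}\|_{\mathrm{TV}}\le\Pr_{(\eta,\eta')\sim\Psi(R,\sigma_1,\sigma_2)}[\eta(v)\ne\eta'(v)]=\Disc{\Psi(R,\sigma_1,\sigma_2)}{v}$ for any coupling $\Psi(R,\sigma_1,\sigma_2)$ of $\bmu{R}{\sigma_1},\bmu{R}{\sigma_2}$, completing the argument.

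The main obstacle is really the identity $\mu_v^\sigma=\mbmu{R}{\sigma'}{v}$: one must verify carefully that pinning the Gibbs measure on the \emph{vertex} set $\Lambda$ is equivalent to imposing the induced \emph{edge} boundary configuration $\sigma'$ on the region $R$, i.e.\ that the definition of $w_R^\sigma$ — which includes the internal edges of $R$ and the crossing edges in $\ebond R$ but not the edges internal to $\Lambda$ — captures exactly the $R$-dependent part of $w(\eta)$; this is precisely the weight factorization above. A secondary subtlety worth noting is that a single disagreeing vertex $u\in\Delta$ generally induces several disagreeing boundary edges, so the edge-by-edge feasible path passes through boundary configurations in $[q]^{\ebond R}$ that need not arise from any vertex configuration of $\Lambda$; this is harmless, because such configurations are still legitimate boundary configurations, and if some intermediate one fails to be $R$-feasible the path simply does not exist and the stated bound holds trivially with $T(\sigma',\tau')=\infty$.
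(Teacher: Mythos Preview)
Your proposal is correct and follows essentially the same approach as the paper: the paper defines $\sigma',\tau'$ exactly as you do (each boundary edge inherits the value of its endpoint in $\Lambda$) and then simply says ``the rest is the same as proof of Lemma~\ref{lemma-holant-ssssm},'' i.e.\ path coupling along a feasible path plus the coupling lemma. Your write-up is in fact more thorough than the paper's, since you spell out the weight-factorization argument for the identity $\mu_v^\sigma=\mbmu{R}{\sigma'}{v}$ that the paper leaves as ``easy to check.''
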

\begin{proof}
Let $\sigma',\tau'\in[q]^{\ebond R}$ be defined that the assigned value of each boundary edge is consistent with that of the incident boundary vertex in $\Lambda$ assigned by $\sigma,\tau$ respectively. The rest is the same as proof of Lemma \ref{lemma-holant-ssssm}.
\end{proof}
We describe the recursive coupling for spin systems introduced in~\cite{goldberg2005strong}.

\paragraph{The recursive coupling.} Let $R\subseteq V$ and $v_0\in R$. Let $\sigma,\tau\in[q]^{\ebond R}$ be any two $R$-feasible boundary configurations that differ on only one edge $uv\in\ebond R$ where $v\in R$ and $u\not\in R$.  Let $E_R(v)=\{wv\in E\mid w\in R\}$ be set of internal edges of $R$ incident to $v$.
Let $\Psi_v(R,\sigma,\tau)$ be a coupling of marginal distributions $\mbmu{R}{\sigma}{v},\mbmu{R}{\tau}{v}$ at vertex $v$. A coupling $\Psi(R,\sigma,\tau)$ of $\bmu{R}{\sigma},\bmu{R}{\tau}$ can be recursively constructed by the coupling rule $\Psi_v(R,\sigma,\tau)$ at the vertices $v$ incident to the only disagreeing edge: 

Let $(\eta,\eta')\in[q]^R\times[q]^R$ denote the pair of configurations sampled from $\Psi(R,\sigma,\tau)$.
\begin{enumerate}
\item (Base case) If $v=v_0$, sample $(\eta(v),\eta'(v))$ according to $\Psi_v(R,\sigma,\tau)$ and arbitrarily sample the rest of $(\eta,\eta')$  conditioning on $(\eta(v),\eta'(v))$ as long as $(\eta,\eta')$ is a faithful coupling of $\bmu{R}{\sigma},\bmu{R}{\tau}$.
If $v\ne v_0$ and $|E_R(v)|=0$, in which case $v$ and $v_0$ are disconnected in $G[R]$, sample $(\eta,\eta')$ such that $(\eta(v_0),\eta'(v_0))$ is perfectly coupled.
\item
(General case)
$v\ne v_0$ and $v$ has neighbors in $R$, i.e.~$|E_R(v)|>0$. 
Sample $(\eta(v),\eta'(v))$ according to $\Psi_v(R,\sigma,\tau)$. Denote $(x,y)=(\eta(v),\eta'(v))$. 
Construct new region and boundaries as: $R'=R\setminus\{v\}$; $\recolor{\sigma}{}{x}{}\in[q]^{\ebond R'}$ agrees with $\sigma$ on common edges and assigns $x$ to edges in $E_R(v)$; and $\recolor{\tau}{}{x}{y}\in[q]^{\ebond R'}$ agrees with $\tau$ on common edges and assigns $y$ to edges in $E_R(v)$. 
The rest of $(\eta,\eta')$ is sampled from a coupling $\Psi(x,y)$ of $\bmu{R'}{\recolor{\sigma}{}{x}},\bmu{R'}{\recolor{\tau}{}{y}}$. If $x=y$, $\recolor{\sigma}{}{x}=\recolor{\tau}{}{y}$ and $\Psi(x,y)$ is a perfect coupling. 
If $x\neq y$, let 
$\recolor{\sigma}{1}{(x,y)},\ldots,\recolor{\sigma}{t}{(x,y)}$ be a feasible path from $\recolor{\sigma}{}{x}$ to $\recolor{\tau}{}{y}$ of length $t=T\left(\recolor{\sigma}{}{x},\recolor{\tau}{}{y}\right)$.
Let $\Psi(x,y)$ be the composition of coupling $\Psi\left(R',\recolor{\sigma}{i}{(x,y)},\recolor{\sigma}{i+1}{(x,y)}\right), i=1,2,\ldots,t-1$, in the same manner as path coupling, where each $\Psi\left(R',\recolor{\sigma}{i}{(x,y)},\recolor{\sigma}{i+1}{(x,y)}\right)$ is recursively defined. It is easy to verify that $\Psi(x,y)$ is a coupling of $\bmu{R'}{\recolor{\sigma}{}{x}},\bmu{R'}{\recolor{\tau}{}{y}}$.
This complete the construction of $\Psi(R,\sigma,\tau)$.
\end{enumerate}

The following lemma is proved in~\cite{goldberg2005strong} for the recursive coupling constructed as above.
\begin{lemma}[Goldberg-Martin-Paterson~\cite{goldberg2005strong}]\label{lem:spincouple}
\begin{align*}
\Disc{\Psi(R,\sigma,\tau)}{v_0}
&\le
\sum_{\substack{x,y\in[q]\\x\neq y}}\Pr_{(\eta,\eta')\sim\Psi_v(R,\sigma,\tau)}[\eta(v)=x\wedge \eta'(v)=y]
\cdot\sum_{i=1}^{T\left(\recolor{\sigma}{}{x},\recolor{\tau}{}{y}\right)-1}\Disc{\Psi\left(R\setminus\{v\},\recolor{\sigma}{i}{(x,y)},\recolor{\sigma}{i+1}{(x,y)}\right)}{v_0}\\
&\le
\Disc{\Psi_v(R,\sigma,\tau)}{v}
\cdot\max_{\substack{\sigma_1,\sigma_2\in[q]^{\ebond(R\setminus\{v\})}\\\text{differ on }E_R(v)}}T\left(\sigma_1,\sigma_2\right)
\cdot\max_{\substack{\sigma_1,\sigma_2\in[q]^{\ebond (R\setminus\{v\})}\\\text{differ on an }e\in E_R(v)}}
\Disc{\Psi(R\setminus\{v\},\sigma_1,\sigma_2)}{v_0}.
\end{align*}
\end{lemma}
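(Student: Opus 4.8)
The plan is to read the estimate directly off the recursive construction of $\Psi(R,\sigma,\tau)$, in exact parallel with the proof of Lemma~\ref{lem:holantcouple}. Write $R'=R\setminus\{v\}$. It suffices to treat the general case $v\ne v_0$ with $|E_R(v)|>0$, which is where this lemma is invoked in the recursive unrolling; the two base cases — $v=v_0$, and $v$ in a different component of $G[R]$ than $v_0$ — are resolved by inspecting the coupling rule, where the pair at $v_0$ is either copied from $\Psi_v(R,\sigma,\tau)$ or perfectly coupled.

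First I would condition on the coupled pair $(x,y)=(\eta(v),\eta'(v))$ drawn from $\Psi_v(R,\sigma,\tau)$ and apply the law of total probability:
\[
\Disc{\Psi(R,\sigma,\tau)}{v_0}=\sum_{x,y\in[q]}\Pr_{(\eta,\eta')\sim\Psi_v(R,\sigma,\tau)}[\eta(v)=x\wedge\eta'(v)=y]\cdot\Disc{\Psi(x,y)}{v_0}.
\]
When $x=y$, the two recolored boundary configurations $\recolor{\sigma}{}{x}$ and $\recolor{\tau}{}{y}$ coincide — here one uses that $\sigma$ and $\tau$ agree on all of $\ebond R$ except the single edge $uv$, which is no longer a boundary edge of $R'$ since both its endpoints lie outside $R'$ — so $\Psi(x,y)$ is a perfect coupling and contributes nothing at $v_0$. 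When $x\ne y$, $\Psi(x,y)$ is by construction the path-coupling composition of the couplings $\Psi\bigl(R',\recolor{\sigma}{i}{(x,y)},\recolor{\sigma}{i+1}{(x,y)}\bigr)$ along a feasible path of length $t=T(\recolor{\sigma}{}{x},\recolor{\tau}{}{y})$, so the triangle-inequality bound for path coupling gives $\Disc{\Psi(x,y)}{v_0}\le\sum_{i=1}^{t-1}\Disc{\Psi(R',\recolor{\sigma}{i}{(x,y)},\recolor{\sigma}{i+1}{(x,y)})}{v_0}$. Substituting the two cases back gives the first displayed inequality.

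For the second inequality I would bound the three factors separately. Summing the probabilities over $x\ne y$ yields exactly $\Pr[\eta(v)\ne\eta'(v)]=\Disc{\Psi_v(R,\sigma,\tau)}{v}$. Next, since $\recolor{\sigma}{}{x}$ and $\recolor{\tau}{}{y}$ are $R'$-feasible (being recolorings of feasible boundary configurations) and differ precisely on $E_R(v)$, each inner sum has at most $\max T(\sigma_1,\sigma_2)$ terms, the maximum over $R'$-feasible boundary configurations differing on $E_R(v)$. Finally, by the definition of a feasible path, consecutive $\recolor{\sigma}{i}{(x,y)},\recolor{\sigma}{i+1}{(x,y)}$ differ on exactly one edge $e\in E_R(v)$, so each summand is at most $\max\Disc{\Psi(R',\sigma_1,\sigma_2)}{v_0}$ over such single-edge disagreements. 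Multiplying the three bounds gives the claim.

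The only non-routine point is the path-coupling step invoked above: one must check that composing the couplings $\Psi\bigl(R',\recolor{\sigma}{i}{(x,y)},\recolor{\sigma}{i+1}{(x,y)}\bigr)$ in sequence produces a legitimate coupling of the endpoint measures $\bmu{R'}{\recolor{\sigma}{}{x}}$ and $\bmu{R'}{\recolor{\tau}{}{y}}$, and that the discrepancy at a fixed vertex is sub-additive along the path; this in turn presupposes that a feasible path exists, i.e.\ $T(\recolor{\sigma}{}{x},\recolor{\tau}{}{y})<\infty$ (when it is infinite both the recursion and the bound are vacuous). Since the region strictly shrinks ($R\supsetneq R'$), the recursion is well-founded, so these compositions are well-defined, and the remaining verification is identical to the one already carried out in~\cite{goldberg2005strong} and mirrored in the proof of Lemma~\ref{lem:holantcouple}.
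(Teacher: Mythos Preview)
Your proof is correct and follows exactly the approach the paper intends: the paper does not give its own proof of this lemma but simply attributes it to~\cite{goldberg2005strong} and, for the analogous Holant version (Lemma~\ref{lem:holantcouple}), writes only that it ``follows directly from our construction of $\Psi(R,\sigma,\tau)$.'' You have faithfully unpacked that construction --- conditioning on $(x,y)$, using perfect coupling when $x=y$, invoking the path-coupling triangle inequality when $x\ne y$, and then crudely bounding the three factors --- which is precisely what the cited argument does.
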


\subsection{Ferromagnetic Potts Model}
The Potts model is that a $q$-state spin system defined as that $\Phi(x,y)=\lambda$ if $x=y$ and $\Phi(x,y)=1$ if otherwise. We also write $\lambda=e^\beta$ where $\beta$ is the \concept{inverse temperature}. 
The weight of a configuration $\sigma\in[q]^E$ is $w(\sigma)=\lambda^{\mathrm{mon}_\sigma(E)}$, where 
where $\mathrm{mon}_\sigma(E)$ is the number of monochromatic edges, i.e.~edges $uv\in E$ that $\sigma(u)=\sigma(v)$.
A Potts model is \concept{ferromagnetic} if $\beta>0$ or equivalently if $\lambda>1$.

\begin{theorem}
Let $\mathcal{G}$ be any family of graphs whose maximum degree is bounded by $\Delta$.
A ferromagnetic Potts model on graph family $\mathcal{G}$ has strong spatial mixing if $q-2>\left(\lambda-1\right)\left(\Delta-1\right)\lambda^\Delta$, or in terms of inverse temperature if $\beta<\frac{\ln\left(\frac{q-2}{\Delta-1}\right)}{\Delta+1}$.
\end{theorem}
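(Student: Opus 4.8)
\subsection*{Proof proposal}

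The plan is to instantiate the recursive-coupling machinery of the previous subsection for the ferromagnetic Potts interaction $\Phi(x,y)=\lambda\cdot\mathbf{1}[x=y]+\mathbf{1}[x\ne y]$ with $\lambda=e^\beta>1$, following exactly the template of the proof of Theorem~\ref{thm-subgraph-world-decay}. By Lemma~\ref{prop:bondtran} it suffices to bound $\Disc{\Psi(R,\sigma,\tau)}{v_0}$ for the coupling built recursively from a single-vertex coupling rule, where $R\subseteq V$, $v_0\in R$, and $\sigma,\tau\in[q]^{\ebond R}$ differ at exactly one boundary edge $uv$ with $v\in R$, $u\notin R$. By Lemma~\ref{lem:spincouple} this reduces to showing that the one-level factor $\Disc{\Psi_v(R,\sigma,\tau)}{v}\cdot\max T(\sigma_1,\sigma_2)$ is bounded by a fixed constant strictly below $1$, uniformly over all instances, where I take $\Psi_v$ to be the optimal coupling of the marginals $\mbmu{R}{\sigma}{v}$ and $\mbmu{R}{\tau}{v}$ — so $\Disc{\Psi_v(R,\sigma,\tau)}{v}=\|\mbmu{R}{\sigma}{v}-\mbmu{R}{\tau}{v}\|_{\mathrm{TV}}$ — and the maximum runs over $R$-feasible boundary configurations differing on $E_R(v)$.

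Two estimates are needed. Since $\Phi$ is everywhere positive, every boundary configuration is $R$-feasible and any two of them are joined by a feasible path flipping one disagreeing edge at a time, so $T(\sigma_1,\sigma_2)=|E_R(v)|\le\deg(v)-1\le\Delta-1$ because $v$ is incident to $uv$. The crux is the single-vertex total-variation bound. Let $\tilde{p}=(\tilde{p}_i)_{i\in[q]}$ denote the normalized weights the region attaches to the events $\eta(v)=i$ \emph{with the $\Phi$-factor of the disagreeing edge $uv$ deleted}; the two marginals at $v$ are obtained from $\tilde{p}$ by multiplicatively tilting color $a=\sigma(uv)$, resp.\ color $b=\tau(uv)$, by $\lambda$. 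A direct computation, assuming without loss of generality $\tilde{p}_a\le\tilde{p}_b$, gives
\[
\bigl\|\mbmu{R}{\sigma}{v}-\mbmu{R}{\tau}{v}\bigr\|_{\mathrm{TV}}
=\frac{\tilde{p}_b(\lambda-1)\bigl(1+\lambda\tilde{p}_a-\tilde{p}_b\bigr)}{\bigl(1+(\lambda-1)\tilde{p}_a\bigr)\bigl(1+(\lambda-1)\tilde{p}_b\bigr)}
\le(\lambda-1)\,\tilde{p}_b\le(\lambda-1)\max_{i}\tilde{p}_i ,
\]
where the first inequality uses $1+\lambda\tilde{p}_a-\tilde{p}_b\le1+(\lambda-1)\tilde{p}_a$ to cancel against the numerator and then bounds the remaining denominator below by $1$. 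Finally, $\max_i\tilde{p}_i$ is small precisely when $q$ is large: each of the at most $\Delta-1$ edges at $v$ other than $uv$ contributes a factor of at most $\lambda$ to every unnormalized $\tilde{p}_i$, while the normalization $\sum_j(\text{unnormalized }\tilde{p}_j)$ is a sum of $q$ nonnegative terms each at least $1$; comparing term by term gives $\tilde{p}_i\le\lambda^{\Delta-1}/q$ for all $i$.

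Putting these together, the one-level factor is at most $(\lambda-1)(\Delta-1)\lambda^{\Delta-1}/q$. Under the hypothesis $q-2>(\lambda-1)(\Delta-1)\lambda^{\Delta}=\lambda\cdot(\lambda-1)(\Delta-1)\lambda^{\Delta-1}$ this is strictly less than $1/\lambda<1$, a constant independent of the instance (the stated $\beta$-form $\beta<\frac{\ln((q-2)/(\Delta-1))}{\Delta+1}$ follows by bounding $\lambda-1<\lambda$ once more). Then, exactly as in the proof of Theorem~\ref{thm-subgraph-world-decay}, iterating the recursion of Lemma~\ref{lem:spincouple} through the $\mathrm{dist}(uv,v_0)$ steps it takes for the disagreement to reach $v_0$ yields $\Disc{\Psi(R,\sigma,\tau)}{v_0}\le\exp(-\Omega(\mathrm{dist}(v,v_0)))$; plugging this into Lemma~\ref{prop:bondtran}, whose path-coupling prefactor $T(\sigma',\tau')$ is polynomially bounded, gives strong spatial mixing.

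The main obstacle is the single-vertex estimate: isolating the effective field $\tilde{p}$ at $v$, carrying out the total-variation computation cleanly, and — most importantly — extracting $\tilde{p}_i\le\lambda^{\deg(v)-1}/q$ from the bounded-degree and $q$-color structure, since this is exactly where the competition between $q$ and $\lambda^{\Delta}$ surfaces. The rest (positivity hence feasibility of all configurations, $\max T\le\Delta-1$, and the iteration producing exponential decay) is routine given the already-established lemmas.
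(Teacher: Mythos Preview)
Your proposal is correct and follows the paper's template, but the single-vertex estimate is obtained by a genuinely different (and somewhat simpler) argument than the paper's.

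The paper bounds $\Disc{\Psi_v(R,\sigma,\tau)}{v}$ by the quantity $\nu(R,\sigma)=\frac{(\lambda-1)c_0}{(1+\lambda)c_0+c}$ (with $c_0=\max(c_a,c_b)$ and $c=\sum_{i\ne a,b}c_i$), then proves a monotonicity claim---$\nu(R,\sigma)\le\nu(R',\pi)$ for any subregion $R'\ni v$ via a convex-combination argument---to reduce to the trivial region $R'=\{v\}$, where the worst boundary is all one color and the bound becomes $\frac{(\lambda-1)\lambda^k}{q-2+(\lambda+1)\lambda^k}<\frac{1}{\Delta}$. You instead bound the total variation by $(\lambda-1)\max_i\tilde p_i$ and control $\max_i\tilde p_i\le\lambda^{\Delta-1}/q$ directly: factoring out, for each configuration of $R\setminus\{v\}$, the contribution of the $\le\Delta-1$ edges at $v$ (each between $1$ and $\lambda$) gives $c_i\le\lambda^{\Delta-1}c_j$ for every pair $i,j$, hence $\tilde p_i\le\lambda^{\Delta-1}/q$. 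Combined with your sharper $T\le\Delta-1$ (since $uv\notin E_R(v)$), this yields a contraction factor $(\lambda-1)(\Delta-1)\lambda^{\Delta-1}/q$, which under the hypothesis is strictly below $1/\lambda$. Your route avoids the separate monotonicity lemma at the cost of a slightly coarser first inequality, and in fact lands on a marginally stronger sufficient condition than the one stated. One stylistic note: your phrase ``a sum of $q$ nonnegative terms each at least $1$'' is easy to misread as a claim about the raw $c_j$; make explicit that the comparison is term-by-term over configurations of $R\setminus\{v\}$, with the edges at $v$ contributing a factor in $[1,\lambda^{\Delta-1}]$ to each term.
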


\begin{proof}
Let $R\subseteq V$ be a region and $\sigma,\tau\in[q]^{\ebond R}$ be two $R$-feasible boundary configurations differing on $e=uv\in\ebond R$ with $v\in R$ and $u\not\in R$. The regional weights $w_R^{\sigma},w_R^{\tau}$ and regional Gibbs measures $\bmu{R}{\sigma},\bmu{R}{\tau}$ can be defined accordingly.

Let $\Psi_v(R,\sigma,\tau)$ be the coupling of $\mbmu{R}{\sigma}{v},\mbmu{R}{\tau}{v}$ that $\Pr_{(\eta,\eta')\sim\Psi_v(R,\sigma,\tau)}[\eta(v)=\eta'(v)]$ is maximized, i.e.~the discrepancy $\Disc{\Psi_v(R,\sigma,\tau)}{v}$ at $v$ is minimized. We first give an upper bound on $\Disc{\Psi_v(R,\sigma,\tau)}{v}$.

Consider a boundary configuration $\sigma'$ that agrees with $\sigma$ on $\ebond R\setminus\{e\}$. Let $\sigma'(e)=q$, a free color not in $[q]$, and override the definition of function $\Phi$ such that $\Phi(i,q)=\Phi(q,i)=1$ for all $i\in[q]$. 
We let 
$c_i=\sum_{\substack{\eta\in[q]^R\\\eta(v)=i}}w_R^{\sigma'}(\eta)$.
Without loss of generality, assume $\sigma(e)=0, \tau(e)=1$ and $c_0\ge c_1$. Denote that $c=\sum_{i=2}^{q-1}c_i$. We have
\begin{align*}
  \sum_{\substack{\eta\in[q]^R\\\eta(v)=i}}\bmu{R}{\sigma}(\eta)&=
  \left\{\begin{array}{ll}
      \frac{\lambda c_0}{\lambda c_0+c_1+c}, & i=0;\\
      \frac{c_i}{\lambda c_0+c_1+c}, & 1\le i< q;
    \end{array}\right.
\quad\mbox{and}\quad
  \sum_{\substack{\eta\in[q]^R\\\eta(v)=i}}\bmu{R}{\tau}(\eta)&=
  \left\{
    \begin{array}{ll}
      \frac{\lambda c_1}{c_0+\lambda c_1+c}, & i=1;\\
      \frac{c_i}{c_0+\lambda c_1+c}, & i=0,2\le i< q.
    \end{array}\right.
\end{align*} 
For ferromagnetic Potts model,  $\lambda > 1$ and $\lambda c_0+c_1+c>c_0+\lambda c_1+c$, thus 
$\sum_{\substack{\eta\in[q]^R\\\eta(v)=i}}\bmu{R}{\sigma}(\eta)<\sum_{\substack{\eta\in[q]^R\\\eta(v)=i}}\bmu{R}{\tau}(\eta)$
for all $i \ne 0$. Therefore in the coupling $\Psi_v(R,\sigma,\tau)$, we have
\begin{align*}
  \Disc{\Phi_v(R,\sigma,\tau)}{v}
  &=\sum_{\substack{\eta\in[q]^R\\\eta(v)=0}}\bmu{R}{\sigma}(\eta)-\sum_{\substack{\eta\in[q]^R\\\eta(v)=0}}\bmu{R}{\tau}(\eta)
  =\frac{\lambda c_0}{\lambda c_0+c_1+c}-\frac{c_0}{c_0+\lambda c_1+c}\\
  &=\frac{c_0(\lambda-1)(c_1+\lambda c_1+c)}{(\lambda c_0+c_1+c)(c_0+\lambda c_1+c)}
  \le\frac{(\lambda-1)c_0}{c_0+\lambda c_0+c}.\\
\end{align*}
The value of the last term  $\frac{(\lambda-1)c_0}{c_0+\lambda c_0+c}$ depends on $R,\sigma$ and $e$. For fixed $e$, define $\nu(R,\sigma)=\frac{(\lambda-1)c_0}{c_0+\lambda c_0+c}$, where $c_0,c$ are defined by $R,\sigma$, and $e$ as above.

Let $R'\subseteq R$ be any region containing $v$. Let $\pi\in[q]^{\ebond R'}$ be an $R'$-feasible boundary configuration that agrees with $\sigma$ on common edges and maximizes $\nu(R',\pi)$. The next lemma is very similar to the one proved in~\cite{goldberg2006improved}.
\begin{claim}
  $\nu(R,\sigma)\le\nu(R',\pi)$.
\end{claim}
\begin{proof}
Let $\eta$ be an $R'$-feasible boundary configuration of $R'$ such that $\eta(e)=q$ the free color and $\eta$ agrees with $\sigma$ on all other common edges. We use $c_i^\eta$ to denote the sum of weight of configurations in $R'$ which assign spin $i$ to $v$ with boundary configuration $\eta$ and $c^\eta=\sum_{i=2}^{q-1}c_i^\eta$. Recall that $\nu(R,\sigma) =\frac{\lambda-1}{\lambda+1+c/c_0}$.
Then the claim follows from the fact that $c/c_0$ is a convex combination of $c^\eta/c_0^\eta$ over all such $\eta$.
\end{proof}

Consider $R'=\{v\}$, suppose $v$ has $k$ neighbours in $R\cup\vbond R$, where $0\le k<\Delta$ ($u$ is not taken into account as it is always assigned free color in the definition of $c_i$s). We want to find a boundary condition $\pi$ that maximize $\nu(R',\pi)$, which is equivalent to minimize $c/c_0$. 

Since $\lambda>1$, the ratio achieves its minimum when $\pi$ assigns all the $k$ incident edges of $v$ with spin $0$. In this case, $c_0=\lambda^k, c=\sum_{i=2}^{q-1}c_i=q-2$. Thus assuming that $q-2>\left(\lambda-1\right)\left(\Delta-1\right)\lambda^\Delta$, we have
\begin{align*}
  \Disc{\Phi_v(R,\sigma,\tau)}{v}
  &\le\nu(R,\sigma)
  \le\nu(R',\pi)
  \le \frac{(\lambda-1)\lambda^k}{q-2+(\lambda+1)\lambda^k}
  \le \frac{(\lambda-1)\lambda^\Delta}{q-2+(\lambda+1)\lambda^\Delta}
  < \frac{1}{\Delta}.
\end{align*}
Note that for the ferromagnetic Potts model, all configurations are feasible, and $|E_R(v)|\le\Delta$, thus 
\begin{align*}
  \max_{\substack{\sigma_1,\sigma_2\in[q]^{\ebond(R\setminus\{v\})}\\\text{differ on }E_R(v)}}T\left(\sigma_1,\sigma_2\right)=|E_R(v)|\le \Delta.
\end{align*}
Applying Lemma \ref{lem:spincouple} and Lemma  \ref{prop:bondtran}, we have the strong spatial mixing.


\end{proof}

\subsection{Algorithmic implications}
Both subgraphs world and ferromagnetic Potts model are Holant problems of regular constraint functions and both satisfy tractable search. Then by Theorem \ref{thm-FPTAS}, 
we have the following algorithmic results.
}
\begin{theorem}
Let $\mathcal{G}$ be the family of apex-minor-free graphs of maximum degree $\Delta$.
\begin{itemize}
\item
If $\Delta<\frac{(1+\lambda\mu^2)^2}{1-\mu^2}$,
there exists an FPTAS for subgraphs world of parameters $0<\mu,\lambda<1$ on graphs from $\mathcal{G}$.
\item
If $\Delta<\frac{(e^{2\beta+4B}+e^{2\beta}+2e^{2B})^2}{e^{2B}(e^{2\beta}+1)^2(e^{2B}+1)^2}$,
there exists an FPTAS for ferromagnetic Ising model of inverse temperature $\beta$ and external filed $B$ on graphs from $\mathcal{G}$.
\item
If $\beta<\frac{\ln\left(\frac{q-2}{\Delta-1}\right)}{\Delta+1}$,
there exists an FPTAS for $q$-state ferromagnetic Potts model of inverse temperature $\beta$ on graphs from $\mathcal{G}$.
\end{itemize}
\end{theorem}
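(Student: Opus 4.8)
The plan is to combine the three ingredients assembled in the preceding sections: (i) the fact that both the subgraphs world and the ferromagnetic Potts model are Holant problems whose constraint families are regular, (ii) the correlation-decay bounds established in Theorems~\ref{thm-subgraph-world-decay} and the Potts theorem above, and (iii) the general FPTAS transfer theorem, Theorem~\ref{thm-FPTAS}. So the proof is essentially a verification that each problem in the statement satisfies the hypotheses of Theorem~\ref{thm-FPTAS} on the apex-minor-free family $\mathcal{G}$ of maximum degree $\Delta$, under the stated numerical conditions on the parameters.

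First I would treat the subgraphs world. We already observed that the subgraphs world of parameters $(\mu,\lambda)$ is the Holant problem $\holant(\Omega')$ with $\Omega'=(G(V,E),\{f_v\})$ where $f_v=[f_0,\dots,f_{\deg(v)}]$ alternates between $\lambda^{k/2}$ and $\mu\lambda^{k/2}$; by the ``constant exceptions''/cyclic clauses of the Proposition on regular functions (equivalently, these are \regular{3} as noted), $\mathcal{F}$ is regular. Since $0<\mu,\lambda<1$ every constraint value is strictly positive, so every configuration is feasible and in particular tractable search is trivial (output any $\tau$ extending $\sigma_\Lambda$, e.g.~by assigning $0$ everywhere else). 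Theorem~\ref{thm-subgraph-world-decay} gives SSM under $\Delta<\frac{(1+\lambda\mu^2)^2}{1-\mu^2}$. Applying Theorem~\ref{thm-FPTAS} with this $\mathcal{G}$ and $\mathcal{F}$ yields the first FPTAS. For the ferromagnetic Ising model, I would invoke the Jerrum--Sinclair identity $Z_{\mathrm{Ising}}(G)=M_G\cdot Z_{\mathrm{sub}}(G)$ with $\lambda=\tanh\beta$, $\mu=\tanh B$ and $M_G$ computable in polynomial time; translating the degree bound through these substitutions gives exactly $\Delta<\frac{(e^{2\beta+4B}+e^{2\beta}+2e^{2B})^2}{e^{2B}(e^{2\beta}+1)^2(e^{2B}+1)^2}$, and multiplying the subgraphs-world FPTAS output by $M_G$ gives an FPTAS for $Z_{\mathrm{Ising}}$.

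For the ferromagnetic Potts model I would argue analogously: as a $q$-state spin system with $\Phi(x,y)=\lambda$ if $x=y$ and $1$ otherwise, it is represented as a Holant problem on the incidence graph $\mathcal{I}_G$ with generalized \textsc{Equality} functions on one side and the constant-arity symmetric function $\Phi$ on the other, so $\mathcal{F}$ is regular, and (by the earlier Proposition) apex-minor-freeness and the degree bound are preserved by passing to $\mathcal{I}_G$. Here $\lambda=e^\beta>1$ so again every configuration has positive weight and tractable search is trivial. The Potts theorem above gives SSM (on the spin system, hence on the induced Holant problem) under $q-2>(\lambda-1)(\Delta-1)\lambda^\Delta$, i.e.~$\beta<\frac{\ln((q-2)/(\Delta-1))}{\Delta+1}$; Theorem~\ref{thm-FPTAS} then delivers the FPTAS. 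The only genuinely delicate point — and the part I would write most carefully — is checking that SSM as proved for the \emph{spin system} (vertex boundaries, vertex marginals) really does imply SSM in the sense of Definition~\ref{def-SSM} for the Holant problem on $\mathcal{I}_G$ that Theorem~\ref{thm-FPTAS} consumes; this is the equivalence asserted right after the spin-system SSM definition, and the rest of the argument is bookkeeping. None of the steps requires new estimates beyond what the cited theorems already provide.
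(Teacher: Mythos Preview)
Your proposal is correct and follows essentially the same route as the paper: verify that the subgraphs world and Potts models are regular Holant problems with trivially tractable search (all weights are positive), invoke the respective SSM theorems under the stated parameter conditions, and apply Theorem~\ref{thm-FPTAS}; for Ising, use the Jerrum--Sinclair transformation to the subgraphs world rather than Theorem~\ref{thm-FPTAS} directly. The one point you flag as delicate---that spin-system SSM transfers to SSM of the Holant problem on $\mathcal{I}_G$---is exactly the check the paper asserts and leaves to the reader.
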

\ifabs{
The analysis of correlation decay and the proof of the theorem can be found in the full version of the paper in Appendix.
}{
The FPTAS for Ising model is not by applying Theorem \ref{thm-FPTAS} but due to the transformation between subgraphs world and Ising model.}

\subsubsection*{Acknowledgement.}
We would like to thank Jin-Yi Cai, Heng Guo, and Pinyan Lu for the in-depth discussions. Thank Alistair Sinclair and Leslie Valiant for their comments and interests.

\ifarxiv{

}{
\bibliographystyle{abbrv}
\bibliography{refs}

\begin{thebibliography}{10}

\bibitem{arnborg1991easy}
S.~Arnborg, J.~Lagergren, and D.~Seese.
\newblock Easy problems for tree-decomposable graphs.
\newblock {\em Journal of Algorithms}, 12(2):308--340, 1991.

\bibitem{arvind2002approximation}
V.~Arvind and V.~Raman.
\newblock Approximation algorithms for some parameterized counting problems.
\newblock {\em Algorithms and Computation}, pages 169--189, 2002.

\bibitem{baker1994approximation}
B.~Baker.
\newblock Approximation algorithms for np-complete problems on planar graphs.
\newblock {\em Journal of the ACM (JACM)}, 41(1):153--180, 1994.

\bibitem{bandyopadhyay2008counting}
A.~Bandyopadhyay and D.~Gamarnik.
\newblock Counting without sampling: Asymptotics of the log-partition function
  for certain statistical physics models.
\newblock {\em Random Structures \& Algorithms}, 33(4):452--479, 2008.

\bibitem{bayati2007simple}
M.~Bayati, D.~Gamarnik, D.~Katz, C.~Nair, and P.~Tetali.
\newblock Simple deterministic approximation algorithms for counting matchings.
\newblock In {\em Proceedings of the thirty-ninth annual ACM symposium on
  Theory of computing}, pages 122--127. ACM, 2007.

\bibitem{cai2010graph}
J.~Cai, X.~Chen, and P.~Lu.
\newblock Graph homomorphisms with complex values: A dichotomy theorem.
\newblock {\em Automata, Languages and Programming}, pages 275--286, 2010.

\bibitem{cai2012complete}
J.~Cai, H.~Guo, and T.~Williams.
\newblock A complete dichotomy rises from the capture of vanishing signatures.
\newblock {\em Arxiv preprint arXiv:1204.6445}, 2012.

\bibitem{caiholant}
J.~Cai, S.~Huang, and P.~Lu.
\newblock From holant to \# csp and back: Dichotomy for holant$^c$ problems.
\newblock {\em Algorithmica}, pages 1--23.

\bibitem{cai2011holographic}
J.~Cai and P.~Lu.
\newblock Holographic algorithms: from art to science.
\newblock {\em Journal of Computer and System Sciences}, 77(1):41--61, 2011.

\bibitem{cai2008holographic}
J.~Cai, P.~Lu, and M.~Xia.
\newblock Holographic algorithms by fibonacci gates and holographic reductions
  for hardness.
\newblock In {\em 2008 49th Annual IEEE Symposium on Foundations of Computer
  Science}, pages 644--653. IEEE, 2008.

\bibitem{cai2009holant}
J.~Cai, P.~Lu, and M.~Xia.
\newblock Holant problems and counting csp.
\newblock In {\em Proceedings of the 41st annual ACM symposium on Theory of
  computing}, pages 715--724. ACM, 2009.

\bibitem{cai2010holographic}
J.~Cai, P.~Lu, and M.~Xia.
\newblock Holographic algorithms with matchgates capture precisely tractable
  planar\_\# csp.
\newblock In {\em 2010 IEEE 51st Annual Symposium on Foundations of Computer
  Science}, pages 427--436. IEEE, 2010.

\bibitem{cai2011computational}
J.~Cai, P.~Lu, and M.~Xia.
\newblock Computational complexity of holant problems.
\newblock {\em SIAM Journal on Computing}, 40(4):1101, 2011.

\bibitem{cai2011dichotomy}
J.~Cai, P.~Lu, and M.~Xia.
\newblock Dichotomy for holant$^*$ problems of boolean domain.
\newblock In {\em Proceedings of the Twenty-Second Annual ACM-SIAM Symposium on
  Discrete Algorithms}, pages 1714--1728. SIAM, 2011.

\bibitem{cai2011holo}
J.~Cai, P.~Lu, and M.~Xia.
\newblock Holographic algorithms by fibonacci gates.
\newblock {\em Linear Algebra and its Applications}, 2011.

\bibitem{Cai:2012:CCC:2213977.2214059}
J.-Y. Cai and X.~Chen.
\newblock Complexity of counting csp with complex weights.
\newblock In {\em Proceedings of the 44th symposium on Theory of Computing},
  STOC '12, pages 909--920, New York, NY, USA, 2012. ACM.

\bibitem{chandrasekaran2008complexity}
V.~Chandrasekaran, N.~Srebro, and P.~Harsha.
\newblock Complexity of inference in graphical models.
\newblock In {\em UAI}, volume~8, pages 70--78, 2008.

\bibitem{chen2011guest}
X.~Chen.
\newblock Guest column: complexity dichotomies of counting problems.
\newblock {\em ACM SIGACT News}, 42(4):54--76, 2011.

\bibitem{courcelle2001fixed}
B.~Courcelle, J.~Makowsky, and U.~Rotics.
\newblock On the fixed parameter complexity of graph enumeration problems
  definable in monadic second-order logic.
\newblock {\em Discrete Applied Mathematics}, 108(1):23--52, 2001.

\bibitem{demaine2004equivalence}
E.~Demaine and M.~Hajiaghayi.
\newblock Equivalence of local treewidth and linear local treewidth and its
  algorithmic applications.
\newblock In {\em Proceedings of the fifteenth annual ACM-SIAM symposium on
  Discrete algorithms}, pages 840--849. Society for Industrial and Applied
  Mathematics, 2004.

\bibitem{demaine2009approximation}
E.~Demaine, M.~Hajiaghayi, and K.~Kawarabayashi.
\newblock Approximation algorithms via structural results for apex-minor-free
  graphs.
\newblock In {\em Proceedings of the 36th International Colloquium on Automata,
  Languages and Programming: Part I}, pages 316--327. Springer-Verlag, 2009.

\bibitem{diestel2005graph}
R.~Diestel.
\newblock Graph theory. 2005.
\newblock {\em Grad. Texts in Math}, 2005.

\bibitem{IS_DFJ02}
M.~E. Dyer, A.~M. Frieze, and M.~Jerrum.
\newblock On counting independent sets in sparse graphs.
\newblock {\em SIAM Jounal on Computing}, 31(5):1527--1541, 2002.

\bibitem{eppstein1995subgraph}
D.~Eppstein.
\newblock Subgraph isomorphism in planar graphs and related problems.
\newblock In {\em Proceedings of the sixth annual ACM-SIAM symposium on
  Discrete algorithms}, pages 632--640. Society for Industrial and Applied
  Mathematics, 1995.

\bibitem{fisher1961statistical}
M.~Fisher.
\newblock Statistical mechanics of dimers on a plane lattice.
\newblock {\em Physical Review}, 124(6):1664, 1961.

\bibitem{fisher1966dimer}
M.~Fisher.
\newblock On the dimer solution of planar ising models.
\newblock {\em Journal of Mathematical Physics}, 7:1776, 1966.

\bibitem{flum2001fixed}
J.~Flum and M.~Grohe.
\newblock Fixed-parameter tractability, definability, and model-checking.
\newblock {\em SIAM Journal on Computing}, 31:113--145, 2001.

\bibitem{flum2004parameterized}
J.~Flum and M.~Grohe.
\newblock The parameterized complexity of counting problems.
\newblock {\em SIAM Journal on Computing}, 33(4):892--922, 2004.

\bibitem{flum2006parameterized}
J.~Flum and M.~Grohe.
\newblock {\em Parameterized complexity theory}.
\newblock Springer-Verlag New York Inc, 2006.

\bibitem{frick2004generalized}
M.~Frick.
\newblock Generalized model-checking over locally tree-decomposable classes.
\newblock {\em Theory of Computing Systems}, 37(1):157--191, 2004.

\bibitem{galanis2011improved}
A.~Galanis, Q.~Ge, D.~{\v{S}}tefankovi{\v{c}}, E.~Vigoda, and L.~Yang.
\newblock Improved inapproximability results for counting independent sets in
  the hard-core model.
\newblock {\em Approximation, Randomization, and Combinatorial Optimization.
  Algorithms and Techniques}, pages 567--578, 2011.

\bibitem{galanis2012inapproximability}
A.~Galanis, D.~Stefankovic, and E.~Vigoda.
\newblock Inapproximability of the partition function for the antiferromagnetic
  ising and hard-core models.
\newblock {\em Arxiv preprint arXiv:1203.2226}, 2012.

\bibitem{gamarnik2007correlation}
D.~Gamarnik and D.~Katz.
\newblock Correlation decay and deterministic fptas for counting list-colorings
  of a graph.
\newblock In {\em Proceedings of the eighteenth annual ACM-SIAM symposium on
  Discrete algorithms}, pages 1245--1254. Society for Industrial and Applied
  Mathematics, 2007.

\bibitem{goldberg2006improved}
L.~Goldberg, M.~Jalsenius, R.~Martin, and M.~Paterson.
\newblock Improved mixing bounds for the anti-ferromagnetic potts model on z2.
\newblock {\em LMS Journal of Computation and Mathematics}, 9(1):1--20, 2006.

\bibitem{goldberg2005strong}
L.~Goldberg, R.~Martin, and M.~Paterson.
\newblock Strong spatial mixing with fewer colors for lattice graphs.
\newblock {\em SIAM Journal on Computing}, 35(2):486, 2005.

\bibitem{jerrum1993polynomial}
M.~Jerrum and A.~Sinclair.
\newblock Polynomial-time approximation algorithms for the ising model.
\newblock {\em SIAM Journal on computing}, 22:1087, 1993.

\bibitem{jerrum1986random}
M.~Jerrum, L.~Valiant, and V.~Vazirani.
\newblock Random generation of combinatorial structures from a uniform
  distribution.
\newblock {\em Theoretical Computer Science}, 43:169--188, 1986.

\bibitem{kasteleyn1961statistics}
P.~Kasteleyn.
\newblock The statistics of dimers on a lattice.
\newblock {\em Physica}, 27(12):1209--1225, 1961.

\bibitem{kasteleyn1963dimer}
P.~Kasteleyn.
\newblock Dimer statistics and phase transitions.
\newblock {\em Journal of Mathematical Physics}, 4(2):287, 1963.

\bibitem{kowalczyk2010dichotomy}
M.~Kowalczyk.
\newblock {\em Dichotomy theorems for holant problems}.
\newblock PhD thesis, UNIVERSITY OF WISCONSIN, 2010.

\bibitem{li2011correlation}
L.~Li, P.~Lu, and Y.~Yin.
\newblock Correlation decay up to uniqueness in spin systems.
\newblock {\em Arxiv preprint arXiv:1111.7064}, 2011.

\bibitem{li2012approximate}
L.~Li, P.~Lu, and Y.~Yin.
\newblock Approximate counting via correlation decay in spin systems.
\newblock In {\em Proceedings of the Twenty-Third Annual ACM-SIAM Symposium on
  Discrete Algorithms}, pages 922--940. SIAM, 2012.

\bibitem{luhuang12}
P.~Lu and S.~Huang.
\newblock A dichotomy for real weighted holant problems.
\newblock 2012.

\bibitem{martinelli2003ising}
F.~Martinelli, A.~Sinclair, and D.~Weitz.
\newblock The ising model on trees: Boundary conditions and mixing time.
\newblock In {\em Proceedings of the 44th Annual IEEE Symposium on Foundations
  of Computer Science}, page 628. IEEE Computer Society, 2003.

\bibitem{martinelli2007fast}
F.~Martinelli, A.~Sinclair, and D.~Weitz.
\newblock Fast mixing for independent sets, colorings, and other models on
  trees.
\newblock {\em Random Structures \& Algorithms}, 31(2):134--172, 2007.

\bibitem{mccartin2002parameterized}
C.~McCartin.
\newblock Parameterized counting problems.
\newblock {\em Mathematical Foundations of Computer Science 2002}, pages
  556--567, 2002.

\bibitem{inapp_MWW09}
E.~Mossel, D.~Weitz, and N.~Wormald.
\newblock On the hardness of sampling independent sets beyond the tree
  threshold.
\newblock {\em Probability Theory and Related Fields}, 143:401--439, 2009.

\bibitem{restrepo2011improved}
R.~Restrepo, J.~Shin, P.~Tetali, E.~Vigoda, and L.~Yang.
\newblock Improved mixing condition on the grid for counting and sampling
  independent sets.
\newblock In {\em 2011 IEEE 52nd Annual Symposium on Foundations of Computer
  Science}, pages 140--149. IEEE, 2011.

\bibitem{robertson1995graph}
N.~Robertson and P.~Seymour.
\newblock Graph minors. xiii. the disjoint paths problem.
\newblock {\em Journal of Combinatorial Theory, Series B}, 63(1):65--110, 1995.

\bibitem{sinclair2012approximation}
A.~Sinclair, P.~Srivastava, and M.~Thurley.
\newblock Approximation algorithms for two-state anti-ferromagnetic spin
  systems on bounded degree graphs.
\newblock In {\em Proceedings of the Twenty-Third Annual ACM-SIAM Symposium on
  Discrete Algorithms}, pages 941--953. SIAM, 2012.

\bibitem{sly2008uniqueness}
A.~Sly.
\newblock Uniqueness thresholds on trees versus graphs.
\newblock {\em The Annals of Applied Probability}, 18(5):1897--1909, 2008.

\bibitem{sly2010computational}
A.~Sly.
\newblock Computational transition at the uniqueness threshold.
\newblock In {\em 2010 IEEE 51st Annual Symposium on Foundations of Computer
  Science}, pages 287--296. IEEE, 2010.

\bibitem{sly2012computational}
A.~Sly and N.~Sun.
\newblock The computational hardness of counting in two-spin models on
  d-regular graphs.
\newblock {\em Arxiv preprint arXiv:1203.2602}, 2012.

\bibitem{temperley1961dimer}
H.~Temperley and M.~Fisher.
\newblock Dimer problem in statistical mechanics-an exact result.
\newblock {\em Philosophical Magazine}, 6(68):1061--1063, 1961.

\bibitem{valiant2006accidental}
L.~Valiant.
\newblock Accidental algorthims.
\newblock In {\em Proceedings of the 47th Annual IEEE Symposium on Foundations
  of Computer Science}, pages 509--517. IEEE Computer Society, 2006.

\bibitem{valiant2008holographic}
L.~Valiant.
\newblock Holographic algorithms.
\newblock {\em SIAM Journal on Computing}, 37(5):1565--1594, 2008.

\bibitem{weitz2006counting}
D.~Weitz.
\newblock Counting independent sets up to the tree threshold.
\newblock In {\em Proceedings of the thirty-eighth annual ACM symposium on
  Theory of computing}, pages 140--149. ACM, 2006.

\end{thebibliography}
}


\end{document}